\documentclass[10pt,journal,final]{IEEEtran}

\usepackage{amsthm}
\usepackage{amsmath}
\usepackage{amssymb}
\usepackage{graphicx}
\usepackage{mathtools}  
\mathtoolsset{showonlyrefs} 

\usepackage{pgf}
\usepackage{pgfplots,tikz,}
\pgfplotsset{compat=newest}
\usetikzlibrary{plotmarks,automata}
\usetikzlibrary{arrows.meta}
\usepgfplotslibrary{patchplots}
\usepackage[font=footnotesize]{caption}
\usepackage{subcaption}

\usetikzlibrary{intersections,calc,patterns,angles,quotes}
\usetikzlibrary{decorations.shapes,arrows,calc,decorations.markings,shapes,decorations.pathreplacing,shapes.arrows,arrows.meta	}
\usepackage{tikz-3dplot}

\usepackage{algorithm}
\usepackage[noend]{algpseudocode}

\renewcommand{\algorithmicrequire}{\textbf{Input: }}
\renewcommand{\algorithmicensure}{\textbf{Output: }}
\newcommand{\R}{\mathbb R}
\newcommand{\C}{\mathbb C}

\newcommand{\G}{\mathcal G}
\newcommand{\V}{\mathcal V}
\newcommand{\E}{\mathcal E}
\newcommand{\nH}{\mathcal H}
\newcommand{\nS}{\mathcal S}
\newcommand{\bB}{\mathbb B}
\DeclareMathOperator*{\re}{Re}
\DeclareMathOperator*{\opt}{opt}
\newcommand{\tol}{\operatorname{tol}}

\DeclareMathOperator*{\init}{init}

\theoremstyle{plain}
\newtheorem{thm}{Theorem}[section]
\newtheorem{prop}[thm]{Proposition}
\newtheorem{lem}[thm]{Lemma}

\theoremstyle{remark}
\newtheorem{rem}[thm]{Remark}
\theoremstyle{definition}

\newtheorem{assmpt}{Assumption}

\newcommand{\longthmtitle}[1]{\mbox{}{\textit{(#1).}}}
\newcommand{\until}[1]{\{1,\dots,#1\}}

\newcommand{\oprocendsymbol}{\hbox{$\blacksquare$}}
\newcommand{\oprocend}{\relax\ifmmode\else\unskip\hfill\fi\oprocendsymbol}

\allowdisplaybreaks


\title{Iterative Algorithms for Assessing
  Network Resilience Against Structured Perturbations}

\author{Shenyu~Liu \quad Sonia~Mart{\'\i}nez \quad Jorge~Cort\'es
  \thanks{The authors are with the Department of Aerospace and
    Mechanical Engineering, UC San Diego, {\tt \{shl055,soniamd,
      cortes\}@ucsd.edu}. This work was partially supported by AFOSR
    Award FA9550-19-1-0235.}}

\begin{document}

\maketitle

\begin{abstract}
  This paper studies network resilience against structured additive
  perturbations to its topology. We consider dynamic networks modeled
  as linear time-invariant systems subject to perturbations of bounded
  energy satisfying specific sparsity and entry-wise constraints.
  Given an energy level, the structured pseudospectral abscissa
  captures the worst-possible perturbation an adversary could employ
  to de-stabilize the network, and the structured stability radius is
  the maximum energy in the structured perturbation that the network
  can withstand without becoming unstable.  Building on a novel
  characterization of the worst-case structured perturbation, we
  propose iterative algorithms that efficiently compute the structured
  pseudospectral abscissa and structured stability radius.  We provide
  theoretical guarantees of the local convergence of the algorithms
  and illustrate their efficacy and accuracy on several network
  examples.
\end{abstract}

\section{Introduction}

The resilience of dynamic networks against perturbations and attacks
is key across engineering, scientific, and military domains, including
the operation of cyberphysical infrastructure, the distributed control
of autonomous robots, and time-critical missions.  Despite important
advances in designing distributed coordination, cooperation, and
decision-making algorithms, dynamic networks have proven fragile to
targeted attacks, as local and well-orchestrated actions have rapidly
cascaded into network-wide destructive perturbations.  Because of
this, it is critical to develop techniques and notions that
characterize network resilience and allow us to understand strengths
and vulnerabilities against adversaries and unforeseen failures.
However, obtaining such characterizations is difficult because
resilience is a complex function of the operator's and adversary's
capabilities, knowledge, and resources, the topology of the network,
and the physical limitations on remedial and adversarial actions.
Motivated by these observations, this paper studies the relationship
between network resilience and structured topological perturbations,
with the ultimate goal of enabling the deployment of dynamic networks
with quantifiable resilience guarantees.

\subsubsection*{Literature review}
Network resilience, understood as the ability of the system to carry
out its goals under unexpected failures or malfunctions in its
components, is a rich research area. Multiple layers of network
activity are involved in ensuring resilience (e.g., detection of
failures or attacks, secure communications, injection of false data or
actuation signals), which naturally gets reflected in the variety of
disciplines employed in its study, e.g., computer
security~\cite{CPP-LSP:03,JK-YL:08},
communications~\cite{CK-RP-MS:02}, control~\cite{RP-PMF-RC:89}, and
signal processing~\cite{MB-IVN:93}.  In the context of distributed
control of network systems, the literature has studied diverse topics
including algorithms for function
computation~\cite{SS-CH:11,FP-AB-FB:12} and its robustness against
malicious behavior~\cite{HZ-EF-SS:15}, resilient estimation and
control~\cite{MZ-SM:14,HF-PT-SD:14,QU-DF-YHC-CJT:18}, attack
detection~\cite{FP-FD-FB:13}, and the scaling of robustness with
network size~\cite{TS-MR-MAD:19}, to name a few.


Here, we model the network dynamics using a linear time-invariant
system and analyze its stability properties in the face of additive
perturbations to the entries of the system matrix. This connected our
work with the study of matrix pseudospectra in linear
algebra~\cite{LT-ME:05} and, in particular, the sign of the
pseudospectral abscissa (the real part of the rightmost eigenvalue in
the pseudospectrum).  The work~\cite{NG-CL:13} shows that the
pseudospectral abscissa is associated with a low-rank perturbation to
the matrix.  The works~\cite{JB-AL-MO:03b} and~\cite{DL-BV:17} propose
criss-cross algorithms to numerically compute the value of the complex
and real, respectively, pseudospectral abscissa.  Both algorithms rely
on the method in~\cite{RB:88} to compute the distance to instability,
which is impractical for large-scale systems.  Instead, iterative
algorithms~\cite{NG-MO:11,MR:15} quickly approximate complex or real
pseudospectral abscissa of large matrices with sparse
structures. However, due to their gradient-based nature, these
algorithms possess local convergence properties and are only
guaranteed to yield good approximations of the pseudospectral abscissa
when the magnitude of the system perturbation is sufficiently small.
Another closely-related concept is that of stability radius, which is
the critical value of the magnitude of the perturbation that makes the
pseudospectral abscissa become 0. While the works \cite{NG-MO:11,MR:15,NG:16} propose effecient iterative algorithms for approximating stability radii of sparse matrices, 
the work~\cite{DH-AP:86} provides a formula for directly computing the stability radius
when the perturbation is an arbitrary complex matrix and its magnitude
is measured by its induced 2-norm.  A similar stability radius formula
is given in~\cite{LQ-BB-AR-ED-PY-JD:95} when the perturbation is a
real matrix.

In practice, perturbations to the system matrix might be constrained
by physical modeling, specific cyber vulnerabilities, or sparsity
patterns.  The study of the pseudospectral abcissa and the stability
radius when perturbations are structured is much more limited.  The
fact that structured pseudospectra are closely tied to structured
singular values~\cite{AP-JD:93}, which can be NP-hard to
compute~\cite{SP-JR:93}, explains why no general formula exists for
representing structured pseudospectra.  The
works~\cite{SR:06,MK-EK-DK:10,PB-NG-SN:12} study structured
pseudospectra for specific classes of perturbation matrices such as
Toeplitz, symmetric, Hankel, or circulant. ~\cite{SJ-MW-MZ-RD:18}
studies the problem of characterizing the smallest additive matrix
perturbation to an LTI system so that it loses controllability,
observability, or stability.  The work proposes an algorithm to obtain
a locally optimal perturbation based on the identification of
necessary conditions which, given the problem generality, are
expressed implicitly in terms of abstract linear maps.  Closely
related to our work,~\cite{VK-FP:20} considers bounded-energy
perturbations with sparse structure and studies the structured
stability radius.  The treatment relaxes the sparsity constraints by
incorporating them into a cost function and relies on increasingly
large penalties to satisfy them with increasing accuracy. This
increasing accuracy comes at the expense of reducing the convergence
speed of the proposed gradient-based algorithm, which makes it not
well suited for large-scale networks.  In contrast, our approach here
presents a general characterization of the worst-case structured
perturbation that includes the possibility of element-wise
constraints. This serves as the basis for our design of efficient
iterative algorithms to compute both the structured pseudospectral
abscissa and stability radius that are able to deal with large-scale
systems.


\subsubsection*{Statement of contributions}

We model the network as a linear-time invariant system and study the
effect on stability of additive perturbations to the system matrix of
bounded energy and subject to sparsity and element-wise saturation
constraints.  Our contributions address the questions of whether an
adversary can destabilize the network by employing such perturbations
and characterizing the maximum amount of energy in the perturbation
that the network can withstand without becoming unstable.  Our first
contribution is a novel necessary condition prescribing that the
worst-case structured perturbation to the network must solve an
implicit optimization problem. The implicitness arises because of the
dependence on the right and left eigenvectors associated to the
structured pseudospectral abscissa and, if they were known instead,
the optimization would become explicit and convex. We provide a
complete description of the solution to the explicit optimization,
show it is Lipschitz with respect to the problem parameters, and
provide an incremental method to compute it.  The observation about
the implicit character of the optimization problem is the basis for
our second contribution, which is an iterative algorithm alternating
between finding the right and left eigenvectors given an estimate of
the worst-case structured perturbation and solving the corresponding
explicit optimization problem to refine said estimate.  We show that
the proposed algorithm is guaranteed to converge to the structured
pseudospectral abscissa at a linear rate for sufficiently close
initial conditions. Our final contribution concerns the structured
stability radius, and builds on the fact that this radius corresponds
to the zero-crossing of the structured pseudospectral abscissa when
viewed as a function of the perturbation energy. We establish the
locally Lipschitzness of this function, explicitly compute its
gradient at the points of differentiability, and employ Newton's
method to design an iterative algorithm that provably approximates the
structured stability radius.  We illustrate in simulation the
efficiency and accuracy of the proposed algorithms on several network
examples, including a class of stable large-scale systems.

\section{Preliminaries}\label{sec:prelim}
Here, we introduce the notation and basic notions from linear algebra
used in the paper.

\subsubsection*{Notation}
For any vector $x\in\C^n$ or matrix $A\in\C^{n\times m}$, let $x^*$,
$A^*$ be their conjugate transpose. Let $|\cdot|$ be the 2-norm of
vectors in $\C^n$, that is, $|x|:=\sqrt{x^*x}$. In addition, let
$\Vert\cdot\Vert_2$, $\Vert\cdot\Vert_F$ be the induced 2-norm and
Frobenius norm, respectively, on $\C^{n\times n}$.
We let $\bB_r:=\{\Delta\in\C^{n\times n}:\Vert\Delta\Vert_F\leq r\}$
denote the closed ball of radius $r$ in $\C^{n\times n}$.
We say vectors $x, y \in \C^n$ are \emph{RP-compatible},
cf.~\cite{NG-MO:11}, if $|x| = |y| = 1$ and $x^*y$ is real and
positive. Given any $x$, $y$ with $x^*y \neq 0$, one can obtain a pair
of RP-compatible vectors $\tilde x$, $\tilde y$ by scaling $x$ and $y$
as follows
\[
\tilde x = \frac{x}{|x|}, \quad\tilde y=\frac{y^*x}{|y^*x|}
\frac{y}{|y|}.
\]
The inner product $\langle\cdot,\cdot\rangle:\C^{n\times n}\times
\C^{n\times n}\to \C$ of matrices $A=[a_{ij}]$ and $B=[b_{ij}]$ is $
\langle A, B\rangle: = \textnormal{Tr}(A^*B) =
\sum_{i=1}^n\sum_{j=1}^na^*_{ij}b_{ij}$. Note that $\langle A,
A\rangle=\Vert A\Vert_F^2$. In addition, for any
$x,y\in\C^n,M\in\R^{n\times n}$,
\begin{equation}\label{identity:Frobenius_inner_product}
  \re(x^*My)=\langle M, \re (xy^*)\rangle.
\end{equation}
The \emph{group inverse} of $A$, denoted $A^\#$, is the unique matrix
satisfying $AA^\# = A^\#A$, $A^\#A A^\# = A^\#$, and $A A^\#A = A$,
cf.~\cite{CM-GS:88}. The group inverse is different from the
Moore--Penrose pseudoinverse.
For a function $f:\R \to \R$, define the right derivative of $f$ at $x$ to be
\[
\partial_+f(x):=\lim_{\delta\to
      0^+}\frac{f(x+\delta)-f(x)}{\delta}.
\]
For functions $f,g:\R_{\geq 0} \to\R$, we denote $f(t)=O(g(t))$ if
there exists $k,\delta>0$ such that $|f(t)|\leq kg(t)$ for $t<\delta$.

\subsubsection*{Spectral abscissa and stability radius}
We denote by $\Lambda(A)$ the spectrum (i.e., set of eigenvalues) of a
square matrix $A$. The spectral abscissa of~$A$ is
\begin{equation}\label{def:abscissa}
  \alpha(A):=\max_{\lambda\in\Lambda(A)}\re\lambda.
\end{equation}
We refer to a maximizer $\lambda_{\opt}$ of this function as a
rightmost eigenvalue of $\Lambda(A)$.  For $\epsilon>0$ and a closed
set $\nH\subseteq \C^{n\times n}$, the \emph{structured $\epsilon$-
  pseudospectrum} of $A$ (with respect to the Frobenius norm) is
\begin{equation}
  \Lambda_{\epsilon,\nH}(A):=\{\lambda\in\C:\lambda\in\Lambda(A+\Delta)\mbox{
    for }\Delta\in\nH\cap\bB_\epsilon\}. 
\end{equation}
Note that when $\nH=\C^{n\times n}$ or $\nH=\R^{n\times n}$,
$\Lambda_{\epsilon,\nH}$ reduces to the usual
$\epsilon$-pseudospectrum~\cite{NG-MO:11} or real
$\epsilon$-pseudospectrum~\cite{NG-CL:13}, respectively. Similar to
the spectral abscissa, we also define $\alpha_{\epsilon,\nH}(A)$ as
the \emph{structured $\epsilon$-pseudospectral abscissa} of~$A$,
\begin{equation}\label{def:structured_pseudospectral_abscissa}
  \alpha_{\epsilon,\nH}(A):=\max_{\lambda\in \Lambda_{\epsilon,\nH}(A)}\re\lambda.
\end{equation}
We refer to a maximizer $\lambda_{\opt}$ of this function as a
rightmost eigenvalue of $\Lambda_{\epsilon,\nH}(A)$.
Using~\eqref{def:abscissa}, one can equivalently express the
structured $\epsilon$-pseudospectral abscissa of $A$ as
\begin{equation}\label{def:structured_pseudospectral_abscissa_2}
    \alpha_{\epsilon,\nH}(A)=\max_{\Delta\in\nH\cap\bB_\epsilon} \alpha(A+\Delta).
\end{equation}
We refer to a maximizer $\Delta_{\opt}$ of this function as a
worst-case structured perturbation of energy~$\epsilon$.  The
\emph{structured stability radius} of $A$ is
\begin{equation}\label{def:cosntrained_stability_radius}
  r_{\nH}(A):= \min_{\epsilon\geq 0}\{\epsilon:\alpha_{\epsilon,\nH}(A)\geq 0\}.
\end{equation}
Clearly, if $A$ is not Hurwitz, $r_{\nH}(A)=0$. When $\nH=\C^{n\times
  n}$ or $\nH=\R^{n\times n}$, $r_{\nH}(A)$ coincides with the
definition of stability radius~\cite{DH-AP:86} or real stability
radius~\cite{LQ-BB-AR-ED-PY-JD:95}, respectively. Note also that if
$\nH_1\subseteq\nH_2$, then $r_{\nH_1}(A)\geq r_{\nH_2}(A)$.

\subsubsection*{Matrix perturbation theory}

Here we gather two useful results on matrix perturbations. The first
describes the derivative of a simple eigenvalue of a matrix that
depends linearly on time.

\begin{lem}[{\cite[Lemma 6.3.10 and Theorem
    6.3.12]{RAH-CRJ:85}}]\label{lem:eigenvalue_perturbation}
  Consider a $n \times n$ matrix trajectory $\R \ni t \mapsto C(t) =
  C_0 + tC_1$.  Let $\lambda(t)$ be an eigenvalue of $C(t)$ converging
  to a simple eigenvalue $\lambda_0$ of $C_0$ as $t \to 0$.  Let $x_0$
  and $y_0$ be, respectively, right and left eigenvectors of $C_0$
  corresponding to $\lambda_0$, that is, $(C_0-\lambda_0I)x_0 = 0$ and
  $y^*_0(C_0-\lambda_0I)=0$. Then $y^*_0x_0 \neq 0$ and $\lambda(t)$
  is analytic near $t = 0$ with
  \begin{equation}
    \left.\frac{d\lambda(t)}{dt}\right\vert_{t=0}=\frac{y^*_0C_1x_0}{y^*_0x_0}.
  \end{equation}
\end{lem}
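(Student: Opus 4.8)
The plan is to combine a standard analyticity argument for the eigenvalue branch with a differentiation of the eigenvalue equation. First I would observe that $p(t,\lambda):=\det(\lambda I-C(t))=\det(\lambda I-C_0-tC_1)$ is a polynomial, hence jointly analytic, in $(t,\lambda)$, with $p(0,\lambda_0)=0$. Since $\lambda_0$ is a \emph{simple} eigenvalue of $C_0$, it is a simple root of $p(0,\cdot)$, so $\partial_\lambda p(0,\lambda_0)\neq 0$. The analytic implicit function theorem then produces a unique analytic branch $t\mapsto\tilde\lambda(t)$ near $t=0$ with $\tilde\lambda(0)=\lambda_0$ and $p(t,\tilde\lambda(t))=0$; moreover, because $\lambda_0$ is simple, $p(t,\cdot)$ has exactly one root near $\lambda_0$ for $t$ small, so any eigenvalue of $C(t)$ converging to $\lambda_0$ must coincide with $\tilde\lambda(t)$ for $t$ small. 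Hence the given $\lambda(t)$ is analytic near $t=0$, and it stays simple.

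Next I would show $y_0^*x_0\neq 0$. Because $\lambda_0$ is simple, its generalized eigenspace is $\operatorname{span}(x_0)$, and decomposing $\C^n$ into the generalized eigenspaces of $C_0$ gives $\C^n=\operatorname{span}(x_0)\oplus\operatorname{ran}(C_0-\lambda_0 I)$, where the second summand is the sum of the generalized eigenspaces of the other eigenvalues (on which $C_0-\lambda_0 I$ is invertible). Since $y_0^*(C_0-\lambda_0 I)=0$, the covector $y_0^*$ annihilates $\operatorname{ran}(C_0-\lambda_0 I)$; if in addition $y_0^*x_0=0$ it would annihilate all of $\C^n$, forcing $y_0=0$, a contradiction. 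Therefore $y_0^*x_0\neq 0$.

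For the derivative, I would select an analytic right-eigenvector branch $t\mapsto x(t)$ with $x(0)=x_0$ and $(C(t)-\lambda(t)I)x(t)=0$ for $t$ near $0$; such a branch exists, for instance as a suitable fixed nonzero column of the entrywise-analytic adjugate $\operatorname{adj}(\lambda(t)I-C(t))$, which has rank one for $t$ small because $\lambda(t)$ is simple. Differentiating $(C(t)-\lambda(t)I)x(t)=0$ at $t=0$ yields $(C_1-\dot\lambda(0)I)x_0+(C_0-\lambda_0 I)\dot x(0)=0$. Multiplying on the left by $y_0^*$ and using $y_0^*(C_0-\lambda_0 I)=0$ eliminates the last term, leaving $y_0^*C_1x_0-\dot\lambda(0)\,y_0^*x_0=0$; dividing by the nonzero scalar $y_0^*x_0$ gives $\dot\lambda(0)=y_0^*C_1x_0/(y_0^*x_0)$.

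I expect the main obstacle to be the smooth (analytic) selection of the eigenvalue and eigenvector branches near $t=0$; once that is in place, the remainder is elementary linear algebra. An alternative that sidesteps the eigenvector branch is to differentiate the scalar identity $p(t,\lambda(t))=0$ to obtain $\dot\lambda(0)=-\partial_t p(0,\lambda_0)/\partial_\lambda p(0,\lambda_0)$ and evaluate both partials through Jacobi's formula for the derivative of a determinant, using that $\operatorname{adj}(\lambda_0 I-C_0)$ is a rank-one matrix proportional to $x_0y_0^*$; this route simultaneously re-establishes $y_0^*x_0\neq 0$ and the quotient formula.
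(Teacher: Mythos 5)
Your proof is correct. Note, however, that the paper offers no proof of this lemma at all: it is imported verbatim from Horn and Johnson (Lemma 6.3.10 and Theorem 6.3.12 of \cite{RAH-CRJ:85}), so there is no in-paper argument to compare against. What you have written is essentially the standard textbook derivation: the analytic implicit function theorem applied to the characteristic polynomial (using simplicity to get $\partial_\lambda p(0,\lambda_0)\neq 0$ and Rouch\'e/Hurwitz to identify the branch), the spectral decomposition $\C^n=\operatorname{span}(x_0)\oplus\operatorname{ran}(C_0-\lambda_0 I)$ to rule out $y_0^*x_0=0$, and differentiation of $(C(t)-\lambda(t)I)x(t)=0$ followed by left-multiplication by $y_0^*$ to kill the $\dot x(0)$ term. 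All three steps are sound; the only point that deserves the care you already give it is the existence of an \emph{analytic} eigenvector branch, which your adjugate construction handles correctly (the adjugate is entrywise analytic, has rank one at $t=0$ by simplicity, and its nonzero column stays a right eigenvector since $(\lambda I-C)\operatorname{adj}(\lambda I-C)=\det(\lambda I-C)I$). Your alternative route via Jacobi's formula and $\operatorname{adj}(\lambda_0 I-C_0)\propto x_0y_0^*$ is also valid and is arguably the cleaner way to get $y_0^*x_0\neq 0$ and the quotient formula in one stroke.
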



The following result describes the time derivative of the product of
the right and left eigenvectors corresponding to a simple eigenvalue
of a complex-valued time-dependent matrix.

\begin{thm}[{\cite[Theorem 5.2]{NG-MO:11}}]\label{thm:derivative of
    xy*}
  Consider a $n\times n$ complex-analytic matrix trajectory $\C \ni t
  \mapsto C(t) = C_0 + tC_1+O(t^2)$.  Let $\lambda(t)$ be a simple
  eigenvalue of $C(t)$ in a neighborhood $\mathcal N$ of $t = 0$, with
  corresponding RP-compatible right and left eigenvectors $x(t)$ and
  $y(t)$. Then $Q(t)=x(t)y(t)^*$ is $C^\infty$ on~$\mathcal N$ and its
  derivative at~$t = 0$ is
  \begin{equation}\label{eqn:thm:derivative of xy*}
    \frac{dQ(t)}{dt}\Big|_{t=0} = \re(\beta + \gamma)Q(0) - GC_1Q(0) - Q(0)C_1G,
  \end{equation}
  where $G = (C_0 - \lambda I)^\#$, $\beta = x^*GC_1x$, $\gamma =
  y^*C_1Gy$, $\lambda = \lambda(0)$, $x = x(0)$, and $y = y(0)$.
\end{thm}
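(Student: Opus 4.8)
The plan is to split $Q(t)$ into a holomorphic rank-one factor, the eigenprojection of the simple eigenvalue $\lambda(t)$, times a strictly positive real-analytic scalar that carries the RP-normalization. By analytic perturbation theory for a simple eigenvalue, on a neighbourhood $\mathcal N$ of $t=0$ there are a holomorphic branch $\lambda(t)$ and holomorphic right/left eigenvectors $\hat x(t)$ (a column) and $\hat w(t)$ (a row) of $C(t)$ for $\lambda(t)$; since $\hat w(0)\hat x(0)\neq 0$ by Lemma~\ref{lem:eigenvalue_perturbation}, I would rescale $\hat x(t)$ so that $\hat w(t)\hat x(t)\equiv 1$, which makes $P(t):=\hat x(t)\hat w(t)$ the holomorphic eigenprojection. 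A short computation then gives $\hat x^{*}\hat w^{*}=\overline{\hat w\hat x}=1$, so $x(t)=\hat x(t)/|\hat x(t)|$ and $y(t)^{*}=\hat w(t)/|\hat w(t)|$ form an RP-compatible pair, $\Vert P(t)\Vert_F=|\hat x(t)|\,|\hat w(t)|$, and hence
\[
Q(t)=x(t)y(t)^{*}=s(t)P(t),\qquad s(t):=\frac{1}{\Vert P(t)\Vert_F}=y(t)^{*}x(t)>0 .
\]
Because $xy^{*}$ is invariant under the common phase rotation of $x$ and $y$ --- the only residual freedom in an RP-compatible pair --- this identity is independent of the arbitrary phase choices, and since $P$ is holomorphic and $s$ strictly positive and real-analytic, $Q$ is $C^{\infty}$ on $\mathcal N$. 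This settles the first assertion.

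For the derivative, I would differentiate $Q=sP$ to obtain $\dot Q(0)=\dot s(0)P(0)+s(0)\dot P(0)$, where $s(0)P(0)=Q(0)$, so only $\dot P(0)$ and $\dot s(0)$ are needed. Differentiating the eigenrelations $(C(t)-\lambda(t)I)\hat x(t)=0$ and $\hat w(t)(C(t)-\lambda(t)I)=0$ at $t=0$ --- only $\dot C(0)=C_{1}$ enters, so the $O(t^{2})$ tail plays no role --- then applying $G=(C_{0}-\lambda I)^{\#}$ and using the group-inverse identities $(C_{0}-\lambda I)G=G(C_{0}-\lambda I)=I-P$, $GP=PG=0$, $G\hat x=0$, $\hat wG=0$, I get
\[
\dot{\hat x}(0)=\mu\hat x-GC_{1}\hat x,\qquad \dot{\hat w}(0)=\nu\hat w-\hat wC_{1}G ,
\]
with scalars $\mu=\hat w\dot{\hat x}(0)$, $\nu=\dot{\hat w}(0)\hat x$; differentiating $\hat w(t)\hat x(t)\equiv 1$ forces $\mu+\nu=0$. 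Substituting into $\dot P(0)=\dot{\hat x}(0)\hat w+\hat x\dot{\hat w}(0)$, the $\mu,\nu$ terms cancel and $\dot P(0)=-GC_{1}P(0)-P(0)C_{1}G$, so $s(0)\dot P(0)=-GC_{1}Q(0)-Q(0)C_{1}G$. (One can instead get $\dot P(0)$ from $\dot P(0)=\tfrac{1}{2\pi i}\oint(C_{0}-zI)^{-1}C_{1}(C_{0}-zI)^{-1}\,dz$ and the Laurent expansion $(C_{0}-zI)^{-1}=-\tfrac{P}{z-\lambda}+\sum_{n\geq 0}(z-\lambda)^{n}G^{n+1}$.)

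It remains to compute $\dot s(0)$. Writing $\log s(t)=-\tfrac12\log(\hat x^{*}\hat x)-\tfrac12\log(\hat w\hat w^{*})$ and differentiating gives $\dot s(0)/s(0)=-\re(\hat x^{*}\dot{\hat x}(0))/|\hat x|^{2}-\re(\dot{\hat w}(0)\hat w^{*})/|\hat w|^{2}$; inserting the formulas above and recognizing $\beta=x^{*}GC_{1}x=\hat x^{*}GC_{1}\hat x/|\hat x|^{2}$ and $\gamma=y^{*}C_{1}Gy=\hat wC_{1}G\hat w^{*}/|\hat w|^{2}$, this collapses to $\dot s(0)/s(0)=(\re\beta-\re\mu)+(\re\gamma-\re\nu)=\re(\beta+\gamma)$, where the last equality is exactly where RP-compatibility is used, via $\mu+\nu=0$ killing $\re\mu+\re\nu$. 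Hence $\dot s(0)P(0)=\re(\beta+\gamma)Q(0)$, and adding the two contributions yields~\eqref{eqn:thm:derivative of xy*}. I expect the crux to be precisely the derivative computation: keeping the holomorphic factor $P(t)$ cleanly separated from the non-holomorphic scalar $s(t)$, and checking that the gauge scalars $\mu,\nu$ --- the freedom in the choice of analytic eigenvectors --- drop out of $\dot Q(0)$, which succeeds only because both the normalization $\hat w\hat x\equiv 1$ and the RP-compatibility are scaling/phase constraints coupling $\mu$ and $\nu$. The rest is routine matrix calculus and group-inverse algebra.
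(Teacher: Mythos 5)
The paper does not actually prove this statement: it is imported verbatim from \cite[Theorem 5.2]{NG-MO:11}, so there is no in-paper proof to compare against, and your argument has to stand on its own. It does. The decomposition $Q(t)=s(t)P(t)$, with $P(t)=\hat x(t)\hat w(t)$ the holomorphic eigenprojection normalized by $\hat w\hat x\equiv 1$ and $s(t)=1/(|\hat x(t)|\,|\hat w(t)|)=y(t)^{*}x(t)$, is legitimate: any RP-compatible pair must have the form $x=e^{i\theta}\hat x/|\hat x|$, $y^{*}=e^{-i\theta}\hat w/|\hat w|$, so $xy^{*}=P/\Vert P\Vert_F$ independently of the phase, which simultaneously gives the $C^\infty$ claim and reduces the problem to computing $\dot P(0)$ and $\dot s(0)$. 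The group-inverse identities you invoke ($GN=NG=I-P$, $GP=PG=0$, hence $G\hat x=0$ and $\hat wG=0$) are the standard ones for an index-one eigenvalue, the formulas $\dot{\hat x}(0)=\mu\hat x-GC_1\hat x$ and $\dot{\hat w}(0)=\nu\hat w-\hat wC_1G$ are correct, and the cancellations ($\mu+\nu=0$ in $\dot P(0)$, and $\re\mu+\re\nu=0$ in $\dot s(0)/s(0)$) check out, yielding exactly \eqref{eqn:thm:derivative of xy*}. One small attribution quibble: the identity $\mu+\nu=0$ comes from the biorthogonal normalization $\hat w\hat x\equiv 1$ alone, not from RP-compatibility; what RP-compatibility buys you is that the scalar prefactor is the real, positive quantity $1/\Vert P\Vert_F$, which is why the correction term is $\re(\beta+\gamma)Q(0)$ rather than $(\beta+\gamma)Q(0)$. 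This is cosmetic and does not affect the validity of the argument.
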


\section{Problem statement}\label{subsec:problem_formulation}

We provide here a formal mathematical description of the problem of
interest.  Let $\G:=\{\V,\E\}$ denote a \emph{network graph}, where
$\V = \{1,\dots, n\}$ is the set of nodes and $\E\subseteq \V\times\V $
is the set of edges. The network dynamics is described by a linear
differential equation
\begin{equation}\label{def:sys}
  \dot x= Ax,
\end{equation}
where the components of $x\in\R^n$ correspond to the states of the
nodes, and $A=[a_{ij}]\in\R^{n\times n}$, with $a_{ij}=0$ for all
$(i,j)\not\in\E$, is the weighted adjacency matrix. We assume the
matrix $A$ is Hurwitz.  An adversary seeks to destabilize the dynamics
by attacking the network interconnections. Such attacks are
structured, in the sense that the adversary is limited to perturbing
only certain edges and within some budget.  Formally, let $\E_p\subseteq
\E$, denote the \emph{perturbation edge set} and for all
$(i,j)\in\E_p$, let $\underline\Delta_{ij}\in\R_{\leq
  0}\cup\{-\infty\}$, $\overline\Delta_{ij} \in \R_{\geq 0}\cup\{+\infty\}$
be parameters specifying \emph{saturation constraints}. We denote the
set of allowed perturbations whose sparsity pattern is compatible with
$\E_p$ by
\begin{multline}\label{def:nH_a}
  \nH := \{\Delta=[\Delta_{ij}] \in \R^{n\times n}: \Delta_{ij} = 0
  \mbox{ if }(i, j)\not\in\E_p,
  \\
  \Delta_{ij}\in[\underline\Delta_{ij},\overline\Delta_{ij}]\mbox{ if
  }(i,j)\in\E_p\} .
\end{multline}
Note that we always have $0\in\nH$. When
$\underline\Delta_{ij}=-\infty$ or $\overline\Delta_{ij}=+\infty$, then
there is no lower or upper bound constraint on the perturbation size
of the corresponding edge. We define the energy of an attack $\Delta$
to be its Frobenius norm.

After an attack $\Delta\in\nH$, the network dynamics changes to
\begin{equation}\label{def:attacked_sys}
  \dot x= (A+\Delta)x.
\end{equation}
We are interested in answering the following questions:
\begin{enumerate}
\item Given the network dynamics~\eqref{def:sys}, can an adversary
  destabilize it by employing perturbations of bounded
  energy~$\epsilon>0$ in $\nH$?
\item What edges are most important to protect against perturbations
  of bounded energy~$\epsilon>0$ in order to preserve stability?
\item If by resilience of the network dynamics we understand the
  maximum amount of energy in the perturbation that it can withstand
  without becoming unstable, what is the network resilience against
  the adversary?
\end{enumerate}
Each of these questions can be transcribed into a mathematically
precise statement. In fact, keeping in mind
that~\eqref{def:attacked_sys} is GAS if $A+\Delta$ is Hurwitz, we can
equivalently say that
\begin{itemize}
\item question 1) refers to determining whether the structured
  $\epsilon$-pseudospectral abscissa $\alpha_{\epsilon,\nH}(A)$ of $A$
  is positive;
\item with regards to question 2), assume the worst-case structured
  perturbation $\Delta_{\opt}(\epsilon)$ of energy~$\epsilon$ is
  unique.  Notice that the larger $|(\Delta_{\opt})_{ij}(\epsilon)|$
  is, the larger proportion of weight modification is done on the edge
  $(i,j)$ to destabilize the system. Hence, the magnitude of the
  elements in the perturbation matrix provides an ordering of the
  relative importance of edges against the attack; and
\item question 3) refers to determining the value of the structured
  stability radius $r_{\nH}(A)$.
\end{itemize}

In our ensuing discussion, we address questions 1) and 2) concurrently
by introducing an iterative algorithm that finds both the value of
$\alpha_{\epsilon,\nH}(A)$ and a maximizer $\Delta_{\opt}$
of~\eqref{def:structured_pseudospectral_abscissa_2}. We then answer
question 3) by designing another iterative algorithm which finds the
value of~$r_{\nH}(A)$.

\section{Network stability against perturbations: structured
  pseudospectral abscissa}\label{sec:main}

In this section, we study the answers to questions 1) and 2) of our
problem statement. We begin by characterizing the worst-case
structured perturbation of a given energy. We build on this
characterization later to propose an algorithm that computes
iteratively the structured pseudospectral abscissa.

\subsection{Characterization of the worst-case structured
  perturbation}\label{subsec:optimality}

Our first result states that any worst-case structured perturbation
which gives rise to a simple rightmost eigenvalue is a solution to an
implicit optimization problem.

\begin{thm}[First-order necessary condition for
  optimality]\label{thm:1}
  Let $\nH' \subseteq\C^{n\times n}$ be compact and convex, and let
  \begin{equation}\label{def:Delta_opt}
    \Delta_{\opt} \in \arg\max_{\Delta\in \nH'}\alpha(A+\Delta).    
  \end{equation}
  Let $\lambda_{\opt}(A+\Delta_{\opt})$ be a rightmost eigenvalue of
  $A+\Delta_{\opt}$ and suppose it is simple.  Then $\Delta_{\opt}$
  must satisfy
  \begin{equation}\label{eqn:1st_order_necessary_condition}
    \Delta_{\opt} \in \arg\max_{\Delta\in \nH'}\langle\Delta,\re(yx^*)\rangle,
  \end{equation}
  where $x,y \in \C^n$ are the RP-compatible right and left
  eigenvectors associated with $\lambda_{\opt}(A+\Delta_{\opt})$.
\end{thm}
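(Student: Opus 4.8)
The plan is to show that $\Delta_{\opt}$ is a critical point of the map $\Delta \mapsto \alpha(A+\Delta)$ restricted to the convex set $\nH'$, and then to identify the relevant first-order variation with the linear functional $\langle \Delta, \re(yx^*)\rangle$. Concretely, fix an arbitrary $\Delta \in \nH'$ and consider the line segment $t \mapsto \Delta_{\opt} + t(\Delta - \Delta_{\opt})$ for $t \in [0,1]$, which stays in $\nH'$ by convexity. Define $C(t) := A + \Delta_{\opt} + t(\Delta - \Delta_{\opt}) = C_0 + tC_1$ with $C_0 = A+\Delta_{\opt}$ and $C_1 = \Delta - \Delta_{\opt}$. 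Since $\lambda_{\opt}(A+\Delta_{\opt})$ is simple, by Lemma~\ref{lem:eigenvalue_perturbation} there is an eigenvalue branch $\lambda(t)$ of $C(t)$ that is analytic near $t=0$ with $\lambda(0) = \lambda_{\opt}(A+\Delta_{\opt})$ and
\[
\left.\frac{d\lambda(t)}{dt}\right\vert_{t=0} = \frac{y^*C_1x}{y^*x},
\]
where $x,y$ are right and left eigenvectors for $\lambda_{\opt}(A+\Delta_{\opt})$; after RP-compatible normalization $y^*x = |y^*x| > 0$ is a fixed positive real number.

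The next step is the optimality argument. Because $\Delta_{\opt}$ maximizes $\alpha(A+\Delta)$ over $\nH'$, and $\re\lambda(t) \le \alpha(C(t)) \le \alpha(C(0)) = \re\lambda(0)$ for all small $t > 0$, the function $t \mapsto \re\lambda(t)$ attains a local maximum at $t = 0$ over $t \ge 0$. Hence its right derivative at $0$ is nonpositive:
\[
\re\!\left(\frac{y^*C_1x}{y^*x}\right) = \frac{1}{y^*x}\,\re(y^*(\Delta-\Delta_{\opt})x) \le 0,
\]
so $\re(y^*\Delta x) \le \re(y^*\Delta_{\opt}x)$ for every $\Delta \in \nH'$. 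Now I invoke the identity~\eqref{identity:Frobenius_inner_product}: writing $M = \Delta$ and $M = \Delta_{\opt}$, we get $\re(y^*\Delta x) = \langle \Delta, \re(xy^*)\rangle$. A small care point: \eqref{identity:Frobenius_inner_product} is stated for $M\in\R^{n\times n}$ and $\re(x^*My)$, so I should either apply it with the roles of $x,y$ swapped to land on $\re(y^*\Delta x) = \langle \Delta, \re(yx^*)\rangle$, or expand $\langle \Delta, \re(yx^*)\rangle = \re\,\mathrm{Tr}(\Delta^*\,\re(yx^*)) = \re\,\mathrm{Tr}((\re(yx^*))^*\Delta)$ directly using that $\Delta$ is real and reconcile with $\re(y^*\Delta x)$; the matrices $\nH'$ here are complex in general, so the cleanest route is a two-line direct computation that $\langle\Delta,\re(yx^*)\rangle = \re(x^*\Delta^* y)^{\!*}$-type manipulation yields $\re(y^*\Delta x)$. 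Granting this, the inequality $\langle \Delta, \re(yx^*)\rangle \le \langle \Delta_{\opt}, \re(yx^*)\rangle$ for all $\Delta \in \nH'$ is exactly the statement that $\Delta_{\opt} \in \arg\max_{\Delta\in\nH'}\langle\Delta,\re(yx^*)\rangle$, which is~\eqref{eqn:1st_order_necessary_condition}.

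The main obstacle I anticipate is the bookkeeping around the eigenvalue branch, not the optimality logic. Specifically: (i) one must ensure that $\alpha(C(t)) \ge \re\lambda(t)$ rather than the reverse — this is immediate since $\lambda(t)\in\Lambda(C(t))$ — but one must also confirm that no \emph{other} eigenvalue of $C(t)$ overtakes to give a contradiction; it does not matter, because we only need the one-sided inequality $\re\lambda(t)\le\alpha(C(t))\le\alpha(C(0))$, and $\re\lambda$ is differentiable by Lemma~\ref{lem:eigenvalue_perturbation}. (ii) The RP-compatibility normalization must be in force so that $y^*x$ is real and positive, which is what lets us pass from "$\re$ of a ratio is $\le 0$" to "$\re(y^*C_1x)\le 0$" without a sign ambiguity; this is guaranteed by the hypothesis that we take the RP-compatible pair, and $y^*x\neq 0$ by Lemma~\ref{lem:eigenvalue_perturbation}. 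No convexity/compactness beyond "the segment stays in $\nH'$" is actually needed for the necessary condition, though compactness is what guarantees $\Delta_{\opt}$ exists in the first place.
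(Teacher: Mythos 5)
Your proposal is correct and follows essentially the same route as the paper: differentiate the eigenvalue along a feasible perturbation via Lemma~\ref{lem:eigenvalue_perturbation}, use RP-compatibility to make $y^*x>0$ so the sign of the real part survives division, and convert $\re(y^*C_1x)$ into the Frobenius inner product $\langle C_1,\re(yx^*)\rangle$. The only (cosmetic) difference is that you differentiate along segments $\Delta_{\opt}+t(\Delta-\Delta_{\opt})$ toward arbitrary points of $\nH'$, which yields the global inequality $\langle\Delta,\re(yx^*)\rangle\le\langle\Delta_{\opt},\re(yx^*)\rangle$ directly, whereas the paper phrases the same computation in terms of the cone of feasible directions and then invokes convexity of the linear program; your side remark about \eqref{identity:Frobenius_inner_product} being stated for real $M$ is a legitimate care point that applies equally to the paper's own proof.
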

\begin{proof}
  To study the first-order necessary condition of optimality, we
  consider the set $\mathcal F$ of feasible directions at $\Delta \in
  \nH'$,
  \[
  \mathcal F:=\{C \in\C^{n\times n}:\exists \tau>0 \mbox{ s.t. }
  \Delta+t C\in \nH' , \; \forall t\in[0,\tau]\}.
  \]
  The condition for optimality states that, if $\Delta_{\opt}$ is the
  optimizer, then
  $\frac{d}{dt} \alpha(A+\Delta_{\opt}+tC) \big|_{t=0}\leq 0$, for all
  $C\in \mathcal F$. Applying Lemma~\ref{lem:eigenvalue_perturbation},
  we deduce
  \begin{multline*}
    0\geq
    \re\left(\frac{d\lambda_{\opt}(A+\Delta_{\opt}+tC)}{dt}\Big|_{t=0}\right)
    \\
    =\re\left(\frac{y^*Cx}{y^*x}\right)=\frac{\re(y^*Cx)}{y^*x}
    =\frac{\langle C,\re(yx^*)\rangle}{y^*x},
  \end{multline*}
  where we have used the RP-compatibility for the second last equality
  and the identity \eqref{identity:Frobenius_inner_product} for the
  last equality. Now, one can see that $\langle C,\re(yx^*)\rangle\leq
  0$ for all $C\in \mathcal F$ corresponds indeed to the first-order
  necessary condition for optimality of the maximization
  problem~\eqref{eqn:1st_order_necessary_condition}. Since this
  problem is convex, the condition is also sufficient to characterize
  an optimizer. Thus, satisfying the condition
  of~\eqref{eqn:1st_order_necessary_condition} is a necessary
  condition for being a maximizer of~\eqref{def:Delta_opt}.
\end{proof}

It is worth pointing out that,
since~\eqref{eqn:1st_order_necessary_condition} is a necessary
condition, it needs to be satisfied by any worst-case perturbation. We
make use of this fact to design an algorithm based on
Theorem~\ref{thm:1} to find a worst-case perturbation and compute the
structured pseudospectral abscissa.

Comparing with \eqref{def:structured_pseudospectral_abscissa_2}, we
observe that \eqref{def:Delta_opt} with $\nH'= \nH \cap\bB_\epsilon$
corresponds exactly to a worst-case structured perturbation of
energy~$\epsilon$ (in fact, $\nH$ as defined in \eqref{def:nH_a} is
closed and convex, and hence $\nH '$ is compact and convex). The
optimization problem in~\eqref{def:Delta_opt} is nonconvex and hence
difficult to solve in general.  Instead, Theorem~\ref{thm:1}
facilitates finding the structured pseudospectral abscissa by
providing a characterization~\eqref{eqn:1st_order_necessary_condition}
of the worst-case structured perturbations.

For known $x,y\in\C^n$, the optimization
in~\eqref{eqn:1st_order_necessary_condition} is a convex problem of
the form
\begin{alignat}{2}\label{Optimization_problem}
  & \textnormal{maximize } && \quad \langle\Delta,M\rangle
  \\
  & \textnormal{subject to } && \quad \Delta\in \nH\cap\bB_\epsilon ,
\end{alignat}
when we set $M=[m_{ij}]:=\re(yx^*)$, and can therefore be solved
efficiently.  However, we should note that the vectors $x$ and $y$
in~\eqref{eqn:1st_order_necessary_condition} are unknown, since they
are the eigenvectors of $A+\Delta_{\opt}$, making
equation~\eqref{eqn:1st_order_necessary_condition} implicit
in~$\Delta_{\opt}$.  Before we address this issue, we finish the
exposition here describing the properties of the solution
of~\eqref{Optimization_problem} for a given known~$M$.

We make the next assumption regarding the worst-case
perturbation.

\begin{assmpt}\longthmtitle{Non-saturation at optimizers}\label{ass:1}
  No optimizer of~\eqref{Optimization_problem} is fully saturated,
  i.e., if $\Delta_{\opt}$ is an optimizer
  of~\eqref{Optimization_problem}, then there exists $(i,j)\in\E_p$
  such that $(\Delta_{\opt})_{ij}\in(\underline \Delta_{ij},\overline
  \Delta_{ij})$. Furthermore,~$m_{ij}\neq 0$. 
\end{assmpt}

Assumption~\ref{ass:1} is reasonable: in case it does not hold, i.e.,
a worst-case perturbation is fully saturated, then it must be at a
vertex of $\nH$ and since this constraint set has finitely many
vertices, the worst-case perturbation can be found by exhaustion.
Meanwhile, if $m_{ij}=0$, then the value of $(\Delta_{\opt})_{ij}$
does not affect the optimal value and, consequently, one can construct
other optimizers that are saturated at edge $(i,j)$. This is the
reason why we explicitly require $m_{ij}\neq 0$ in
Assumption~\ref{ass:1}. Since \eqref{Optimization_problem} has a
non-trivial linear objective function with a convex constraint set,
the optimum is achieved on its boundary and hence, if the maximizer
$\Delta_{\opt}$ is not fully saturated, it must verify $\Vert
\Delta_{\opt}\Vert_F=\epsilon$.  In addition, by employing the KKT
conditions for optimality, we are able to characterize the solution
of~\eqref{Optimization_problem}.

\begin{prop}\longthmtitle{Characterization of solution
    of~\eqref{Optimization_problem}}\label{prop:solution_to_optimization}
  Let $M \in \C^{n\times n}$ and $\epsilon>0$.  Under
  Assumption~\ref{ass:1}, the
  optimization~\eqref{Optimization_problem} has a unique optimizer
  $\Delta_{\opt}$ given by
  \begin{equation}\label{assignment_of_Delta_opt}
    (\Delta_{\opt})_{ij}=\left\{\begin{array}{cc} m_{ij}\theta_{\opt}, &
        \mbox{if } (i,j)\in \E_p\backslash(\overline\nS\cup\underline\nS),
        \\
        \overline \Delta_{ij}, & \mbox{if } (i,j)\in\overline\nS,
        \\
        \underline \Delta_{ij}, & \mbox{if } (i,j)\in\underline\nS,
        \\
        0,&\mbox{if }(i,j)\not\in\E_p
      \end{array}\right.
  \end{equation}
  where
  $\overline\nS:=\overline\nS(\epsilon,M),\underline\nS:=\underline\nS(\epsilon,M)$
  are the unique subsets of $\E_p$ such that
  \begin{subequations}\label{def:saturation_set}
    \begin{align}
      m_{ij}\theta_{\opt}\in (\underline \Delta_{ij},\overline \Delta_{ij}) &
      \quad\forall (i,j)\in\E_p\backslash(\overline\nS\cup\underline\nS),
      \label{saturation_set_1}
      \\
      m_{ij}\theta_{\opt}\geq\overline
      \Delta_{ij} & \quad\forall (i,j)\in
      \overline\nS,\label{saturation_set_2}
      \\
      m_{ij}\theta_{\opt}\leq\underline \Delta_{ij} & \quad\forall
      (i,j)\in \underline\nS\label{saturation_set_3}
    \end{align}
  \end{subequations}
  and $\theta_{\opt}$ is shorthand notation for
  $\theta_{\opt}(\epsilon,M) :=
  \theta(\epsilon,M,\overline\nS(\epsilon,M),\underline\nS(\epsilon,M))$, where
  the function $\theta$ is
  \begin{equation}\label{def:theta}
    \theta(\epsilon,M,\overline\nS,\underline\nS) \! := \!
    \sqrt{\frac{\epsilon^2 -
        \sum_{(i,j)\in\overline\nS}\overline\Delta_{ij}^2-\sum_{(i,j)\in\underline\nS}
        \underline
        \Delta_{ij}^2}{\sum_{(i,j)\in\E_p\backslash(\overline\nS\cup\underline\nS)}m_{ij}^2}}.  
  \end{equation}
  In addition, there is a neighborhood $D$ around
  $(\epsilon,M)\in\R_{\geq 0}\times \R^{n\times n}$ such that
  $\theta_{\opt}$ is Lipschitz on~$D$.
\end{prop}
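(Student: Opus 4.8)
The plan is to treat \eqref{Optimization_problem} as a finite-dimensional convex program, extract the optimizer from its KKT system, and then establish uniqueness and regularity as separate steps. Since $\Delta$ ranges over real matrices, the objective equals $\sum_{(i,j)\in\E_p}\Delta_{ij}m_{ij}$ with $m_{ij}:=\re(M_{ij})$, so I may assume $M\in\R^{n\times n}$ throughout (consistent with the domain of the Lipschitz claim); the feasible set $\nH\cap\bB_\epsilon$ is compact and convex and the objective is linear, so an optimizer exists by Weierstrass. First I would form the Lagrangian with a multiplier $\mu\geq0$ for the energy constraint $\Vert\Delta\Vert_F^2\le\epsilon^2$ and multipliers $\overline\nu_{ij},\underline\nu_{ij}\ge0$ for the box constraints. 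Because $0$ is strictly feasible for the only nonaffine constraint ($\Vert 0\Vert_F<\epsilon$) while the box and sparsity constraints are affine, refined Slater holds, so the KKT conditions are necessary and sufficient for global optimality. Stationarity then reads $m_{ij}=2\mu\Delta_{ij}+\overline\nu_{ij}-\underline\nu_{ij}$ for every $(i,j)\in\E_p$.

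The next step is to show $\mu>0$: if $\mu=0$, then at the non-saturated edge guaranteed by Assumption~\ref{ass:1} both box multipliers vanish by complementary slackness, forcing $m_{ij}=0$ and contradicting Assumption~\ref{ass:1}. Hence $\mu>0$, and I set $\theta_{\opt}:=1/(2\mu)>0$. Combining stationarity with complementary slackness case by case reproduces \eqref{assignment_of_Delta_opt}: an edge with $\Delta_{ij}\in(\underline\Delta_{ij},\overline\Delta_{ij})$ has $\overline\nu_{ij}=\underline\nu_{ij}=0$, hence $\Delta_{ij}=m_{ij}\theta_{\opt}$ with $m_{ij}\theta_{\opt}\in(\underline\Delta_{ij},\overline\Delta_{ij})$, which is \eqref{saturation_set_1}; an edge at its upper bound has $\Delta_{ij}=\overline\Delta_{ij}$ and $\overline\nu_{ij}\ge0$, giving $m_{ij}\theta_{\opt}\ge\overline\Delta_{ij}$, which is \eqref{saturation_set_2}; the lower case is symmetric and gives \eqref{saturation_set_3}. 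Defining $\overline\nS,\underline\nS$ as the upper- and lower-saturated edge sets, substituting the three cases into $\Vert\Delta_{\opt}\Vert_F^2=\epsilon^2$ (which holds by the boundary argument preceding the statement, since by Assumption~\ref{ass:1} the optimizer is not fully saturated) and solving for $\theta_{\opt}$ recovers \eqref{def:theta}.

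To prove uniqueness of $\Delta_{\opt}$ I would invoke strict convexity of the Frobenius norm. The optimizer set is convex, and by Assumption~\ref{ass:1} every optimizer attains $\Vert\cdot\Vert_F=\epsilon$; if $\Delta^1\ne\Delta^2$ were two optimizers, their midpoint would also be optimal, yet the triangle inequality gives $\Vert(\Delta^1+\Delta^2)/2\Vert_F<\epsilon$ unless $\Delta^1,\Delta^2$ are positively parallel, which with equal norms forces $\Delta^1=\Delta^2$, a contradiction. Uniqueness of $\Delta_{\opt}$ then pins down $\theta_{\opt}$ and $(\overline\nS,\underline\nS)$. To make existence and uniqueness of these sets self-contained I would alternatively introduce the clamp $\pi_{ij}(\tau):=\min\{\overline\Delta_{ij},\max\{\underline\Delta_{ij},\tau\}\}$ and the scalar map $\phi(\theta):=\sum_{(i,j)\in\E_p}\pi_{ij}(m_{ij}\theta)^2$, which is continuous, nondecreasing, vanishes at $\theta=0$, and is strictly increasing as long as some edge with $m_{ij}\ne0$ is unsaturated. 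Assumption~\ref{ass:1} places $\epsilon^2$ strictly below the fully saturated level, so $\phi(\theta)=\epsilon^2$ has a unique root in the strictly increasing regime; that root is $\theta_{\opt}$, and the partition it induces is the unique pair $(\overline\nS,\underline\nS)$ satisfying \eqref{def:saturation_set}.

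Finally, for the Lipschitz claim I would exploit uniform monotonicity of $\phi$. Fixing the base point, let $(i_0,j_0)$ be a strictly unsaturated edge with $m_{i_0j_0}\ne0$ (Assumption~\ref{ass:1}), and write $\phi(\theta;M)$ to display the dependence on $M$. The entry $m_{i_0j_0}\theta$ stays strictly inside its box on a small product neighborhood of $(\theta_{\opt},M)$, so there $\phi(\cdot;M)$ increases with slope bounded below by some $c>0$, uniformly in $M$. Shrinking $D$ so that the root $\theta_{\opt}(\epsilon',M')$ remains in this interval for all $(\epsilon',M')\in D$, and using that $\phi$ is Lipschitz in $M$ on bounded sets (a sum of squares of clamps), I would combine the lower slope bound with $\phi(\theta_{\opt}(\epsilon',M');M')=(\epsilon')^2$ to get $c\,|\theta_{\opt}(\epsilon,M)-\theta_{\opt}(\epsilon',M')|\le L_1|\epsilon-\epsilon'|+L_2\Vert M-M'\Vert_F$, i.e.\ Lipschitzness on $D$. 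The main obstacle is the self-referential definition of the saturation sets together with the nonsmoothness of $\phi$ at the breakpoints where an edge (de)saturates: I must avoid circularity by confining $\theta_{\opt}$ to the interval on which the persistent unsaturated edge $(i_0,j_0)$ \emph{alone} guarantees the slope bound, so that the inverse-function estimate applies across breakpoints even though $\phi$ is only piecewise smooth.
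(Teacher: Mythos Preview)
Your KKT derivation of \eqref{assignment_of_Delta_opt}--\eqref{def:theta} is essentially the paper's argument: same Lagrangian, same use of Assumption~\ref{ass:1} to force the energy multiplier positive, same identification $\theta_{\opt}=1/(2\mu)$, same recovery of \eqref{def:theta} from $\Vert\Delta_{\opt}\Vert_F=\epsilon$. You are slightly more careful in two places the paper glosses over: you check Slater explicitly, and you actually prove uniqueness of $\Delta_{\opt}$ (via strict convexity of the Frobenius ball), whereas the paper asserts uniqueness in the statement but never argues it.

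The genuine divergence is in the Lipschitz part. The paper fixes two data points $(\epsilon_1,M_1),(\epsilon_2,M_2)$, introduces auxiliary sums $a,b,c,d_k,e_k,f_k,g_k$ indexed by the four possible overlaps of the saturation sets, and grinds through an algebraic case split (whether certain of these sums vanish) to bound $|\theta_1-\theta_2|$ directly. Your route is different and cleaner: you encode the optimizer through the scalar equation $\phi(\theta;M)=\epsilon^2$ with $\phi$ the sum of squared clamps, observe that the persistently unsaturated edge from Assumption~\ref{ass:1} gives a uniform lower bound on the slope of $\phi(\cdot;M)$ near $\theta_{\opt}$, and then read off Lipschitzness from an inverse-monotone-function estimate. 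This sidesteps the combinatorics of how the saturation sets may differ at nearby data and handles the nonsmooth breakpoints uniformly. The trade-off is that the paper's explicit constants ($\overline m,\underline m,m_f,m_g$) are reused verbatim in the proof of Lemma~\ref{lem:continuous_dependence} to build the constant~$\ell$; your argument produces a Lipschitz constant too, but you would need to thread it through that later lemma in place of the paper's.
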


The proof of Proposition~\ref{prop:solution_to_optimization} is in the
Appendix.  According to this result, the element of the optimizer
$\Delta_{\opt}$ corresponding to $(i,j)\in \E_p$ is either saturated
or proportional to $m_{ij}$, with ratio given by $\theta_{\opt}$.  We
refer to $\overline\nS,\underline\nS$ as the \emph{index sets of
  saturation} as $(\Delta_{\opt})_{ij}$ attains either of its boundary
values for all $(i,j)\in\overline\nS\cup\underline\nS$.  Note that, by
Assumption~\ref{ass:1}, $\nS(\epsilon,M)\subsetneq\E_p$ and
$\theta_{\opt}$ given by~\eqref{def:theta} is well defined since there
exists $(i,j)\in \E_p\backslash(\overline\nS\cup\underline\nS)$ such that
$m_{ij}\neq 0$.

\begin{rem}\longthmtitle{Comparison with the literature}
  Proposition~\ref{prop:solution_to_optimization} is a generalization
  of the results available in the
  literature~\cite{NG-CL:13,VK-FP:20}. When there are neither sparsity
  constraints nor saturation constraints, $\nH=\R^{n\times n}$ and
  $\theta_{\opt}=\frac{\epsilon}{\Vert M\Vert_F}$, so
  \begin{align*}
    \Delta_{\opt}=\frac{\epsilon \re(yx^*)}{\Vert \re(yx^*)\Vert_F} ,
  \end{align*}
  as stated in \cite[Theorem 2.2]{NG-CL:13}. On the other hand, when
  $\nH$ contains sparsity constraints but no saturation constraints,
  we deduce that there exists $c\geq 0$ such that
  \begin{align*}
    (\Delta_{\opt})_{ij}=\left\{\begin{array}{cc}
        c\re(yx^*), & \mbox{ if } (i,j)\in\E_p, \\
        0, & \mbox{ otherwise, }
      \end{array}\right.
  \end{align*}
  as stated in~\cite[Theorem 3.2]{VK-FP:20}.  \oprocend
\end{rem}

We note that $\theta_{\opt}$ and $\overline\nS$, $\underline\nS$ are
inter-dependent, which means that
Proposition~\ref{prop:solution_to_optimization} does not provide an
explicit expression of $\Delta_{\opt}$. However, the result provides
the basis for a simple method, which we summarize in
Algorithm~\ref{alg:solving_optimization_problem}, to find the solution
of \eqref{Optimization_problem} by growing the index sets of
saturation $\overline\nS,\underline\nS$ if they do not meet the
conditions~\eqref{saturation_set_1}--\eqref{saturation_set_3} for the
corresponding value of $\theta_{\opt}$ determined
by~\eqref{def:theta}.
\begin{algorithm}[h]
  \caption{Incremental construction of index
    sets}\label{alg:solving_optimization_problem}
  \algorithmicrequire $\epsilon,M,\nH$
  \\
  \algorithmicensure $\theta_{\opt},\Delta_{\opt}$
  \begin{algorithmic}[1]
    \State $\overline\nS\gets\emptyset, \underline\nS\gets\emptyset$,
    \State NotDone $\gets$ false\label{Step:clearflag}
    \If{$\overline\nS\cup\underline\nS=\E_p$} break with error ``the optimizer is fully saturated, Assumption~\ref{ass:1} is violated"
    \EndIf
    \State Compute $\theta_{\opt}$ as in \eqref{def:theta}
    \ForAll{$(i,j)\in\E_p\backslash(\overline\nS\cup\underline\nS)$}
        \If{$m_{ij}\theta_{\opt}\geq\overline\Delta_{ij}$} 
        \State $\overline\nS\gets\overline\nS\cup\{(i,j)\}$
        \State $(\Delta_{\opt})_{ij}\gets\overline\Delta_{ij}$
        \State NotDone $\gets$ true
        \ElsIf{$m_{ij}\theta_{\opt}\leq\underline\Delta_{ij}$}
        \State $\underline\nS\gets\underline\nS\cup\{(i,j)\}$
        \State $(\Delta_{\opt})_{ij}\gets\underline\Delta_{ij}$
        \State NotDone $\gets$ true
        \EndIf
    \EndFor
    \If{NotDone $=$ true} go back to Step~\ref{Step:clearflag}
    \Else{} $(\Delta_{\opt})_{ij}\gets m_{ij}\theta_{\opt}$ for all $(i,j)\in\E_p\backslash(\overline\nS\cup\underline\nS)$
    \EndIf
  \end{algorithmic}
\end{algorithm}
The next result shows that
Algorithm~\ref{alg:solving_optimization_problem} finds the solution
of~\eqref{Optimization_problem}.

\begin{lem}\longthmtitle{Algorithm~\ref{alg:solving_optimization_problem}
    solves~\eqref{Optimization_problem}}\label{lem:why_optimization_alg_works}
  Under Assumption~\ref{ass:1},
  Algorithm~\ref{alg:solving_optimization_problem} finds the solution
  of the optimization~\eqref{Optimization_problem} in a finite number
  of steps.
\end{lem}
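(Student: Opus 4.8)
\emph{Proof plan.}
The plan is to prove two things: (i) the algorithm performs only finitely many passes through its outer loop and never executes the error break, and (ii) upon termination it outputs the unique optimizer of~\eqref{Optimization_problem} described in Proposition~\ref{prop:solution_to_optimization}. Throughout, for the given $\epsilon>0$ and $M$, write $\overline\nS,\underline\nS,\theta_{\opt},\Delta_{\opt}$ for the quantities characterized in that proposition, and write $\overline\nS^{(k)},\underline\nS^{(k)},\theta^{(k)}$ for the values of the corresponding variables at the start of the $k$-th pass through the outer loop (the $k$-th visit to Step~\ref{Step:clearflag}); note that $\overline\nS\cup\underline\nS$ never shrinks from one pass to the next. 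Finiteness is immediate: control returns to Step~\ref{Step:clearflag} only when the flag \texttt{NotDone} has been set \texttt{true}, which happens only after at least one element of $\E_p$ has been added to $\overline\nS\cup\underline\nS$; since $\E_p$ is finite and the inner loop ranges over a finite set, the procedure halts after at most $|\E_p|$ passes. The content of the lemma therefore lies in (ii) and in ruling out the error break.

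The heart of the argument is the invariant $\overline\nS^{(k)}\subseteq\overline\nS$ and $\underline\nS^{(k)}\subseteq\underline\nS$, which I would prove by induction on $k$; the case $k=0$ is vacuous. Assuming it at level $k$, I first extract the estimate $0<\theta^{(k)}\le\theta_{\opt}$. By Assumption~\ref{ass:1} there is an index $(i^\ast,j^\ast)$ with $m_{i^\ast j^\ast}\neq0$ and $m_{i^\ast j^\ast}\theta_{\opt}=(\Delta_{\opt})_{i^\ast j^\ast}\in(\underline\Delta_{i^\ast j^\ast},\overline\Delta_{i^\ast j^\ast})$, so $(i^\ast,j^\ast)\in\E_p\backslash(\overline\nS\cup\underline\nS)\subseteq\E_p\backslash(\overline\nS^{(k)}\cup\underline\nS^{(k)})$; hence the denominator in~\eqref{def:theta} is at least $m_{i^\ast j^\ast}^2>0$. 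Moreover $\Delta_{\opt}$ is not fully saturated, so $\Vert\Delta_{\opt}\Vert_F=\epsilon$, and $\theta_{\opt}>0$ (otherwise \eqref{saturation_set_2}--\eqref{saturation_set_3} would force every saturation bound to vanish and then $\epsilon=0$); consequently the numerator in~\eqref{def:theta} evaluated at $(\overline\nS^{(k)},\underline\nS^{(k)})$ is at least $\epsilon^2-\sum_{\overline\nS}\overline\Delta_{ij}^2-\sum_{\underline\nS}\underline\Delta_{ij}^2=\sum_{\E_p\backslash(\overline\nS\cup\underline\nS)}m_{ij}^2\theta_{\opt}^2>0$, so $\theta^{(k)}>0$. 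For $\theta^{(k)}\le\theta_{\opt}$, I subtract the two energy identities $\epsilon^2=\sum_{\overline\nS^{(k)}}\overline\Delta_{ij}^2+\sum_{\underline\nS^{(k)}}\underline\Delta_{ij}^2+\sum_{\E_p\backslash(\overline\nS^{(k)}\cup\underline\nS^{(k)})}m_{ij}^2(\theta^{(k)})^2$ (the definition of $\theta^{(k)}$) and $\epsilon^2=\Vert\Delta_{\opt}\Vert_F^2=\sum_{\overline\nS}\overline\Delta_{ij}^2+\sum_{\underline\nS}\underline\Delta_{ij}^2+\sum_{\E_p\backslash(\overline\nS\cup\underline\nS)}m_{ij}^2\theta_{\opt}^2$ (from~\eqref{assignment_of_Delta_opt}), split the sums using the inclusions, and use $m_{ij}^2\theta_{\opt}^2\ge\overline\Delta_{ij}^2$ on $\overline\nS\backslash\overline\nS^{(k)}$ (from~\eqref{saturation_set_2}) and $m_{ij}^2\theta_{\opt}^2\ge\underline\Delta_{ij}^2$ on $\underline\nS\backslash\underline\nS^{(k)}$; the identity then collapses to $(\theta_{\opt}^2-(\theta^{(k)})^2)\sum_{\E_p\backslash(\overline\nS^{(k)}\cup\underline\nS^{(k)})}m_{ij}^2\ge0$, giving $\theta^{(k)}\le\theta_{\opt}$. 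Finally, any edge $(i,j)$ adjoined to $\overline\nS$ in pass $k$ obeys $m_{ij}\theta^{(k)}\ge\overline\Delta_{ij}\ge0$, hence $m_{ij}\ge0$ (as $\theta^{(k)}>0$), whence $m_{ij}\theta_{\opt}\ge m_{ij}\theta^{(k)}\ge\overline\Delta_{ij}$; by~\eqref{saturation_set_1} this edge cannot lie in $\E_p\backslash(\overline\nS\cup\underline\nS)$, and it can lie in $\underline\nS$ only in the degenerate case $\overline\Delta_{ij}=\underline\Delta_{ij}=0$, so (up to that case) $(i,j)\in\overline\nS$; the symmetric computation handles the lower branch, closing the induction.

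With the invariant in hand, the error break is never reached, since $\overline\nS^{(k)}\cup\underline\nS^{(k)}\subseteq\overline\nS\cup\underline\nS\subsetneq\E_p$ (properness by Assumption~\ref{ass:1}). Hence the algorithm exits after a pass, say the $K$-th, in which \texttt{NotDone} stays \texttt{false}; then no branch fired, so $m_{ij}\theta^{(K)}\in(\underline\Delta_{ij},\overline\Delta_{ij})$ for all $(i,j)\in\E_p\backslash(\overline\nS^{(K)}\cup\underline\nS^{(K)})$. Subtracting the same two energy identities once more, the difference equals $(\theta^{(K)})^2-\theta_{\opt}^2)\sum_{\E_p\backslash(\overline\nS\cup\underline\nS)}m_{ij}^2$ plus $\sum_{\overline\nS\backslash\overline\nS^{(K)}}(m_{ij}^2(\theta^{(K)})^2-\overline\Delta_{ij}^2)+\sum_{\underline\nS\backslash\underline\nS^{(K)}}(m_{ij}^2(\theta^{(K)})^2-\underline\Delta_{ij}^2)$; the first summand is $\le0$ (as $\theta^{(K)}\le\theta_{\opt}$), while each remaining index lies in $\E_p\backslash(\overline\nS^{(K)}\cup\underline\nS^{(K)})$ and so satisfies $0\le m_{ij}\theta^{(K)}<\overline\Delta_{ij}$ (resp. the symmetric statement), contributing strictly negatively. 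Since the whole expression is zero, $\overline\nS\backslash\overline\nS^{(K)}=\underline\nS\backslash\underline\nS^{(K)}=\emptyset$ and $\theta^{(K)}=\theta_{\opt}$; thus the index sets and the value of $\theta$ computed by the algorithm on exit are exactly $\overline\nS,\underline\nS,\theta_{\opt}$, and by~\eqref{assignment_of_Delta_opt} the matrix it returns is exactly $\Delta_{\opt}$, the unique optimizer of~\eqref{Optimization_problem}. The finitely many degenerate indices with $\overline\Delta_{ij}=\underline\Delta_{ij}=0$ can be assigned to either saturation set without changing $\theta_{\opt}$, the optimal value, or the returned matrix, so they do not affect the conclusion.

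I expect the principal difficulty to be the two-sided comparison $0<\theta^{(k)}\le\theta_{\opt}$ --- equivalently, showing that the incremental ``water-filling'' of $\overline\nS,\underline\nS$ never overshoots the true saturation pattern --- since this single estimate is what both forbids the error break and pins down the configuration at exit; the bookkeeping in subtracting the two energy identities over the varying index sets is where care is needed. Given the invariant, finiteness and the identification of the output with $\Delta_{\opt}$ via Proposition~\ref{prop:solution_to_optimization} are routine.
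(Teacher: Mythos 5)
Your proof is correct, and it organizes the argument differently from the paper's. Both proofs turn on the same computational device --- subtracting two instances of the energy identity $\epsilon^2=\sum_{\overline\nS}\overline\Delta_{ij}^2+\sum_{\underline\nS}\underline\Delta_{ij}^2+\theta^2\sum m_{ij}^2$ taken over nested index sets and signing the difference with the saturation inequalities --- but you apply it to different pairs. The paper compares \emph{consecutive algorithm iterates}: it shows the sequence $\theta_{\opt,1},\theta_{\opt,2},\dots$ is non-decreasing, deduces that every edge placed in a saturation set at some earlier pass still satisfies \eqref{saturation_set_2} or \eqref{saturation_set_3} for the final $\theta$, concludes that the terminal configuration satisfies all of \eqref{saturation_set_1}--\eqref{saturation_set_3}, and then identifies the output with $\Delta_{\opt}$ by invoking the uniqueness clause of Proposition~\ref{prop:solution_to_optimization}. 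You instead compare each iterate \emph{against the optimal configuration}, establishing the invariant $\overline\nS^{(k)}\subseteq\overline\nS$, $\underline\nS^{(k)}\subseteq\underline\nS$ and $0<\theta^{(k)}\le\theta_{\opt}$ ("never overshoot"), and then squeeze at exit to force $\theta^{(K)}=\theta_{\opt}$ and equality of the index sets. What your route buys: the error break (full saturation) is ruled out explicitly as a corollary of the containment invariant together with Assumption~\ref{ass:1}, a point the paper's proof asserts rather tersely; and you do not need to lean on the uniqueness of the index sets in Proposition~\ref{prop:solution_to_optimization} as a black box, since you re-derive the identification directly (you also correctly isolate the only genuine ambiguity, the degenerate edges with $\overline\Delta_{ij}=\underline\Delta_{ij}=0$, and check they do not affect the returned matrix). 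What the paper's route buys is brevity: monotonicity of $\theta$ plus one appeal to the already-proved uniqueness finishes the job without a second energy-balance computation at termination.
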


The proof of Lemma~\ref{lem:why_optimization_alg_works} is provided in
the Appendix.  In contrast to generic convex optimization solvers,
Algorithm~\ref{alg:solving_optimization_problem} is tailored to
problem~\eqref{Optimization_problem} and takes advantage of the
characterization~\eqref{assignment_of_Delta_opt} of its optimizer.  We
use later the ratio $\theta_{\opt}$ in
Algorithm~\ref{alg:solving_optimization_problem} to compute the
structured stability radius.

We conclude this section by presenting a result which shows that the
optimizer of~\eqref{Optimization_problem} is locally Lipschitz when
viewed as a function of the matrix defining the objective function. To
establish this, we use the Frobenius norm and show that the Lipschitz
constant is proportional to the energy of matrix perturbations.  The
proof is given in the Appendix.

\begin{lem}\longthmtitle{Sensitivity of the optimizer of
    \eqref{Optimization_problem} with respect to
    parameters}\label{lem:continuous_dependence}
  Let $\epsilon>0,M_1,M_2\in\R^{n\times n}$ and suppose $\Delta_{k}$
  are the optimizers of~\eqref{Optimization_problem} with parameters
  $(\epsilon,M_k),k=1,2$. Also assume that Assumption~\ref{ass:1}
  holds for $(\epsilon,M_1)$. Then, there exist $\delta=\delta(M_1)>0,
  \ell=\ell(M_1)>0$ such that
  %
  %
  \begin{align*}
    \Vert \Delta_{1}-\Delta_{2} \Vert_F\leq \ell\epsilon\Vert
     M_1-M_2\Vert_F
  \end{align*}
  as long as $\Vert M_1-M_2\Vert_F\leq \delta$.
\end{lem}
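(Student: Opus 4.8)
The plan is to exploit the explicit structure of the optimizer given by Proposition~\ref{prop:solution_to_optimization}, together with the Lipschitz property of the scalar function $\theta_{\opt}$ already established there. First I would invoke the neighborhood $D$ from Proposition~\ref{prop:solution_to_optimization}: for $(\epsilon, M_1)$ satisfying Assumption~\ref{ass:1}, there is $\delta > 0$ so that whenever $\Vert M_1 - M_2 \Vert_F \le \delta$, the pair $(\epsilon, M_2)$ also lies in a region where the same index sets of saturation $\overline\nS, \underline\nS$ are selected and $\theta_{\opt}$ depends Lipschitz-continuously on $M$. The crucial observation is that the index sets of saturation are determined by a finite collection of strict and non-strict inequalities in~\eqref{def:saturation_set}; by shrinking $\delta$ one can guarantee that the strict inequalities in~\eqref{saturation_set_1} persist, and hence $\overline\nS(\epsilon, M_2) = \overline\nS(\epsilon, M_1) =: \overline\nS$ and similarly for $\underline\nS$. (A small amount of care is needed at the boundary cases where $m_{ij}\theta_{\opt}$ equals a saturation bound exactly; one can perturb within the same active set, or note that these are measure-zero coincidences that can be handled by a limiting argument — this is the kind of bookkeeping that belongs in the appendix.)

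Once the index sets are pinned down, I would write the difference $\Delta_1 - \Delta_2$ entrywise using~\eqref{assignment_of_Delta_opt}. On $\E_p \setminus (\overline\nS \cup \underline\nS)$, the entries are $(m_k)_{ij}\theta_{\opt}^{(k)}$, so
\[
(\Delta_1 - \Delta_2)_{ij} = (m_1)_{ij}\theta_{\opt}^{(1)} - (m_2)_{ij}\theta_{\opt}^{(2)} = \bigl((m_1)_{ij} - (m_2)_{ij}\bigr)\theta_{\opt}^{(1)} + (m_2)_{ij}\bigl(\theta_{\opt}^{(1)} - \theta_{\opt}^{(2)}\bigr).
\]
On $\overline\nS \cup \underline\nS$ the entries of $\Delta_1$ and $\Delta_2$ coincide (both equal the saturation bounds $\overline\Delta_{ij}$ or $\underline\Delta_{ij}$), and outside $\E_p$ both are zero, so these indices contribute nothing to the difference. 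Taking Frobenius norms and using the triangle inequality gives
\[
\Vert \Delta_1 - \Delta_2 \Vert_F \le \theta_{\opt}^{(1)} \Vert M_1 - M_2 \Vert_F + \Vert M_2 \Vert_F \, \bigl| \theta_{\opt}^{(1)} - \theta_{\opt}^{(2)} \bigr|.
\]

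It remains to bound the two factors $\theta_{\opt}^{(1)}$ and $|\theta_{\opt}^{(1)} - \theta_{\opt}^{(2)}|$, and this is where the claimed $\ell \epsilon$ scaling must be extracted. From the closed form~\eqref{def:theta}, $\theta_{\opt}^{(1)} \le \epsilon / \sqrt{\sum_{(i,j) \in \E_p \setminus (\overline\nS \cup \underline\nS)} (m_1)_{ij}^2}$; since Assumption~\ref{ass:1} guarantees the denominator is bounded away from zero uniformly in a neighborhood of $M_1$ (shrinking $\delta$ if needed), this gives $\theta_{\opt}^{(1)} \le C_1 \epsilon$ for a constant $C_1 = C_1(M_1)$. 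For the difference, I would differentiate (or directly estimate) the expression in~\eqref{def:theta} with respect to $M$, holding $\epsilon$ and the index sets fixed: the numerator $\epsilon^2 - \sum \overline\Delta_{ij}^2 - \sum \underline\Delta_{ij}^2$ is constant, so $\theta$ is a smooth function of the entries $m_{ij}$ with gradient controlled by the (bounded-away-from-zero) denominator and the numerator, which is of order $\epsilon^2$; a mean-value estimate then yields $|\theta_{\opt}^{(1)} - \theta_{\opt}^{(2)}| \le C_2 \epsilon \Vert M_1 - M_2 \Vert_F$ — note the factor $\epsilon$ here comes precisely from $\partial\theta/\partial m_{ij}$ being of size $\theta / (\text{denominator}) \sim \epsilon$. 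Also $\Vert M_2 \Vert_F \le \Vert M_1 \Vert_F + \delta$ is bounded. Combining, $\Vert \Delta_1 - \Delta_2 \Vert_F \le (C_1 + C_2(\Vert M_1\Vert_F + \delta)) \epsilon \Vert M_1 - M_2 \Vert_F$, which is the asserted bound with $\ell = \ell(M_1)$.

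I expect the main obstacle to be the clean handling of the combinatorial step — showing that the index sets of saturation are locally constant in $M$ — and, relatedly, making the constants genuinely depend only on $M_1$ (not on $M_2$ or $\epsilon$). The degenerate boundary cases in~\eqref{def:saturation_set}, where a product $m_{ij}\theta_{\opt}$ sits exactly on a saturation threshold, are the delicate point; in the non-degenerate case the argument above is essentially a direct computation, so the bulk of the appendix proof will be devoted to ruling out or absorbing those degeneracies (e.g. by a density argument, or by observing that the optimizer is continuous even across the "seam" where an index migrates between $\E_p \setminus(\overline\nS\cup\underline\nS)$ and $\overline\nS$, since at the seam the two candidate values agree).
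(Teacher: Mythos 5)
Your overall strategy is right and your estimates on $\theta_{\opt}$ are correct, but the combinatorial step you lean on --- that the index sets of saturation are locally constant in $M$, so that $\overline\nS(\epsilon,M_2)=\overline\nS(\epsilon,M_1)$ and the saturated entries of $\Delta_1$ and $\Delta_2$ coincide --- is exactly where the gap is, and you concede as much. That claim fails precisely when $m_{ij}\theta_{\opt}$ sits on a saturation threshold, and since the lemma is a uniform statement over the entire ball $\Vert M_1-M_2\Vert_F\leq\delta$ (not a generic or almost-everywhere one), you cannot dismiss the degenerate case as measure zero or defer it to an unspecified limiting argument; an edge can migrate into or out of $\overline\nS$ under an arbitrarily small perturbation of $M$, and your entrywise decomposition then no longer applies as written.

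The paper's proof avoids the issue entirely: it never fixes the index sets. Instead it proves the componentwise inequality $|(\Delta_1)_{ij}-(\Delta_2)_{ij}|\leq|(M_1)_{ij}\theta_1-(M_2)_{ij}\theta_2|$ for \emph{every} $(i,j)\in\E_p$ by a three-case sandwich argument: if the edge is unsaturated in both problems the two sides are equal; if it is saturated in one problem only, say $(i,j)\in\overline\nS(\epsilon,M_1)\setminus\overline\nS(\epsilon,M_2)$, then $(M_1)_{ij}\theta_1\geq(\Delta_1)_{ij}=\overline\Delta_{ij}\geq(\Delta_2)_{ij}=(M_2)_{ij}\theta_2$, so the difference of the optimizers is squeezed between $0$ and the right-hand side; and if it is saturated in both, the left side vanishes. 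Summing the squares over $\E_p$ and invoking the Lipschitz estimate for $\theta_{\opt}$ from Proposition~\ref{prop:solution_to_optimization} (together with $\theta_2\leq\epsilon/\underline m$) then gives the bound with $\ell=\sqrt{|\E_p|}\kappa+1/\underline m$. Your parenthetical remark that the optimizer is continuous across the ``seam'' because the two candidate values agree there is precisely the right intuition --- the sandwich argument is its rigorous form --- so the fix is to replace your active-set-freezing step with that case analysis; the rest of your computation then goes through.
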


\subsection{Iterative computation of structured pseudospectral
  abscissa}\label{subsec:alg_pa}

What we have unveiled about the optimization
problem~\eqref{Optimization_problem} and the structure of its solution
in Section~\ref{subsec:optimality} is not directly applicable to the
determination of the worst-case structured perturbation and the
structured pseudospectral abscissa.  This is because, as we mentioned
earlier, the
characterization~\eqref{eqn:1st_order_necessary_condition} is implicit
in $\Delta_{\opt}$, i.e., the matrix $M =\re(yx^*)$ required to set
up~\eqref{Optimization_problem} is not a priori known, and in fact
depends on the optimizer itself. To address this obstacle, we propose
in Algorithm~\ref{alg:pseudospectral_abscissa} an iterative strategy
that proceeds by repeatedly solving instances of
problem~\eqref{Optimization_problem}, in each case taking the right
and left eigenvectors corresponding to the previous iterate.

\begin{algorithm}[h]
  \caption{Computation of worst-case
    perturbation}\label{alg:pseudospectral_abscissa}
  \algorithmicrequire
  $A,\epsilon,\nH,\Delta_0,\tol_{\Delta}$
  \\
  \algorithmicensure $\Delta,\alpha, x, y,\theta$
  \begin{algorithmic}[1]
    \State $\nH '\gets \nH\cap\bB_\epsilon$\label{alg:spa_H'}
    \State $A_0\gets A+\Delta_0$
    \Repeat{ $k=0,1,\cdots$}
    \State Compute a rightmost eigenvalue $\lambda_{k}$ of $A_{k}$ and
    its corresponding RP-compatible right and left eigenvectors
    $x_{k},y_{k}$\label{alg:spa_xy}
    \State $M\gets\re(y_kx_k^*)$
    \State Run Algorithm~\ref{alg:solving_optimization_problem} with
    inputs $\epsilon, M,\nH$, set the outputs $\theta_{k+1},\Delta_{k+1}$
    %
    \State $A_{k+1}\gets A+\Delta_{k+1}$
    \Until{$\Vert\Delta_{k+1}-\Delta_{k}\Vert_2\leq
      \tol_{\Delta}$}
    \State $(\Delta, \alpha, x,y,\theta) \gets (\Delta_{k+1},
    \re\lambda_{k+1},  x_{k+1},  y_{k+1},\theta_{k+1})$
  \end{algorithmic}
\end{algorithm}

The logic of Algorithm~\ref{alg:pseudospectral_abscissa} can be
described as follows.  At each step, we consider a candidate
worst-case perturbation $\Delta_k$, followed by computing the
RP-compatible right and left eigenvalues $x_k,y_k$ of a rightmost
eigenvalue of $A+\Delta_k$. We then solve the optimization
problem~\eqref{Optimization_problem} using $M=\re(y_kx_k^*)$, and set
the new optimizer to be $\Delta_{k+1}$. This process is repeated until
the sequence of possible worst-case perturbations converges.  From its
design, it is clear that a fixed point of
Algorithm~\ref{alg:pseudospectral_abscissa} is a solution to the
maximization problem~\eqref{eqn:1st_order_necessary_condition} (and
hence, by Theorem~\ref{thm:1}, satisfies the first-order necessary
condition for being the maximizer of~\eqref{def:Delta_opt}).

The next result establishes the local asymptotic convergence of
Algorithm~\ref{alg:pseudospectral_abscissa} to the structured
pseudospectral abscissa.

\begin{thm}[Local convergence of
  Algorithm~\ref{alg:pseudospectral_abscissa}]\label{thm:local_convergence}
  Let $\Delta_{\opt}$ be a worst-case structured perturbation (i.e.,
  $\Delta_{\opt}$ satisfies \eqref{def:Delta_opt} with $\nH'=
  \nH\cap\bB_\epsilon$) and assume the rightmost eigenvalue $\lambda$
  of $A+\Delta_{\opt}$ is simple, with RP-compatible right and left
  eigenvector pair $x$, $y$.  Let Assumption~\ref{ass:1} hold for
  $\Delta_{\opt}$ and $M=\re(yx^*)$. Define
  \begin{equation}\label{def:r}
    r = \frac{4\sqrt{n}\ell\epsilon}{\sigma_{n-1}(A-\lambda
      I)(y^*x)^2}  ,
  \end{equation}
  where $\ell=\ell(\re(yx^*))$ is given in Lemma~\ref{lem:continuous_dependence} and
  $\sigma_{n-1}$ denotes the second smallest singular value. Let
  $\Delta_k,\re\lambda_k$ be the sequences generated by
  Algorithm~\ref{alg:pseudospectral_abscissa}. If $r < 1$,
  $r^\dagger\in(r,1)$ is arbitrary and
  $\Vert\Delta_0-\Delta_{\opt}\Vert_2$ is sufficiently small, then 
  \begin{equation}\label{convegence_of_Delta_k}
    \Vert \Delta_k-\Delta_{\opt}\Vert_2 \leq (r^\dagger)^k\Vert \Delta_0-\Delta_{\opt}\Vert_2
  \end{equation}
  for all $k=0,1,\dots$ and $\re\lambda_k$ converges to the
  structured pseudospectral abscissa $\alpha_{\epsilon,\nH}(A)$. In
  addition, the output of Algorithm~\ref{alg:pseudospectral_abscissa}
  satisfies
  \begin{equation}\label{eqn:main_thm_last}
    |\alpha-\alpha_{\epsilon,\nH}(A)| = \frac{\sqrt n
      r^\dagger\Vert\re(yx^*)\Vert_F}{(1-r^\dagger)y^*x}\tol_{\Delta}+
    O(\tol_{\Delta}^2). 
  \end{equation}
\end{thm}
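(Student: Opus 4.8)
The plan is to view Algorithm~\ref{alg:pseudospectral_abscissa} as a fixed-point iteration $\Delta_{k+1}=T(\Delta_k)$, where $T=\Psi\circ\phi$ with $\phi(\Delta):=\re\big(y(\Delta)\,x(\Delta)^*\big)$ for $x(\Delta),y(\Delta)$ the RP-compatible right and left eigenvectors of the rightmost eigenvalue of $A+\Delta$, and $\Psi(M)$ the optimizer of~\eqref{Optimization_problem} for that $M$ (unique by Proposition~\ref{prop:solution_to_optimization}). Since $\lambda$ is simple, for $\Delta$ near $\Delta_{\opt}$ the rightmost eigenvalue of $A+\Delta$ remains simple and depends analytically on $\Delta$; and since $x(\Delta)y(\Delta)^*$ is invariant under the residual common-phase freedom of RP-compatibility, $\phi$ is well defined and $C^\infty$ there by Theorem~\ref{thm:derivative of xy*}. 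Moreover $\Delta_{\opt}$ is a fixed point of $T$: by Theorem~\ref{thm:1} it satisfies~\eqref{eqn:1st_order_necessary_condition} for $M=\re(yx^*)$, which by convexity characterizes the (by Proposition~\ref{prop:solution_to_optimization}, unique) optimizer, so $T(\Delta_{\opt})=\Delta_{\opt}$.

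The core step is to show $T$ is a contraction with factor below $r^\dagger$ near $\Delta_{\opt}$. By Lemma~\ref{lem:continuous_dependence}, $\Psi$ is Lipschitz with constant $\ell\epsilon$ near $M=\re(yx^*)$ (Assumption~\ref{ass:1} holds there by hypothesis and persists for nearby arguments). For $\phi$, Theorem~\ref{thm:derivative of xy*} with $C_0=A+\Delta_{\opt}$ and direction $C_1=H$ gives the derivative of $\Delta\mapsto x(\Delta)y(\Delta)^*$ at $\Delta_{\opt}$ as $\re(\beta+\gamma)Q-GHQ-QHG$, with $Q=xy^*$, $G=(A+\Delta_{\opt}-\lambda I)^\#$, $\beta=x^*GHx$, $\gamma=y^*HGy$; since $|x|=|y|=1$ (so $\|Q\|_2=\|Q\|_F=1$), each summand is bounded by a multiple of $\|G\|_2\|H\|_F$, yielding $\|D\phi(\Delta_{\opt})[H]\|_F\le 4\|G\|_2\|H\|_F$. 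To bound $\|G\|_2$, note that $G$ annihilates $\ker(A+\Delta_{\opt}-\lambda I)=\operatorname{span}(x)$ and inverts $A+\Delta_{\opt}-\lambda I$ on its range $\{v:y^*v=0\}$; since the oblique projection onto that range along $\operatorname{span}(x)$ has $2$-norm $1/(y^*x)$, and the restriction of $A+\Delta_{\opt}-\lambda I$ to that range has smallest singular value at least $\sigma_{n-1}(A+\Delta_{\opt}-\lambda I)\cdot(y^*x)$ (the extra factor being the sine of the angle between the right and left eigenspaces), we get $\|G\|_2\le 1/\big(\sigma_{n-1}(A+\Delta_{\opt}-\lambda I)\,(y^*x)^2\big)$. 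Composing the Lipschitz constants of $\Psi$ and $\phi$ and converting the resulting Frobenius bound for $T$ to the spectral norm (a factor $\sqrt n$) gives the $2$-norm Lipschitz constant $r$ of~\eqref{def:r}; continuity of $D\phi$ then lets us shrink the neighborhood so that the constant is $<r^\dagger$ whenever $r<r^\dagger<1$.

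The convergence statements then follow by the contraction-mapping principle: if $\|\Delta_0-\Delta_{\opt}\|_2$ is small enough that $\Delta_0$ lies in this ball, $T$ maps the ball into itself, all iterates stay in it, and~\eqref{convegence_of_Delta_k} follows by induction; hence $\Delta_k\to\Delta_{\opt}$, and by continuity of the rightmost eigenvalue, $\re\lambda_k\to\re\lambda=\alpha(A+\Delta_{\opt})=\alpha_{\epsilon,\nH}(A)$, the last equality since $\Delta_{\opt}$ attains the maximum in~\eqref{def:structured_pseudospectral_abscissa_2}. For~\eqref{eqn:main_thm_last}, the contraction also gives $\|\Delta_{m+1}-\Delta_m\|_2\le r^\dagger\|\Delta_m-\Delta_{m-1}\|_2$ for consecutive iterates, whence $\|\Delta_{k+1}-\Delta_{\opt}\|_2\le\sum_{i\ge1}(r^\dagger)^i\|\Delta_{k+1}-\Delta_k\|_2=\tfrac{r^\dagger}{1-r^\dagger}\|\Delta_{k+1}-\Delta_k\|_2\le\tfrac{r^\dagger}{1-r^\dagger}\tol_{\Delta}$ at termination. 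Then, by first-order eigenvalue perturbation (Lemma~\ref{lem:eigenvalue_perturbation}), $\alpha-\alpha_{\epsilon,\nH}(A)=\re\big(y^*(\Delta_{k+1}-\Delta_{\opt})x/(y^*x)\big)+O(\|\Delta_{k+1}-\Delta_{\opt}\|_2^2)$; bounding the linear term via~\eqref{identity:Frobenius_inner_product} and Cauchy--Schwarz by $\|\Delta_{k+1}-\Delta_{\opt}\|_F\,\|\re(yx^*)\|_F/(y^*x)\le\sqrt n\,\|\Delta_{k+1}-\Delta_{\opt}\|_2\,\|\re(yx^*)\|_F/(y^*x)$ and substituting the terminal bound yields~\eqref{eqn:main_thm_last}, with $O(\tol_{\Delta}^2)$ absorbing the quadratic remainder.

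I expect the main obstacle to be the contraction estimate, and within it the bound on $\|G\|_2$: since $A+\Delta_{\opt}-\lambda I$ is singular and generally non-normal, $G$ inverts it on a non-orthogonal complement of its kernel, so $\|G\|_2$ is not simply $1/\sigma_{n-1}$ but is inflated by the conditioning $1/(y^*x)$ of the simple eigenvalue $\lambda$, and pinning down this dependence requires the spectral-projection description of $G$ and an angle estimate. The remaining ingredients --- that $\phi$ is $C^\infty$ with the stated derivative, that Assumption~\ref{ass:1} and simplicity of the rightmost eigenvalue persist nearby, and the fixed-point bookkeeping --- are routine once this estimate is in hand.
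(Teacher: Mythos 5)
Your proposal is correct and follows essentially the same route as the paper's proof: both decompose the iteration into the eigenvector-product map (controlled via Theorem~\ref{thm:derivative of xy*} and the group-inverse bound $\Vert G\Vert_2\leq 1/(\sigma_{n-1}(A-\lambda I)(y^*x)^2)$) composed with the optimizer map (controlled via Lemma~\ref{lem:continuous_dependence}), obtain the contraction factor $r^\dagger$ by absorbing the second-order Taylor remainder into $r^\dagger-r$, and finish with Lemma~\ref{lem:eigenvalue_perturbation} plus the geometric-series a posteriori estimate at termination. The only substantive difference is that you sketch a derivation of the group-inverse norm bound from the spectral-projection structure, whereas the paper simply cites \cite[Theorem 5.5]{NG-MO:11} for it.
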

\begin{proof}
  Let $L:=yx^*$ and $L_{k}:=y_{k}x_{k}^*$ for $k=0,1,\dots$, where
  $x_k,y_k$ come from Step~\ref{alg:spa_xy} in the $k$-th iteration of
  Algorithm~\ref{alg:pseudospectral_abscissa}.  In addition, define
    \begin{align*}
      E_{k}:=\Delta_{k}-\Delta_{\opt}, 
      \quad
      F_{k}:=L_k-L 
    \end{align*}
    We first find a relation between $\Vert F_k\Vert_2$ and $\Vert
    E_k\Vert_2$. Consider the matrix trajectory
  \begin{equation}\label{matrix_trajectory}
      C(t)=A+\Delta_{\opt}+t\frac{E_k}{\Vert E_k\Vert_2}.
  \end{equation}
  Let $x(t)$, $y(t)$ be RP-compatible right and left
  eigenvectors of $C(t)$ associated with its rightmost eigenvalue
  $\lambda(t)$ such that $\lambda(0)=0$, $x(0)=x$ and $y(0)=y$. Define
  $Q(t):=y(t)x(t)^*$.  Invoking Theorem~\ref{thm:derivative of xy*}
  and taking the conjugate of~\eqref{eqn:thm:derivative of xy*},
  \begin{align*}
    \frac{dQ(t)}{dt}\Big|_{t=0} =
    \re(\beta+\gamma)Q(0)-Q(0)C_1^*G^*-G^*C_1^*Q(0),
  \end{align*}
  where $G = (A+\Delta_{\opt} - \lambda I)^\#$, $\beta = x^*GC_1x$
  with $C_1=E_k / \Vert E_k\Vert_2$ and $\gamma = y^*C_1Gy$. Note
  that, since $C(0)=A+\Delta_{\opt}$ and $C(\Vert
  E_k\Vert_2)=A+\Delta_k$, we have $Q(0)=L$ and $Q(\Vert
  E_k\Vert_2)=L_k$. Therefore using Taylor's expansion, we have
  \begin{align}\label{before_norm}
    F_k& =Q(\Vert E_k\Vert_2)-Q(0)=\frac{dQ(t)}{dt}\Big|_{t=0}\Vert
    E_k\Vert_2+R(\Vert E_k\Vert_2) \notag
    \\
    & =\re(x^*GE_kx+y^*E_kGy)L-LE_k^*G^*-G^*E_k^*L 
    \notag
    \\
    & \quad  +R(\Vert E_k\Vert_2) ,
  \end{align}
  where $R(\Vert E_k\Vert_2)$ is the Taylor remainder of order $1$
  such that $\Vert R(t)\Vert_2=O(t^2)$, which implies the existence of
  a class $\mathcal K$ function $\xi$ such that $\Vert R(t)\Vert_2\leq
  st$ for any $s>0$ and $t\in[0,\xi(s)]$.  Now notice that because
  $x,y$ are RP-compatible and $L=yx^*$, $|x|=|y|=1$ and $\Vert L
  \Vert_2\leq \Vert L \Vert_F=1$. In addition, from \cite[Theorem
  5.5]{NG-MO:11}, $\Vert G\Vert_2\leq \frac{1}{\sigma_{n-1}(A-\lambda
    I)(y^*x)^2}$. Hence it follows from \eqref{before_norm} that
  \begin{align}
    \Vert F_k\Vert_2&\leq 4\Vert G\Vert_2\Vert E_k\Vert_2\Vert L\Vert_2+ \Vert R(\Vert
    E_k\Vert_2)\Vert_2\nonumber\\
    &\leq\frac{4\Vert E_k\Vert_2}{\sigma_{n-1}(A-\lambda
      I)(y^*x)^2}+\Vert R(\Vert
    E_k\Vert_2)\Vert_2\nonumber\\
    &=\frac{r}{\sqrt n \ell \epsilon}\Vert E_k\Vert_2+\Vert R(\Vert
    E_k\Vert_2)\Vert_2\label{norm_F_k}
  \end{align}
  where we have used the definition~\eqref{def:r} of $r$ in the last equality. 
  
  We are ready to establish~\eqref{convegence_of_Delta_k} by
  induction. Define $r_1:=\frac{r^\dagger-r}{\sqrt n
    \ell\epsilon}$ and let $\Vert E_0\Vert_2<\min\big\{\frac{\sqrt
    n\ell\epsilon}{r^\dagger}\delta,\xi(r_1)\big\}$, where $\delta$ is
  given in Lemma~\ref{lem:continuous_dependence}.
  The base case $k=0$ for \eqref{convegence_of_Delta_k} is trivially
  true. Since $r^\dagger<1$, the induction assumption for index $k$
  implies $\Vert E_k\Vert_2<\min\big\{\frac{\sqrt
    n\ell\epsilon}{r^\dagger}\delta,\xi(r_1)\big\} $ as well. Hence,
  from~\eqref{norm_F_k},
  \begin{equation}\label{F<E}
    \Vert F_k\Vert_2 \leq \big(\frac{r}{\sqrt
      n \ell\epsilon}+r_1\big)\Vert E_k\Vert_2=\frac{r^\dagger}{\sqrt n
      \ell\epsilon}\Vert E_k\Vert_2<\delta.     
  \end{equation}
  Using Lemma~\ref{lem:continuous_dependence}, we deduce
  \begin{multline*}
    \Vert E_{k+1}\Vert_2\leq
    \Vert E_{k+1}\Vert_F=\Vert\Delta_{k+1}-\Delta_{\opt}\Vert_F
    \\
    \leq \ell\epsilon\Vert\re(y_kx_k^*)-\re(yx^*)\Vert_F\leq \ell\epsilon
    \Vert F_k \Vert_F\leq \sqrt{n}\ell\epsilon \Vert F_k \Vert_2,
  \end{multline*} 
  where we have used $\Vert M\Vert_2\leq \Vert M\Vert_F\leq
  \sqrt{n}\Vert M\Vert _2$.  Combining this bound with the first
  inequality in~\eqref{F<E} and using the induction hypothesis,
  \begin{equation}\label{recursive_E}
    \Vert E_{k+1}\Vert_2\leq  r^\dagger\Vert E_{k}\Vert_2\leq
    (r^\dagger)^{k+1}\Vert E_0\Vert_2  ,
  \end{equation}
  establishing~\eqref{convegence_of_Delta_k}.  It follows that
  $\alpha(A+\Delta_k)=\re \lambda_k$ converges to
  $\alpha(A+\Delta_{\opt})=\alpha_{\epsilon,\nH}(A)$. To prove
  \eqref{eqn:main_thm_last}, we apply
  Lemma~\ref{lem:eigenvalue_perturbation} to the matrix trajectory
  \eqref{matrix_trajectory} and conclude
  \[
  \frac{d}{dt}\alpha(C(t))\Big|_{t=0}=\frac{\langle E_k,\re(yx^*)\rangle}{\Vert
    E_k\Vert_2 y^*x}.
  \]
  Again because $C(0)=A+\Delta_{\opt}$ and $C(\Vert
  E_k\Vert_2)=A+\Delta_k$, we conclude from the Taylor expansion that
  \begin{multline*}
    |\alpha(A+\Delta_k)-\alpha_{\epsilon,\nH}(A)|=\frac{\langle
      E_k,\re(yx^*)\rangle}{y^*x} +O(\Vert E_k \Vert_2^2)
    \\
    \leq \frac{\sqrt n\Vert
      E_k\Vert_2\Vert\re(yx^*)\Vert_F}{y^*x}+O(\Vert E_k \Vert_2^2) .
  \end{multline*}
  On the other hand, when the loop in
  Algorithm~\ref{alg:pseudospectral_abscissa} terminates at the $k$-th
  iteration,
  \begin{multline*}
    (1-r^\dagger) \Vert E_{k+1}\Vert_2\leq r^\dagger(\Vert
    E_k\Vert_2-\Vert E_{k+1}\Vert_2)
    \\
    \leq r^\dagger(\Vert \Delta_{k+1}-\Delta_{k}\Vert_2)\leq
    r^\dagger\tol_{\Delta},
  \end{multline*}
  where we use the first inequality in \eqref{recursive_E}, the
  triangle inequality and the stopping criteria of
  Algorithm~\ref{alg:pseudospectral_abscissa}. Hence $\Vert
  E_{k+1}\Vert_2\leq \frac{r^\dagger}{1-r^\dagger}\tol_{\Delta}$,
  completing the proof.
\end{proof}

According to Theorem~\ref{thm:local_convergence}, the algorithm's
convergence requires $r$ to be smaller than~1: the smaller $r$ is, the
smaller $r^\dagger$ can be, leading to faster convergence of
Algorithm~\ref{alg:pseudospectral_abscissa} and a smaller
approximation error $|\alpha-\alpha_{\epsilon,H}(A)|$. The value of
$r$ depends on various system parameters: it is proportional to the
square root of the dimension of $A$, inversely proportional to the
second smallest singular value of $A-\lambda I$, and increases as the
left and right eigenvectors of $A+\Delta_{\opt}$ associated with
$\lambda$ get closer to being orthogonal. Using arguments similar to
those in the proof of \cite[Theorem 5.7]{NG-MO:11}, we can conclude
that $r=O(\epsilon)$ and hence, as long as the energy of the
structured perturbation is small enough, $r<1$ is ensured.  A smaller
value of $\tol_{\Delta}$ results in a more accurate approximation of
the structured pseudospectral abscissa, cf.~\eqref{eqn:main_thm_last},
at the cost of more iterations in
Algorithm~\ref{alg:pseudospectral_abscissa}.


\section{Measuring network resilience: structured stability
  radius}\label{subsec:alg_sr}

Here we introduce an iterative algorithm to find the structured
stability radius $r_{\nH}(A)$, providing a metric of network
resilience, corresponding to question 3) of our problem statement. Our
strategy makes repeated use of
Algorithm~\ref{alg:pseudospectral_abscissa} to find the structured
pseudospectral abscissa $\alpha_{\epsilon,\nH}(A)$ for a given
energy~$\epsilon$, since the zero-crossing of this function
corresponds to the structured stability radius.

\subsection{Structured pseudospectral abscissa as a function of
  perturbation energy}

From the definition~\eqref{def:structured_pseudospectral_abscissa}, we
observe that $\epsilon\mapsto \alpha_{\epsilon,\nH}(A)$ is an
increasing function and $r_{\nH}(A)$ is its zero-crossing. We claim
that the map $\epsilon\mapsto \alpha_{\epsilon,\nH}(A)$ is locally
Lipschitz and hence differentiable almost everywhere by Rademacher's
theorem~\cite{FHC:83}. To show Lipschitzness, we note that eigenvalues
are Lipschitz functions with respect to perturbations of matrix
entries, cf.~\cite{AL:99} (and in fact differentiable when the
eigenvalue is simple).  Then, in view of
definition~\eqref{def:structured_pseudospectral_abscissa} and because
the maximum of Lipschitz functions is Lipschitz, we conclude that
$\epsilon\mapsto \alpha_{\epsilon,\nH}(A)$ is locally Lipschitz. The
following result is useful in computing derivative of $\epsilon\mapsto
\alpha_{\epsilon,\nH}(A)$ when it exists. 


\begin{lem}[Derivative of optimal value
  of~\eqref{Optimization_problem} with respect to energy of the
  perturbation]\label{lem:derivative}
  Let $\epsilon>0$ and $M\in\R^{n\times n}$.  Suppose
  Assumption~\ref{ass:1} holds for
  $\Delta_{\opt}(\epsilon)=[(\Delta_{\opt}(\epsilon))_{ij}]\in\nH\cap
  \bB_\epsilon$ and let $\eta(\epsilon,M)$ be the optimal value of the
  optimization problem~\eqref{Optimization_problem}. Let $\overline\nS: =
  \overline\nS(\epsilon,M)$, $\underline\nS: = \underline\nS(\epsilon,M)$
  be the index sets of saturation as in
  Proposition~\ref{prop:solution_to_optimization}.  Then $\epsilon
  \mapsto \eta(\epsilon,M)$ is Lipschitz and wherever it is
  differentiable, its derivative~is
  \begin{equation}
    \frac{\partial }{\partial \epsilon}\eta(\epsilon,M)
    =\frac{\epsilon}{\theta_{\opt}(\epsilon,M)} ,
  \end{equation}
  where $\theta_{\opt}$ is given by~\eqref{def:theta}.
\end{lem}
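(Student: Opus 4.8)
The plan is to exploit the explicit formula for the optimal value $\eta(\epsilon,M)$ coming from Proposition~\ref{prop:solution_to_optimization}, differentiate it piecewise, and argue that the pieces join up Lipschitz-continuously. First I would write out $\eta(\epsilon,M) = \langle \Delta_{\opt},M\rangle = \sum_{(i,j)} m_{ij} (\Delta_{\opt})_{ij}$ using the characterization~\eqref{assignment_of_Delta_opt}: splitting the sum over the three active index classes, the saturated contributions from $\overline\nS$ and $\underline\nS$ give the constant $\sum_{\overline\nS} m_{ij}\overline\Delta_{ij} + \sum_{\underline\nS} m_{ij}\underline\Delta_{ij}$, while the unsaturated class $\E_p\setminus(\overline\nS\cup\underline\nS)$ contributes $\theta_{\opt}\sum m_{ij}^2$. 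Hence on any interval of $\epsilon$ where the index sets $\overline\nS(\epsilon,M)$, $\underline\nS(\epsilon,M)$ are locally constant,
\[
\eta(\epsilon,M) = \text{const} + \theta_{\opt}(\epsilon,M)\sum_{(i,j)\in\E_p\setminus(\overline\nS\cup\underline\nS)} m_{ij}^2 .
\]
Differentiating $\theta_{\opt}$ from~\eqref{def:theta} (which on such an interval has a fixed numerator-minus-$\epsilon^2$ and fixed denominator $S := \sum m_{ij}^2$, so $\theta_{\opt}^2 S = \epsilon^2 - (\text{const})$) gives $2\theta_{\opt}\theta_{\opt}' S = 2\epsilon$, i.e. $\theta_{\opt}' S = \epsilon/\theta_{\opt}$, and therefore $\partial_\epsilon \eta = \theta_{\opt}' S = \epsilon/\theta_{\opt}$, exactly the claimed formula.

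Next I would handle the global Lipschitz claim. Since $\epsilon\mapsto\Delta_{\opt}(\epsilon)$ is continuous (it is even locally Lipschitz via Proposition~\ref{prop:solution_to_optimization} and Lemma~\ref{lem:continuous_dependence}, applied now in the $\epsilon$-variable rather than the $M$-variable), and $\eta(\epsilon,M)=\langle\Delta_{\opt}(\epsilon),M\rangle$ is a continuous linear functional of $\Delta_{\opt}(\epsilon)$, it follows that $\eta$ is locally Lipschitz in $\epsilon$. Alternatively, one can note $\eta(\epsilon,M) = \max_{\Delta\in\nH\cap\bB_\epsilon}\langle\Delta,M\rangle$ is a pointwise maximum over a nested family of compact convex sets, and such support-function-type maxima are monotone and Lipschitz in the radius $\epsilon$ on compacta. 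Either way, by Rademacher's theorem $\eta(\cdot,M)$ is differentiable a.e., and at each point of differentiability the index sets $\overline\nS,\underline\nS$ are locally constant (because $\theta_{\opt}$ is continuous by Proposition~\ref{prop:solution_to_optimization} and a change of index set happens only when some $m_{ij}\theta_{\opt}$ crosses a saturation threshold, which is a measure-zero set of $\epsilon$; at a differentiability point one can pick a one-sided neighborhood on which the sets are frozen), so the piecewise computation above applies and yields $\partial_\epsilon\eta(\epsilon,M)=\epsilon/\theta_{\opt}(\epsilon,M)$.

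The main obstacle is the careful bookkeeping around the points where the index sets of saturation change: one must confirm that these transition values of $\epsilon$ form a discrete (hence null) set, that $\theta_{\opt}$ and $\eta$ are continuous across them, and that the denominator $\sum_{(i,j)\in\E_p\setminus(\overline\nS\cup\underline\nS)} m_{ij}^2$ stays strictly positive throughout (guaranteed by Assumption~\ref{ass:1}, which forces the unsaturated class to contain an index with $m_{ij}\neq 0$). I would also remark that when $\epsilon$ increases through a threshold, an edge enters $\overline\nS$ or $\underline\nS$ precisely when $m_{ij}\theta_{\opt}$ equals $\overline\Delta_{ij}$ or $\underline\Delta_{ij}$, so both the old and new expressions for $\eta$ agree there, giving continuity — and the potential nondifferentiability at such thresholds is exactly why the statement is phrased "wherever it is differentiable." This finishes the proof.
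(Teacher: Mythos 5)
Your proposal is correct and follows essentially the same route as the paper's proof: both express $\eta(\epsilon,M)$ as a constant plus $\theta_{\opt}(\epsilon,M)\sum_{(i,j)\in\E_p\setminus(\overline\nS\cup\underline\nS)}m_{ij}^2$ on a one-sided neighborhood where the index sets of saturation are frozen, differentiate $\theta_{\opt}$ from \eqref{def:theta} to get $\epsilon/\theta_{\opt}$, and obtain Lipschitzness by composing the Lipschitz maps $\epsilon\mapsto\theta_{\opt}\mapsto\Delta_{\opt}\mapsto\langle\Delta_{\opt},M\rangle$. The paper pins down the local constancy of the index sets slightly more explicitly (continuity of $\theta_{\opt}$ gives one inclusion, monotonicity of saturation in $\epsilon$ gives the reverse), but this is the same idea as your "frozen one-sided neighborhood" step.
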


The proof is in the Appendix. If the perturbation is unstructured and
without saturation constraints, Lemma~\ref{lem:derivative} simplifies
to $ \frac{\partial}{\partial\epsilon}\eta(\epsilon,M) = \Vert
M\Vert_F$, recovering~\cite[Lemma 4]{NG:16}.
Using Lemma~\ref{lem:eigenvalue_perturbation}, when
$\Lambda_{\epsilon,\nH}(A)$ has a simple, unique rightmost eigenvalue
associated with the worst-case perturbation $\Delta_{\opt}(\epsilon)$
with RP-compatible right and left eigenvectors
$x(\epsilon),y(\epsilon)$,
\begin{align*}
  \frac{d}{d\epsilon}\alpha_{\epsilon,\nH}(A) &=
  \frac{d}{dt}\alpha(A+\Delta_{\opt}(\epsilon+t))\big\vert_{t=0}
  \\
  &=\re\Big(\frac{y(\epsilon)^*(\frac{d}{dt}
    \Delta_{\opt}(t)|_{t=\epsilon})x(\epsilon)}{y(\epsilon)^*x(\epsilon)}\Big)
  \\
  &=\frac{1}{y(\epsilon)^*x(\epsilon)}\langle \frac{d}{dt}
  \Delta_{\opt}(t)|_{t=\epsilon}, \re(y(\epsilon)x(\epsilon)^*)\rangle
  \\
  &=\frac{1}{y(\epsilon)^*x(\epsilon)}\frac{d}{dt}\langle
  \Delta_{\opt}(t), \re(y(\epsilon)x(\epsilon)^*)\rangle|_{t=\epsilon}.
\end{align*}
From Theorem~\ref{thm:1},
$\Delta_{\opt}(\epsilon)$ is the maximizer
of~\eqref{Optimization_problem} with $M =
\re(y(\epsilon)x(\epsilon)^*)$. Thus, using
Lemma~\ref{lem:derivative}, we conclude
\begin{equation}\label{eqn:derivative}
  \frac{d}{d\epsilon}\alpha_{\epsilon,\nH}(A) = \frac{1}{y(\epsilon)^*x(\epsilon)}
  \frac{\epsilon}{\theta_{\opt}(\epsilon,\re(y(\epsilon)x(\epsilon)^*))}.  
\end{equation}

\subsection{Iterative computation of structured stability radius}

We use Newton's method~\cite{NG:16} 
\begin{equation}\label{newton}
\epsilon_{l+1}=\epsilon_{l} -
  \frac{\alpha_{\epsilon_l,\nH}(A)}{\frac{d}{d\epsilon}
    \alpha_{\epsilon,\nH}(A)\big|_{\epsilon=\epsilon_l}}   
\end{equation}
to find the zero-crossing of
$\epsilon \mapsto \alpha_{\epsilon,\nH}(A)$, which by
definition~\eqref{def:cosntrained_stability_radius} is the stability
radius $r_{\nH}(A)$. Note from our discussion above that
$\frac{d}{d\epsilon}\alpha_{\epsilon,\nH}(A)$ is not well defined when
$\Lambda_{\epsilon,\nH}(A)$ has multiple rightmost eigenvalues. In
this case, we compute the value of the right-hand of
\eqref{eqn:derivative} for each rightmost eigenvalue and take the
minimum to be the ``gradient'' (which is in fact the subgradient of
$\epsilon\mapsto\alpha_{\epsilon,\nH}(A)$ with smallest
norm). Consequently, substituting~\eqref{eqn:derivative} into
\eqref{newton}, we update $\epsilon$ using
\begin{equation}\label{eqn:Newton}
  \epsilon_{l+1} = \epsilon_{l} -
  \frac{(y_l^*x_l)\theta_{\opt}(\epsilon_l,\re(y_lx_l^*))\alpha_{\epsilon_l,\nH}(A)}{\epsilon_l} 
\end{equation}
where $x_l,y_l$ denote the right and left eigenvectors associated with
the rightmost eigenvalue in $\Lambda_{\epsilon,\nH}(A)$ giving the
smallest value of right-hand side of \eqref{eqn:derivative}.  We also
observe from \eqref{eqn:Newton} that a fixed point of the iteration
corresponds to either $y^*_lx_l=0$, which is ruled out if the
corresponding rightmost eigenvalue is simple, or
$\alpha_{\epsilon_l,\nH}(A)=0$, in which case the iteration has found
the structured stability radius.  Algorithm~\ref{alg:stability_radius}
summarizes the procedure written in pseudocode.
\begin{algorithm}[htb]
  \caption{Computation of structured stability
    radius}\label{alg:stability_radius}
  \algorithmicrequire $A,\epsilon_0,\nH,\tol_{\Delta},\tol_{\alpha},\zeta$\\
  \algorithmicensure $r$
  \begin{algorithmic}[1]
    \Repeat{ $l=0,1,\cdots$}
    \State Pick $\Delta_{\textnormal{init}}\in\nH\cap\bB_{\epsilon_l}$
    \State\label{alg:alg1_in_alg2} Run
    Algorithm~\ref{alg:pseudospectral_abscissa} with inputs $A$,
    $\epsilon_l$, $\nH$, $\Delta_{\init}$, $\tol_{\alpha}$ and set the outputs $\Delta_l,\alpha_l,x_l,y_l,\theta_l$
    %
    %
    \State\label{alg:step_conditioned_newton}
    $\epsilon_{l+1}\gets\max\left\{\epsilon_{l}-\frac{(y_l^*x_l)\theta_l\alpha_l}{\epsilon_l},\zeta
      \epsilon_l\right\}$
    \Until{$|\alpha_l|\leq \tol_{\alpha}$}
    \State $r\gets\epsilon_l$
  \end{algorithmic}
\end{algorithm}
Notice that the matrix $\Delta_{\init}\in\nH\cap\bB_{\epsilon_l}$
selected in Step~2 is used as the initial guess of the worst-case
perturbation for Algorithm~\ref{alg:pseudospectral_abscissa}
(cf. Step~3). In our simulations, cf. Section~\ref{sec:example}, we
use either the zero matrix, a random matrix taken from the constraint
set or the result from the previous algorithm iteration.
In addition, in Step~\ref{alg:step_conditioned_newton}, we let
$\epsilon_{l+1}$ be lower bounded by $\zeta\epsilon_l$, for some
$\zeta\in(0,1)$, in order to prevent $\epsilon_{l+1}$ from becoming
negative. 


The next result characterizes the output of
Algorithm~\ref{alg:stability_radius}.

\begin{thm}[Error bound for
  Algorithm~\ref{alg:stability_radius}]\label{thm:iterative_error}
  Let $\lambda$ be a simple rightmost eigenvalue for the structured
  pseudospectrum corresponding to the structured stability radius,
  with RP-compatible right and left eigenvector pair $x$,~$y$.  Let
  $r^\dagger\in(0,1)$ and suppose that for each iteration of
  Algorithm~\ref{alg:stability_radius}, it holds that $r<r^\dagger$,
  where the parameter $r$ is defined by \eqref{def:r}. Further assume
  that for every iteration~$l$, the rightmost eigenvalue of
  $\Lambda_{\epsilon_l,\nH}(A)$ is simple and the initial guess
  $\Delta_{\init}$ in Step 2 is close enough to
  $\Delta_{\opt}(\epsilon_l)$ so that it falls in the region of
  convergence of Algorithm~\ref{alg:pseudospectral_abscissa}.
  Then, if Algorithm~\ref{alg:stability_radius} terminates at the
  $l_f$-th iteration, it holds that
  \begin{equation}\label{eqn:final_error_estimate}
    |\epsilon_{l_f}-r_{\nH}(A)| = O\left(\frac{\sqrt n
        r^\dagger\Vert\re(yx^*)\Vert_F}{(1-r^\dagger)y^*x}\tol_{\Delta}
      +\tol_{\alpha}\right). 
  \end{equation}
\end{thm}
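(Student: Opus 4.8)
The plan is to combine two sources of error: the error in solving each pseudospectral-abscissa subproblem (controlled by $\tol_{\Delta}$ via Theorem~\ref{thm:local_convergence}) and the error inherent in terminating Newton's method once $|\alpha_l|\leq\tol_{\alpha}$. I would set this up as a perturbed Newton iteration on the scalar function $\phi(\epsilon):=\alpha_{\epsilon,\nH}(A)$, whose zero-crossing is $r_{\nH}(A)$ by definition~\eqref{def:cosntrained_stability_radius}.

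First I would record the regularity of $\phi$ near $r_{\nH}(A)$: from the discussion preceding Lemma~\ref{lem:derivative}, $\phi$ is locally Lipschitz and increasing, and under the simplicity hypothesis on $\lambda$ it is differentiable at $r_{\nH}(A)$ with derivative given by~\eqref{eqn:derivative}, which is strictly positive (since $y^*x>0$, $\epsilon>0$, and $\theta_{\opt}>0$). Hence near the root $\phi$ behaves like $\phi(\epsilon)\approx\phi'(r_{\nH}(A))\,(\epsilon - r_{\nH}(A))$, so that $|\epsilon_l - r_{\nH}(A)| = O(|\alpha_{\epsilon_l,\nH}(A)|)$. This converts the termination test $|\alpha_l|\leq\tol_{\alpha}$ into the bound $|\epsilon_l - r_{\nH}(A)| = O(\alpha_{\epsilon_l,\nH}(A))$ once we know $\alpha_l$ is a good approximation of the true abscissa.

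Next I would quantify the two discrepancies at the terminating iteration $l_f$. By Theorem~\ref{thm:local_convergence} (whose hypotheses are exactly the ones assumed here: $r<r^\dagger<1$, simple rightmost eigenvalue, Assumption~\ref{ass:1}, good initialization), the output $\alpha_l$ of Algorithm~\ref{alg:pseudospectral_abscissa} satisfies
\begin{equation*}
  |\alpha_l - \alpha_{\epsilon_l,\nH}(A)| = \frac{\sqrt n\, r^\dagger \Vert\re(yx^*)\Vert_F}{(1-r^\dagger)\,y^*x}\,\tol_{\Delta} + O(\tol_{\Delta}^2),
\end{equation*}
so the quantity actually tested at termination, $|\alpha_{l_f}|\leq\tol_{\alpha}$, gives $|\alpha_{\epsilon_{l_f},\nH}(A)| \leq \tol_{\alpha} + \frac{\sqrt n r^\dagger\Vert\re(yx^*)\Vert_F}{(1-r^\dagger)y^*x}\tol_{\Delta}+O(\tol_{\Delta}^2)$. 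Combining with the Lipschitz-invertibility of $\phi$ at its root from the previous step yields
\begin{equation*}
  |\epsilon_{l_f} - r_{\nH}(A)| = O\!\left(\frac{\sqrt n\, r^\dagger\Vert\re(yx^*)\Vert_F}{(1-r^\dagger)\,y^*x}\,\tol_{\Delta} + \tol_{\alpha}\right),
\end{equation*}
which is~\eqref{eqn:final_error_estimate}. Along the way one must check that the clamping $\epsilon_{l+1}\gets\max\{\cdots,\zeta\epsilon_l\}$ in Step~\ref{alg:step_conditioned_newton} is inactive near convergence — since the unclamped Newton step $\to 0$ while $\zeta\epsilon_l$ stays bounded away from the root's neighborhood scale, the $\max$ eventually selects the Newton term, so the clamp does not affect the asymptotic error.

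The main obstacle I anticipate is making rigorous the claim that Newton's method actually converges to (a neighborhood of) $r_{\nH}(A)$ so that ``terminates at the $l_f$-th iteration'' is meaningful and the local linearization of $\phi$ is valid at $\epsilon_{l_f}$. This requires controlling the inexactness of the derivative estimate used in~\eqref{eqn:Newton}: the iteration uses $\theta_{\opt}(\epsilon_l,\re(y_lx_l^*))$ and $y_l^*x_l$ computed from the \emph{approximate} worst-case perturbation returned by Algorithm~\ref{alg:pseudospectral_abscissa}, not the true $\Delta_{\opt}(\epsilon_l)$, and one needs the Lipschitz dependence of $\theta_{\opt}$ (Proposition~\ref{prop:solution_to_optimization}) and of eigenvectors on the matrix entries to argue this perturbed Newton step still contracts toward the root up to an $O(\tol_{\Delta})$ floor. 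A secondary subtlety is the non-differentiability of $\phi$ when $\Lambda_{\epsilon,\nH}(A)$ has multiple rightmost eigenvalues; the hypothesis that every iterate has a simple rightmost eigenvalue, together with simplicity at the root itself, is what lets us sidestep the subgradient construction and treat $\phi$ as $C^1$ along the trajectory.
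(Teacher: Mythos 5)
Your proposal is correct and follows essentially the same route as the paper: a first-order inversion of $\epsilon\mapsto\alpha_{\epsilon,\nH}(A)$ at its root (valid since the derivative \eqref{eqn:derivative} exists and is positive there), combined with the triangle inequality $|\alpha_{\epsilon_{l_f},\nH}(A)|\leq|\alpha_{\epsilon_{l_f},\nH}(A)-\alpha_{l_f}|+|\alpha_{l_f}|$ using the bound \eqref{eqn:main_thm_last} from Theorem~\ref{thm:local_convergence} and the stopping criterion $|\alpha_{l_f}|\leq\tol_{\alpha}$. The extra concerns you raise (the clamp in Step~\ref{alg:step_conditioned_newton}, convergence of the inexact Newton iteration) are sidestepped in the paper by simply hypothesizing that the algorithm terminates at the $l_f$-th iteration, so they are not needed for the stated error bound.
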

\begin{proof}
  Since the rightmost eigenvalue is simple,
  $\frac{d}{d\epsilon}\alpha_{\epsilon,\nH}(A)\big|_{\epsilon=r_{\nH}(A)}$
  exists and is given by~\eqref{eqn:derivative}. Hence, using a
  first-order approximation on the inverse map of $\epsilon \mapsto
  \alpha_{\epsilon,\nH}(A)$,
  \begin{align}\label{eq:auxxx}
    \epsilon_{l_f} = r_{\nH}(A) + \frac{y^*x\theta^\dagger
      \alpha_{\epsilon_{l_f},\nH}(A)}{r_{\nH}(A)}+O(\alpha_{\epsilon_{l_f},\nH}(A)^2)
    ,
  \end{align}
  where $\theta^\dagger:=\theta_{\opt}(r_{\nH}(A),\re(yx^*))$ is given
  by \eqref{def:theta}. In addition,
  $|\alpha_{\epsilon_{l_f},\nH}(A)|\leq
  |\alpha_{\epsilon_{l_f},\nH}(A)-\alpha_{l_f}|+|\alpha_{l_f}|=\frac{\sqrt
    n r^\dagger\Vert\re(yx^*)\Vert_F}{(1-r^\dagger)y^*x}\tol_{\Delta}+
  O(\tol_{\Delta}^2)+\tol_{\alpha}$, where we have used the bound
  \eqref{eqn:main_thm_last} from Theorem~\ref{thm:local_convergence}.
  The result follows from using this fact in equation~\eqref{eq:auxxx}.
\end{proof}

%
Notice that $\l_f$ does not appear on the right-hand side
of~\eqref{eqn:final_error_estimate}, which indicates that the final
error does not depend on the total number of iterations. In other
words, the error introduced by using the approximate solution computed
by Algorithm~\ref{alg:pseudospectral_abscissa} at each iteration does
not accumulate.  We also make a final remark here that while in most
cases Newton's method has quadratic convergence rate, for some
particular initial guesses it may yield cyclic orbits and not
converge. Nevertheless, such cyclic orbits are unstable and in all our
simulations we observe Algorithm~\ref{alg:stability_radius} terminates
after a finite number of timesteps.


\section{Examples}\label{sec:example}
We illustrate here the use of the proposed algorithms to find the
structured pseudospectral abscissa and structured stability radius. In
all examples, we use the parameters $\epsilon_0 =1$,
$\tol_{\Delta}=\tol_{\alpha}=10^{-3}$, and $\zeta=0.1$. Our algorithms
are implemented in \texttt{MATLAB} on a personal computer with 4 cores
at 2.71GHz.

\subsection{Perturbation to edges of single node of sparse network
}\label{ex:polynomial}
Consider a 5-node network system with graph given in
Fig.~\ref{subfig:example_A}. Let system matrix in~\eqref{def:sys} be
given by
\begin{equation}
    A=
   \begin{pmatrix}
     0&     1&     0&     0&     0\\
     0&     0&     1&     0&     0\\
     0&     0&     0&     1&     0\\
     0&     0&     0&     0&     1\\
  -150&  -260&  -187&   -69&   -13
    \end{pmatrix},
\end{equation}
whose eigenvalues are $-2\pm i,-3,-3\pm i$. 
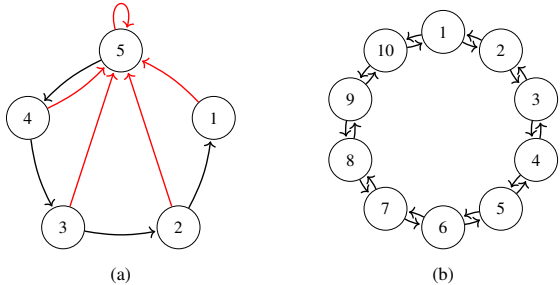
\begin{figure}[htb]
  \centering
  \tikzset{every picture/.style={scale=1}}
  \tikzset{every edge/.append style = {thick}}
  \begin{subfigure}[t]{0.47\columnwidth}
    \centering
    \scalebox{0.65}{
    \begin{tikzpicture}
        \node[state] at (0, 2)  (5) {5};
        \node[state] at (-1.9, .62)  (4) {4};
        \node[state] at (-1.18, -1.62)  (3) {3};
        \node[state] at (1.18, -1.62)  (2) {2};
        \node[state] at (1.9, .62)  (1) {1};

        \draw[every loop]
            (2) edge[bend right=10] node {} (1)
            (3) edge[bend right=10] node {} (2)
            (4) edge[bend right=10] node {} (3)
            (5) edge[bend right=10] node {} (4)
            (1) edge[color=red,bend right=10] node {} (5)
            (2) edge[color=red] node {} (5)
            (3) edge[color=red] node {} (5)
            (4) edge[color=red,bend right=10] node {} (5)
            (5) edge[color=red,loop above] node {} (5);
    \end{tikzpicture}
    }\caption{}\label{subfig:example_A}
  \end{subfigure}
  \begin{subfigure}[t]{0.47\columnwidth}
    \centering
    \scalebox{0.65}{
        \begin{tikzpicture}
        \node[state] at (0, 2)  (1) {1};
        \node[state] at (1.18, 1.62)  (2) {2};
        \node[state] at (1.9, .62)  (3) {3};
        \node[state] at (1.9, -.62)  (4) {4};
        \node[state] at (1.18, -1.62)  (5) {5};
        \node[state] at (0, -2)  (6) {6};
        \node[state] at (-1.18, -1.62)  (7) {7};
        \node[state] at (-1.9, -.62)  (8) {8};
        \node[state] at (-1.9, .62)  (9) {9};
        \node[state] at (-1.18, 1.62)  (10) {10};
        \draw[every loop]
            (1) edge[bend right=10] node {} (2)
            (2) edge[bend right=10] node {} (3)
            (3) edge[bend right=10] node {} (4)
            (4) edge[bend right=10] node {} (5)
            (5) edge[bend right=10] node {} (6)
            (6) edge[bend right=10] node {} (7)
            (7) edge[bend right=10] node {} (8)
            (8) edge[bend right=10] node {} (9)
            (9) edge[bend right=10] node {} (10)
            (10) edge[bend right=10] node {} (1)
            (1) edge[bend right=10] node {} (10)
            (2) edge[bend right=10] node {} (1)
            (3) edge[bend right=10] node {} (2)
            (4) edge[bend right=10] node {} (3)
            (5) edge[bend right=10] node {} (4)
            (6) edge[bend right=10] node {} (5)
            (7) edge[bend right=10] node {} (6)
            (8) edge[bend right=10] node {} (7)
            (9) edge[bend right=10] node {} (8)
            (10) edge[bend right=10] node {} (9);
    \end{tikzpicture}
    }\caption{}\label{subfig:example_B}
  \end{subfigure}
  \caption{Network graph of (a) Section~\ref{ex:polynomial} (only red
    edges are subject to perturbation) and (b)
    Section~\ref{sec:circulant} (without self-loops
    displayed).}\label{fig:example}
\end{figure}    
We consider additive perturbations to the edges of the node $5$, i.e.,
$\E_p=\{(5,i)\,:\, i = 1,\dots,5\}$. All rows of the perturbation
matrix are then zero, except for the last one which has entries
$\delta=(\delta_5,\dots, \delta_1)$.  We consider three scenarios: (i)
the entries $\delta_i$ are unconstrained, (ii) $\delta_i\leq 0$ for
all $i=1,\dots,5$; and (iii) $\delta_i\geq 0$, for all $i=1,\dots,5$.
Since $A$ is in controllable canonical form, its eigenvalues are the
roots of the 5-degree polynomial
$p_a(x)=x^5+a_1x^4+a_2x^3+a_3x^2+a_2x^1+a_0$, with $a=(a_1,\cdots,
a_5)=(13,\ 69,\ 187,\ 260,\ 150)$. Hence, the problem of determining
the structured stability radius of $A$ is equivalent to finding the
smallest perturbation on the non-leading coefficients of $p_a(x)$ such
that $p_{a-\delta}(x)$ becomes non-Hurwitz.  Even though the roots of
a high-degree polynomial are in general sensitive to its coefficients,
cf.~\cite{JW:63b}, the proposed algorithms efficiently compute their
outputs for~$A$.

\subsubsection{No constraints on perturbation}

Consider the case when $\nH=\{\Delta\in\R^{n\times
  n}:\Delta_{ij}=0\mbox{ if }(i,j)\not\in\E_p\}$.
Algorithm~\ref{alg:stability_radius}, using $\Delta_{\init}=0^{5\times
  5}$ in Step~2, finds $10.1465$ as the structured stability radius,
with worst-case perturbation $\delta_{\opt}=(10.1341,\ 0.4743,\
-0.1571,\ -0.0074,\ 0.0024)$.  Fig.~\ref{subfig:abnormal_ps_10} shows
the corresponding structured $\epsilon$-pseudospectrum of $A$ using
random samples from $\nH\cap\bB_\epsilon$.
Note that structured pseudospectrum is indeed touching the imaginary
axis, showing that the computed value is the true structured stability radius. We also note that the first element
$(\delta_{\opt})_1$ has the largest magnitude in the worst-case
perturbation, indicating that the element $A_{5,5}$ is the most
critical for preserving network stability (equivalently, the
coefficient $a_1$ in $p_a(x)$ is the most critical for preserving the
Hurwitzness of the polynomial).  If we eliminate the possibility of
perturbing $A_{5,5}$, our algorithm computes the new structured
stability radius $94.3512$, which is significantly larger.  
Fig.~\ref{subfig:CCF_csr1} shows the locally Lipschitzness nature of
$\epsilon \mapsto  \alpha_{\epsilon,\nH}(A)$.
The non-smooth corners in the map $\epsilon\to\alpha_{\epsilon,\nH}(A)$
corresponds to case when $\Lambda_{\epsilon,\nH}(A)$ has multiple rightmost eigenvalues. The evolution of $\Lambda_{\epsilon,\nH}(A)$ with respect to $\epsilon$ is appreciated in
Figs.~\ref{subfig:abnormal_ps_5} (with $\epsilon =5$) through
\ref{subfig:abnormal_ps_35} (with $\epsilon = 35$).


\begin{figure}[htb]
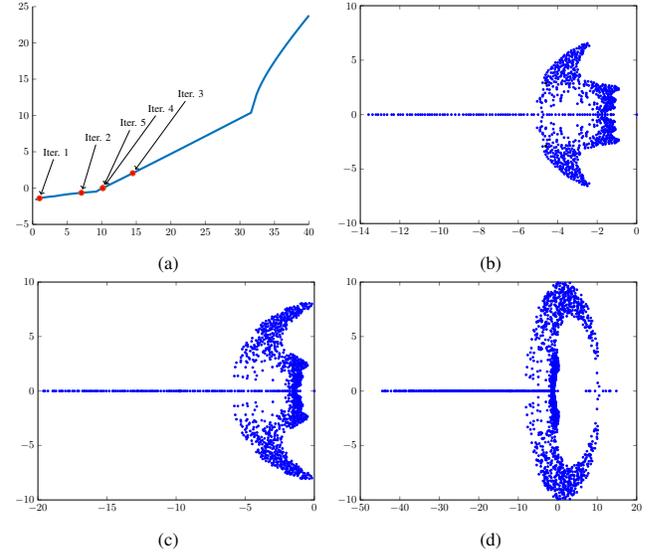

  \tikzset{every picture/.style={scale=0.8}}
  \begin{subfigure}[t]{0.47\columnwidth}
    \scalebox{0.5}{
%
%
\definecolor{mycolor1}{rgb}{0.00000,0.44700,0.74100}%
\definecolor{mycolor2}{rgb}{0.85000,0.32500,0.09800}%
\begin{tikzpicture}

\begin{axis}[%
width=4.521in,
height=3.566in,
at={(0.758in,0.481in)},
scale only axis,
xmin=0,
xmax=40,
ymin=-5,
ymax=25,
axis x line*=bottom,
axis y line*=left,
axis background/.style={fill=white},
legend style={legend cell align=left, align=left, draw=white!15!black}
]
\addplot [color=mycolor1,line width=2pt]
  table[row sep=crcr]{%
0.4	-1.60007222475609\\
0.8	-1.45455223608217\\
1.2	-1.36099507471719\\
1.6	-1.29148452915376\\
2	-1.23604097786837\\
2.4	-1.18988188908079\\
2.8	-1.15032725490218\\
3.2	-1.11571804482418\\
3.6	-1.0535835830631\\
4	-0.991225938590774\\
4.4	-0.934154922800644\\
4.8	-0.881595534320789\\
5.2	-0.832934201778809\\
5.6	-0.787675998581939\\
6	-0.745415230592963\\
6.4	-0.705814938672672\\
6.8	-0.66859177637354\\
7.2	-0.633505201643481\\
7.6	-0.60034904100463\\
8	-0.568945174408982\\
8.4	-0.539138684414667\\
8.8	-0.510793779572808\\
9.2	-0.483790967806488\\
9.6	-0.255143106422553\\
10	-0.0684861087490899\\
10.4	0.118740526769415\\
10.8	0.306493011903213\\
11.2	0.494732371177321\\
11.6	0.68342375230352\\
12	0.872535870637569\\
12.4	1.06204053799864\\
12.8	1.251912268802\\
13.2	1.44212794568374\\
13.6	1.63266653885667\\
14	1.82350885800582\\
14.4	2.0146373494679\\
14.8	2.20603591397598\\
15.2	2.39768975158557\\
15.6	2.58958522584765\\
16	2.78170974603962\\
16.4	2.97405166332044\\
16.8	3.16660017952937\\
17.2	3.35934526696783\\
17.6	3.55227759714283\\
18	3.74538847768378\\
18.4	3.93866979633172\\
18.8	4.13211397039643\\
19.2	4.32571390230909\\
19.6	4.51946293931254\\
20	4.71335483725228\\
20.4	4.90738372808115\\
20.8	5.10154409048079\\
21.2	5.29583072323614\\
21.6	5.4902387213257\\
22	5.68476345367583\\
22.4	5.87940054363631\\
22.8	6.07414585079513\\
23.2	6.26899545421603\\
23.6	6.46394563747803\\
24	6.65899287481873\\
24.4	6.8541338183708\\
24.8	7.04936528662321\\
25.2	7.24468425344836\\
25.6	7.4400878384477\\
26	7.6355732978003\\
26.4	7.83113801570324\\
26.8	8.02677949678733\\
27.2	8.22249535879339\\
27.6	8.41828332590649\\
28	8.61414122264396\\
28.4	8.81006696802465\\
28.8	9.00605857028234\\
29.2	9.2021141218407\\
29.6	9.39823179474475\\
30	9.59440983627455\\
30.4	9.79064656495765\\
30.8	9.98694036676353\\
31.2	10.1832896916199\\
31.6	10.3796930501109\\
32	11.6299234195873\\
32.4	12.8966940063473\\
32.8	13.8029928249791\\
33.2	14.5781080533314\\
33.6	15.279863619083\\
34	15.9334695654607\\
34.4	16.5525726407576\\
34.8	17.1455260793297\\
35.2	17.7178708131152\\
35.6	18.273497820147\\
36	18.8152584319014\\
36.4	19.3453125252047\\
36.8	19.8653401812941\\
37.2	20.3766769697634\\
37.6	20.8804040128891\\
38	21.3774100057902\\
38.4	21.8684351590232\\
38.8	22.3541031005425\\
39.2	22.8349445295209\\
39.6	23.3114150805241\\
40	23.783909035953\\
};

\addplot[only marks, mark=*, mark options={fill=red}, mark size=3.000pt, draw=mycolor2] table[row sep=crcr]{%
x	y\\
1	-1.4037496320335\\
7.06305294082456	-0.645290620727375\\
14.4884330127315	2.05692951264711\\
10.1880542422128	0.0194676429912368\\
10.1464594119779	2.97421131612019e-06\\
};

\node[right, align=left]
at (axis cs:1,5) (l1) {Iter. 1};
\node[right, align=left]
at (axis cs:7.063,7) (l2) {Iter. 2};
\node[right, align=left]
at (axis cs:20.488,13) (l3) {Iter. 3};
\node[right, align=left]
at (axis cs:16.188,11) (l4) {Iter. 4};
\node[right, align=left]
at (axis cs:12.146,9) (l5) {Iter. 5};

\node[]
at (axis cs:1,-1.404) (1) {};
\node[]
at (axis cs:7.063,-0.645) (2) {};
\node[]
at (axis cs:14.488,2.057) (3) {};
\node[]
at (axis cs:10.188,0.02) (4) {};
\node[]
at (axis cs:10.146,0) (5) {};

\draw [->] (l1) -- (1);
\draw [->] (l2) -- (2);
\draw [->] (l3) -- (3);
\draw [->] (l4) -- (4);
\draw [->] (l5) -- (5);

\end{axis}

\end{tikzpicture}
  \end{subfigure}
  \begin{subfigure}[t]{0.47\columnwidth}
    \scalebox{0.5}{\input{figure/CCF_cpa_5}}\caption{}\label{subfig:abnormal_ps_5}
  \end{subfigure}
  
  \begin{subfigure}[t]{0.47\columnwidth}
    \scalebox{0.5}{\input{figure/CCF_cpa_10}}\caption{}\label{subfig:abnormal_ps_10}
  \end{subfigure}
  \begin{subfigure}[t]{0.47\columnwidth}
    \scalebox{0.5}{\input{figure/CCF_cpa_35}}\caption{}\label{subfig:abnormal_ps_35}
  \end{subfigure}
  \caption{Structured stability radius without constraints on
    perturbation. (a) Plot of $\epsilon \mapsto
    \alpha_{\epsilon,\nH}(A)$ generated by uniformly sampling
    $\epsilon$ in $[0,40]$ with stepsize $0.1$ and running
    Algorithm~\ref{alg:pseudospectral_abscissa} with
    $\Delta_{\init}=0^{5\times 5}$ for each value. The red dots
    correspond to the values $\alpha_l$ computed by
    Algorithm~\ref{alg:stability_radius}, which terminates after 5
    iterations. Generated with 1000 random samples of
    $\Delta\in\nH\cap\bB_\epsilon$ in each case, the figures (b) (c)
    and (d) are the estimated $\Lambda_{\epsilon,\nH}(A)$ with
    $\epsilon=5,10.1465$ (the structured stability radius computed by
    Algorithm~\ref{alg:stability_radius}) and $35$,
    respectively.}\label{fig:CCF_1}
\end{figure}

\subsubsection{Non-positive constraints on perturbation}

We next consider the case when $\nH=\{\Delta\in\R^{n\times
  n}:\Delta_{ij}=0 \mbox{ if }(i,j)\not\in\E_p \text{ and }
\Delta_{5j} \le 0 \, , j \in \until{5}\}$. This corresponds to
increasing the value of the coefficients of $p_a(x)$.
Algorithm~\ref{alg:stability_radius} finds $24.1733$ as the structured
stability radius, with a worst-case perturbation
$\delta_{\opt}=(-23.9583,\ 0,\ 0,\ -2.6928,\ -1.4657)$.
Fig.~\ref{subfig:constrained_abnormal_sr} shows the corresponding
structured $\epsilon$-pseudospectrum of $A$ using random samples from
$\nH\cap\bB_\epsilon$.
In this case, we observe that using $\Delta_{\init}=0^{5\times 5}$ in
Step 2 of Algorithm~\ref{alg:stability_radius} does not always lead to
convergence to a global optimizer of
problem~\eqref{eqn:1st_order_necessary_condition} when executing
Algorithm~\ref{alg:pseudospectral_abscissa} (cf.  blue curve in
Fig.~\ref{fig:CCF_2}). Instead, we 
make the selection
\begin{equation}\label{Delta_init_selection}
  \Delta_{\init}= \min_{\Delta\in
    \nH\cap\bB_{\epsilon_l}}\Vert\Delta-\Delta_{l-1}\Vert_F ,
\end{equation}
which takes the previous estimated worst-case perturbation as an
initial guess corresponding to the updated value of structured
pseudospectral abscissa.
In Fig.~\ref{fig:CCF_2}, we observe that with the
selection~\eqref{Delta_init_selection}, the first and second
iterations of Algorithm~\ref{alg:stability_radius} correspond to local
rightmost eigenvalues, but to global ones from the 3rd iteration
onwards, thereby allowing the algorithm to determine the structured
stability radius.

\begin{figure}[htb]
  \tikzset{every picture/.style={scale=0.8}}
  \begin{subfigure}[t]{0.47\columnwidth}
    \scalebox{.5}{
%
%
\definecolor{mycolor1}{rgb}{0.00000,0.44700,0.74100}%
\definecolor{mycolor2}{rgb}{0.85000,0.32500,0.09800}%
\definecolor{mycolor3}{rgb}{0.92900,0.69400,0.12500}%
\begin{tikzpicture}

\begin{axis}[%
width=4.521in,
height=3.566in,
at={(0.758in,0.481in)},
scale only axis,
xmin=0,
xmax=40,
ymin=-2,
ymax=0.5,
xtick={0,10,20,30,40},
ytick={-2,-1,5,-1,-0,5,0,0.5},
axis x line*=bottom,
axis y line*=left,
axis background/.style={fill=white},
legend style={legend cell align=left, align=left, draw=white!15!black}
]
\addplot [color=mycolor1, line width = 2pt]
  table[row sep=crcr]{%
0.4	-1.78757394251049\\
0.8	-1.66117300147183\\
1.2	-1.57194911968303\\
1.6	-1.50191255730589\\
2	-1.44381362719659\\
2.4	-1.39394856394339\\
2.8	-1.35014589821526\\
3.2	-1.31098433409552\\
3.6	-1.27557917725932\\
4	-1.24324391196309\\
4.4	-1.21342267582135\\
4.8	-1.18573090097723\\
5.2	-1.15994981887984\\
5.6	-1.1357322122914\\
6	-1.11296009884795\\
6.4	-1.09144671850487\\
6.8	-1.07106200507452\\
7.2	-1.0516837566781\\
7.6	-1.03321982150599\\
8	-1.01558710079657\\
8.4	-0.99871359267527\\
8.8	-0.982536473655808\\
9.2	-0.966997184589799\\
9.6	-0.952053896608568\\
10	-0.937659841067547\\
10.4	-0.923776390833147\\
10.8	-0.910383369448115\\
11.2	-0.897406572941307\\
11.6	-0.884860959873541\\
12	-0.872709373572809\\
12.4	-0.860923657627748\\
12.8	-0.849484362435623\\
13.2	-0.838369689146265\\
13.6	-0.827570502519022\\
14	-0.817062323112067\\
14.4	-0.806832644040838\\
14.8	-0.796865223852943\\
15.2	-0.787151883002913\\
15.6	-0.777688950497893\\
16	-0.768435507623964\\
16.4	-0.7594169334127\\
16.8	-0.750598650947147\\
17.2	-0.74197998509483\\
17.6	-0.733544709505934\\
18	-0.725308108226812\\
18.4	-0.71723958876042\\
18.8	-0.70934229505464\\
19.2	-0.701602369417868\\
19.6	-0.694013823684903\\
20	-0.686582609912006\\
20.4	-0.679296047335109\\
20.8	-0.672148896264663\\
21.2	-0.66514286879377\\
21.6	-0.658253297110248\\
22	-0.651502226152511\\
22.4	-0.644865611338309\\
22.8	-0.638346137222306\\
23.2	-0.631942000886932\\
23.6	-0.625645420543367\\
24	-0.619455022407916\\
24.4	-0.613367504935724\\
24.8	-0.607377774431948\\
25.2	-0.601488653574404\\
25.6	-0.595691436117727\\
26	-0.589977743047229\\
26.4	-0.584365831034272\\
26.8	-0.578834162914938\\
27.2	-0.573385991272437\\
27.6	-0.56801360949209\\
28	-0.562725741755443\\
28.4	-0.557514741671521\\
28.8	-0.55238384574109\\
29.2	-0.547320474695793\\
29.6	-0.5423280738194\\
30	-0.537399669076519\\
30.4	-0.532543853548626\\
30.8	-0.527753760639812\\
31.2	-0.523031785966799\\
31.6	0.110366623921405\\
32	0.115166888347182\\
32.4	0.119880304053584\\
32.8	0.124499715695013\\
33.2	0.129031544134861\\
33.6	0.13347757820651\\
34	0.137841057906153\\
34.4	0.142123516318978\\
34.8	0.146327503795068\\
35.2	0.15045505893419\\
35.6	0.15450857328122\\
36	0.158489829153597\\
36.4	0.162398035334728\\
36.8	0.166240649190103\\
37.2	0.170019535546912\\
37.6	0.173730739982241\\
38	0.177378775788865\\
38.4	0.180965275796061\\
38.8	0.184491820192744\\
39.2	0.187959930061112\\
39.6	0.191367469876928\\
40	0.194723105485505\\
};

\addplot[only marks, mark=*, mark options={fill=red}, mark size=3.000pt, draw=mycolor2] table[row sep=crcr]{%
x	y\\
1	-1.61345427760348\\
8.27388563512289	-1.00397428002172\\
32.1188188994419	0.116578774171014\\
22.2574002228109	-0.0370718910606369\\
24.0621301353908	-0.00203887009047481\\
24.1733033615672	-1.19930515497108e-05\\
};

\addplot [color=mycolor3, line width = 2pt]
  table[row sep=crcr]{%
0.4	-1.63206211014044\\
0.8	-1.50271617431412\\
1.2	-1.42588780041284\\
1.6	-1.37107123504165\\
2	-1.32845255937657\\
2.4	-1.29040767579865\\
2.8	-1.20557553385461\\
3.2	-1.13019500515516\\
3.6	-1.06243632317354\\
4	-1.00096941388357\\
4.4	-0.944779479564179\\
4.8	-0.893134214929781\\
5.2	-0.845377819464133\\
5.6	-0.801029487435658\\
6	-0.759684267241802\\
6.4	-0.721005902502867\\
6.8	-0.684708645010838\\
7.2	-0.650554857813494\\
7.6	-0.618337632919275\\
8	-0.587878925727987\\
8.4	-0.55902401494395\\
8.8	-0.531637776130619\\
9.2	-0.505599747094973\\
9.6	-0.480804149588649\\
10	-0.457158307807254\\
10.4	-0.434576544145677\\
10.8	-0.412984579540817\\
11.2	-0.392313383106873\\
11.6	-0.372502108429931\\
12	-0.353494854118443\\
12.4	-0.335240851403388\\
12.8	-0.317693769252428\\
13.2	-0.300811203012581\\
13.6	-0.284554475369121\\
14	-0.268887749920825\\
14.4	-0.253778067639555\\
14.8	-0.239195273735656\\
15.2	-0.225111097853397\\
15.6	-0.211499598360878\\
16	-0.198336549571954\\
16.4	-0.185599470498211\\
16.8	-0.173267584048802\\
17.2	-0.161321075129125\\
17.6	-0.149737108885916\\
18	-0.13851293276811\\
18.4	-0.12761819523043\\
18.8	-0.117036838088099\\
19.2	-0.106772301025527\\
19.6	-0.0967938372833723\\
20	-0.0870962386626565\\
20.4	-0.0776648443068286\\
20.8	-0.0684900574592475\\
21.2	-0.0595614178385994\\
21.6	-0.0508690393763068\\
22	-0.0424035699087809\\
22.4	-0.0341561622012239\\
22.8	-0.026118410656077\\
23.2	-0.0182823662988269\\
23.6	-0.0106404968300376\\
24	-0.00318557677156411\\
24.4	0.004089177273841\\
24.8	0.0111902404980072\\
25.2	0.0181238337057639\\
25.6	0.0248970378311135\\
26	0.0315129920345852\\
26.4	0.0379782964972853\\
26.8	0.044301594539942\\
27.2	0.0504759938167553\\
27.6	0.0565190957079323\\
28	0.0624306437077397\\
28.4	0.0682149264074721\\
28.8	0.0738759913089304\\
29.2	0.0794222648613054\\
29.6	0.084848443266686\\
30	0.0901626169786607\\
30.4	0.095368277139027\\
30.8	0.100468801903669\\
31.2	0.105467336771159\\
31.6	0.110366821286474\\
32	0.115170228582928\\
32.4	0.119880304602222\\
32.8	0.124499943804553\\
33.2	0.129031635322523\\
33.6	0.13347787930893\\
34	0.137840752967766\\
34.4	0.142123242567616\\
34.8	0.146327270081592\\
35.2	0.150454999053217\\
35.6	0.154508512026083\\
36	0.1584898179115\\
36.4	0.162400853852016\\
36.8	0.166243486014856\\
37.2	0.170019532873116\\
37.6	0.17373073148302\\
38	0.177378767787943\\
38.4	0.180965268274006\\
38.8	0.184491812064927\\
39.2	0.187956286927431\\
39.6	0.191367463013626\\
40	0.194723101305383\\
};

\node[right, align=left]
at (axis cs:1,-1.613) (l1) {Iter. 1};
\node[right, align=left]
at (axis cs:8.274,-1.004) (l2) {Iter. 2};
\node[right, align=left]
at (axis cs:32.119,0.117) (l3) {Iter. 3};
\node[right, align=left]
at (axis cs:12.257,0) (l4) {Iter. 4};
\node[right, align=left]
at (axis cs:20.062,0.2) (l5) {Iter. 5};
\node[right, align=left]
at (axis cs:25,0.3) (l6) {Iter. 6};

 \node[]
at (axis cs:22.257,-0.037) (4) {};
\node[]
at (axis cs:24.062,-0.002) (5) {};
\node[]
at (axis cs:24.173,0.0) (6) {};

\draw [->] (l4) -- (4);
\draw [->] (l5) -- (5);
\draw [->] (l6) -- (6);

\end{axis}

\end{tikzpicture}
  \end{subfigure}
  \begin{subfigure}[t]{0.47\columnwidth} 
  \scalebox{.5}{\input{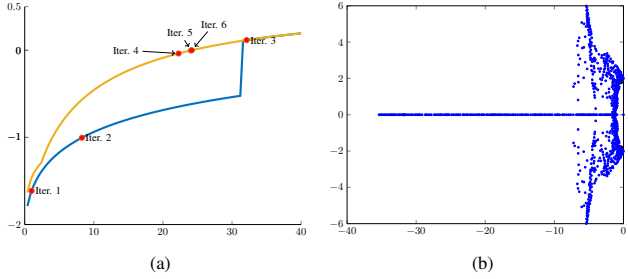}}\caption{}\label{subfig:constrained_abnormal_sr}
  \end{subfigure}
  \caption{Structured stability radius with non-positive
    constraints on perturbation. (a) The blue (respectively, yellow)
    curve corresponds to the output of
    Algorithm~\ref{alg:pseudospectral_abscissa} with uniform samples
    of $\epsilon$ in $[0,40]$ with stepsize $0.1$ and the initial
    guess $\Delta_{\init}=0^{5\times 5}$ (respectively, 1000 random
    $\Delta_{\init}\in\nH\cap\bB_\epsilon$ and taking the maximum).
    The red dots correspond to the values $\alpha_l$ of the stability
    radius computed by Algorithm~\ref{alg:stability_radius}, which
    terminates after 6 iterations; (b) Estimated
    $\Lambda_{\epsilon,\nH}(A)$ with $\epsilon=24.1733$, generated
    with 1000 random samples of
    $\Delta\in\nH\cap\bB_\epsilon$.}\label{fig:CCF_2}
\end{figure}

\subsubsection{Non-negative constraints on perturbation}

Lastly, the case when $\nH=\{\Delta\in\R^{n\times n}:\Delta_{ij}=0
\mbox{ if }(i,j)\not\in\E_p \text{ and } \Delta_{5j} \ge 0 \, , j \in
\until{5}\}$ yields almost identical result as the scenario without
constraints, with $10.1478$ as the structured stability radius and
$\delta_{\opt}=(10.1367,\ 0.474,\ 0,\ 0,\ 0.0024)$ as worst-case
perturbation.

\subsection{Circulant network}\label{sec:circulant}
Consider a network system of $10$ agents with circulant graph given in
Fig.~\ref{subfig:example_A}.  Each agent's state is 1-dimensional and
obeys the simple dynamics $\dot x_i=u_i$, where $u_i$ is the control
input.  Here, we consider
\begin{equation}\label{circular_u}
  u_i=x_{(i+1)}-x_{(i-1)}-0.1x_i,
\end{equation}
where we abuse notation by identifying $(i+1)=1$ if $i=10$ and
$(i-1)=10$ if $i=1$.  The closed-loop system is then
\begin{equation}
  \dot x_i=-0.1x_i+x_{(i+1)}-x_{(i-1)},
\end{equation}
which is a typical formation control problem of agents running in a
circle when the state $x_i$ corresponds to the angle of agent $i$ with
respect to the center of rotation.  The corresponding $10\times 10$
circulant matrix $A$ is
\begin{equation}
  A=\begin{pmatrix}
    -0.1&1 && &-1\\
    -1&-0.1&1\\
    &\ddots&\ddots&\ddots\\
    & & -1&-0.1& 1\\
    1 & & &-1&-0.1
  \end{pmatrix}    .
\end{equation}
and the network graph is given in Fig.~\ref{subfig:example_B}.

Suppose the first agent is compromised by an adversary so that instead
of~\eqref{circular_u}, it implements $u_1 =
(1+\delta_1)x_2-(1-\delta_2)x_{10}-(0.1-\delta_3)x_1$. This
corresponds to perturbations in the non-zero elements of the first row
of~$A$. If $\delta=(\delta_1,\delta_2,\delta_3)$ is unconstrained,
Algorithm~\ref{alg:stability_radius} initialized at
$\Delta_{\init}=0^{10\times 10}$ finds the structured stability radius
$0.4727$ and a worst-case perturbation
$\delta_{\opt}=(-0.0889,0.0889,0.4556)$. Fig.~\ref{subfig:1_agent_ps}
shows the structured pseudospectrum.

If the perturbation is constrained with $\overline\Delta_{1,1}=0.1$,
$\underline\Delta_{1,2}=-1$, and $\overline\Delta_{1,10}=1$, the sign of
each term in the control law~\eqref{circular_u} is preserved and
Algorithm~\ref{alg:stability_radius} initialized with
$\Delta_{\init}=0^{10\times 10}$ finds the structured stability radius
$1.4177$ and a worst-case perturbation
$\delta_{\opt}=(0.1,-1,1)$. With this worst-case perturbation, the
control $u_1$ vanishes and the network becomes marginally stable.
This can also be seen in Fig.~\ref{subfig:1_agent_ps_c}, where the
structured pseudospectrum touches the imaginary axis.

\begin{figure}[htb]
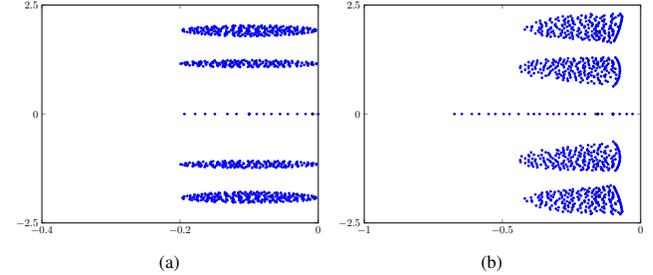

  \tikzset{every picture/.style={scale=0.8}}
\begin{subfigure}[t]{0.47\columnwidth}
    \scalebox{0.5}{\input{figure/1_agent_ps}}\caption{}\label{subfig:1_agent_ps}
  \end{subfigure}
  \begin{subfigure}[t]{0.47\columnwidth}
    \scalebox{0.5}{\input{figure/1_agent_ps_c}}\caption{}\label{subfig:1_agent_ps_c}
  \end{subfigure}
  \caption{Structured pseudospectra of the circulant network when one
    agent is subject to adversaries, generated by 2000 random
    sampling of $\Delta\in\nH\cap\bB_\epsilon$. (a) $\epsilon=0.4727$
    and the perturbation is unconstrained. (b) $\epsilon=1.4177$ and the
    perturbation is sign preserving.}\label{fig:circulant_1_agent}
\end{figure}

In scenarios where several, instead of just one, consecutive agents
are compromised, Fig.~\ref{subfig:agents_vs_sr} shows the stability
radius as a function of the number of compromised agents.  As this
number grows, the set $\nH$ becomes larger and, for a fixed
$\epsilon$, $\alpha_{\epsilon,\nH}(A)$ increases. As a result,
$r_{\nH}(A)$ decreases, which is correctly captured in
Fig.~\ref{subfig:agents_vs_sr}. Notice also that when $5$ or more
consecutive agents are subject to adversaries, the structured
stability radii are the same independently of whether the perturbation
is unconstrained or sign preserving. This is because the worst-case
perturbation is distributed over all compromised edges, with entries
on each edge small enough that they do not violate the boundary
constraints.  It is also worth mentioning that when all agents are
subject to adversaries, the estimated pseudospectrum generated by
randomly sampling of perturbations in $\nH\cap\bB_{\epsilon}$ does not
accurately reflect the true pseudospectrum, cf.
Fig.~\ref{subfig:10_agent_ps},
because of the sensitivity of the rightmost eigenvalue with respect to
the perturbation. Nevertheless, our iterative algorithms are still
able to correctly find the pseudospectral abscissa and stability
radius.

\begin{figure}[htb]
  \tikzset{every picture/.style={scale=0.8}}
  \begin{subfigure}[t]{0.47\columnwidth}
    \scalebox{0.5}{
%
%
\definecolor{mycolor1}{rgb}{0.00000,0.44700,0.74100}%
\definecolor{mycolor2}{rgb}{0.85000,0.32500,0.09800}%
\begin{tikzpicture}

\begin{axis}[%
width=4.521in,
height=3.566in,
at={(0.758in,0.481in)},
scale only axis,
xmin=1,
xmax=10,
ymin=0,
ymax=1.5,
axis background/.style={fill=white},
axis x line*=bottom,
axis y line*=left,
legend style={legend cell align=left, align=left, draw=white!15!black, at={(axis cs:9.5,1.4)}, anchor=north east}
]

\addplot[only marks, mark=square, mark options={}, mark size=5.0pt, draw=mycolor2] table[row sep=crcr]{%
x	y\\
1	0.472668716962924\\
2	0.348795355366569\\
3	0.300345839745955\\
4	0.278852552493227\\
5	0.257675912044122\\
6	0.235463746584936\\
7	0.218893611883018\\
8	0.204109124501717\\
9	0.192447445872489\\
10	0.182591080368647\\
};
\addlegendentry{\Large Unconstrained adversaries}

\addplot[only marks, mark=asterisk, mark options={}, mark size=5.000pt, draw=mycolor1] table[row sep=crcr]{%
x	y\\
1	1.41774388847238\\
2	0.413293046490177\\
3	0.33412331991404\\
4	0.2871880951696\\
5	0.258642027966532\\
6	0.235537723951546\\
7	0.218202269158359\\
8	0.204124347549686\\
9	0.192460050522755\\
10	0.182574216142382\\
};
\addlegendentry{\Large Sign preserving adversaries}

\end{axis}
\end{tikzpicture}
    \caption{}\label{subfig:agents_vs_sr}
  \end{subfigure}
    \begin{subfigure}[t]{0.47\columnwidth}
    \scalebox{0.5}{\input{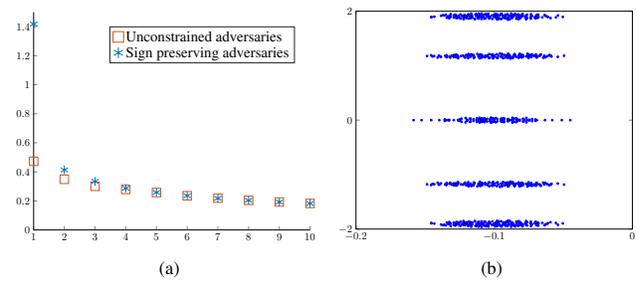}}\caption{}\label{subfig:10_agent_ps}
\end{subfigure}
\caption{(a) Stability radius vs. number of consecutive compromised
  agents. (b) Estimated $\Lambda_{\epsilon,\nH}(A)$ when all agents
  are subject to unconstrained adversaries, generated by 2000 random
  sampling of $\Delta\in\nH\cap\bB_\epsilon$ with $\epsilon=0.1826$.}
\end{figure}

\subsection{Tolosa networks}\label{sec:Tolosa}
Lastly we consider large-scale systems where the network matrix
in~\eqref{def:sys} is given by a Tolosa
matrix~\cite{RB-RP-KR-RB-JD:97}. These matrices are sparse,
asymmetric, and Hurwitz, and here we consider the cases of dimension
$n=90$, $340$, and $1090$.  For each network, we consider additive
perturbations to the entries of the 19th to the 36th rows
($\E_p=\{(i,j):19\leq i\leq 36\}$), since these are the ones with the
most non-zero elements.  We consider two scenarios for the constraints
corresponding, respectively, to non-negative and non-positive
perturbations, in both cases bounded in magnitude by 10,
\begin{align*}
  \nH_1 & =\{\Delta \in [0,10]^{n \times n}: \Delta_{ij}=0\ \forall
  (i,j)\not\in\E_p\},
  \\
  \nH_2 &=\{\Delta \in [-10,0]^{n \times n} : \Delta_{ij}=0\ \forall
  (i,j)\not\in\E_p\}.
\end{align*}
We run Algorithm~\ref{alg:stability_radius}, where at each iteration
we take the initial condition $\Delta_{\init}=0^{n\times n}$ and 10
random initial guesses $\Delta_{\init} \in \nH\cap\bB_{\epsilon_l}$,
and select the one that results in the maximum structured
pseudospectral abscissa.  Tables~\ref{table:1} and~\ref{table:2}
summarize the results for each scenario.

\begin{table}[htb]
  \centering
  \begin{tabular}{c|c|c|c|c}
    Matrix & Dimension & $r_{\nH}$ & Iterations & Time (secs)\\
    \hline
    Tolosa90 & 90 & 0.50251 & 2 & 0.915406\\
    Tolosa340 & 340 & 0.066404 & 13 & 58.732560\\
    Tolosa1090 & 1090 & 0.15919 & 7 & 312.649106\\
  \end{tabular}
  \caption{Structured 
    stability radii under perturbations in~$\nH_1$ obtained by
    Algorithm~\ref{alg:stability_radius}. The  columns, from 
    left to right, correspond to  the matrix name, its  dimension, the
    structured stability radius, the number of algorithm iterations, and the total
    computation time.}\label{table:1}
\end{table}

\begin{table}[htb]
    \centering
    \begin{tabular}{c|c|c|c|c}
       Matrix & Dimension & $r_{\nH}$ & Iterations & Time (secs)\\
       \hline
        Tolosa90 & 90 & 4.6737 & 3 & 2.896226\\
        Tolosa340 & 340 & 0.35407 & 17 & 45.508739\\
        Tolosa1090 & 1090 & 0.19163 & 7 & 213.961736\\
    \end{tabular}
    \caption{Structured 
      stability radii under perturbations in~$\nH_2$ obtained by
      Algorithm~\ref{alg:stability_radius}. The 
      adescription of the columns is the same as in 
      Table~\ref{table:1}.}\label{table:2}
\end{table}

In our simulations, we notice that the worst-case perturbations have a
structure that is even sparser than the one specified by~$\E_p$. For
example, for the case of Tolosa90 with perturbation in~$\nH_1$, the
worst-case perturbation $\Delta_{\opt}$ only has 8 elements with
magnitude larger than $0.001$, and the largest element is
$(\Delta_{\opt})_{21,21}=0.5021$, which takes about $99.84\%$ of the
total energy ($r_{\nH}(A)=0.50251$) of $\Delta_{\opt}$, suggesting the
importance of protecting edge $(21,21)\in\E_p$ against structured
additive topological perturbations.  We also observe that for these
large-scale systems, around $90\%$ of the total computation time
corresponds to the computation of the left and right eigenvectors
using the \texttt{MATLAB} command~\texttt{eigs}.  Hence, there is
significant room for improving the computation time by optimizing the
computation of eigenvectors of large-scale sparse matrices.

\section{Conclusions}\label{sec:conclusion}
We have studied the stability
of linear dynamical systems systems against additive perturbations of
the system matrix.  We have formalized questions about whether an
adversary can destabilize the network with perturbations of a given
energy and determining what is the maximum amount of perturbation
energy the network can withstand without becoming unstable using the
concepts of structured pseudospectral abscissa and structured
stability radius.
We have proposed iterative algorithms that asymptotically compute both
quantities for a given network along with the corresponding worst-case
structured perturbations.  Future work will study the global
convergence of the algorithms and their extension to consider
arbitrary values of the perturbation energy, develop distributed
strategies for the computation of the structured pseudospectral
abscissa and the corresponding worst-case perturbation to help
individual agents assess their relative value in ensuring network
stability, and examine resilience in scenarios where adversaries only
have partial knowledge of the network structure.

\appendix

\begin{proof}[Proof of Proposition~\ref{prop:solution_to_optimization}]
  Let $\lambda_0\geq 0, \underline\lambda_{ij}\geq
  0,\overline\lambda_{ij}\geq 0$ be the Lagrange multipliers for the
  constraints. Writing~\eqref{Optimization_problem} element-wise, the
  Lagrangian is
  \begin{multline*}
    L(\Delta,\lambda_0,\underline\lambda_{ij},\overline\lambda_{ij}) =
    -\sum_{(i,j)\in\E_p}\Delta_{ij}m_{ij}+\lambda_0(\sum_{(i,j)\in\E_p}\Delta_{ij}^2-\epsilon^2)
    \\ 
    +\sum_{(i,j)\in\E_p}
    \underline\lambda_{ij}(\underline\Delta_{ij}-\Delta_{ij}) +
    \sum_{(i,j)\in\E_p}\overline\lambda_{ij}(\Delta_{ij}-\overline\Delta_{ij}).
  \end{multline*}
  Define the sets $ \overline\nS
  :=\{(i,j)\in\E_p:(\Delta_{\opt})_{ij}=\overline\Delta_{ij}\}$ and $
  \underline\nS :=
  \{(i,j)\in\E_p:(\Delta_{\opt})_{ij}=\underline\Delta_{ij}\}$.  Note
  that since the sets are defined for the worst-case perturbation
  $\Delta_{\opt}$, they are exactly the index sets of saturation as
  defined in Proposition~\ref{prop:solution_to_optimization}.  At the
  optimum
  $(\Delta_{\opt},\lambda_{0,\opt},\underline\lambda_{\opt},\overline\lambda_{\opt})$,
  the Lagrangian is maximized.  Because $L$ is linear in
  $\underline\lambda_{ij},\overline\lambda_{ij}$, and
  $(\Delta_{\opt})_{ij}$'s do not reach boundary values for any
  $(i,j)\in\E_p\backslash(\overline\nS\cup\underline\nS)$, we have
  $\overline\lambda_{\opt,ij}=\underline\lambda_{\opt,ij}=0$ for all
  those edges. Thus, for all
  $(i,j)\in\E_p\backslash(\overline\nS\cup\underline\nS)$,
  \[
  0=\frac{\partial}{\partial
    \Delta_{ij}}L(\Delta_{\opt},\lambda_{\opt,0},\underline\lambda_{\opt},
  \overline\lambda_{\opt})
  \\
  =-m_{ij}+2\lambda_{0,\opt}(\Delta_{\opt})_{ij}.
  \]
  Because of Assumption~\ref{ass:1}, $\lambda_{0,\opt}\neq 0$ or
  otherwise the above equation does not hold for that particular
  unsaturated element with $m_{ij}\neq 0$. This implies
  $\sum_{(i,j)\in\E_p}(\Delta_{\opt})_{ij}^2=\epsilon^2$ and
  \begin{equation}\label{eqn:proof_formula_Delta_opt}
    (\Delta_{\opt})_{ij}=\frac{m_{ij}}{2\lambda_{0,\opt}}
  \end{equation}
  for all $(i,j)\in\E_p\backslash\nS$, which yields the formula for
  $\Delta_{\opt}$ given in
  Proposition~\ref{prop:solution_to_optimization} if
  $\theta_{\opt}=\frac{1}{2\lambda_{0,\opt}}$. The latter is verified
  by squaring both sides of~\eqref{eqn:proof_formula_Delta_opt} and
  summing all the terms whose indices are not in the saturation
  index set,
  which gives 
  \begin{multline*}
    \epsilon^2-\sum_{(i,j)\in\overline\nS}
    \overline\Delta_{ij}^2-\sum_{(i,j)\in\underline\nS}\underline\Delta_{ij}^2
    \\
    =\sum_{(i,j)\in\E_p\backslash(\overline\nS\cup\underline\nS)}(\Delta_{\opt})_{ij}^2
    = \frac{\sum_{(i,j)\in
        \E_p\backslash(\overline\nS\cup\underline\nS)}m_{ij}^2}{4\lambda_{0,\opt}^2}.
  \end{multline*}
  Rearranging the terms and comparing with \eqref{def:theta}, we
  conclude
  \begin{align*}
    \theta_{\opt}= \sqrt{\frac{\epsilon^2-\sum_{(i,j)\in\overline\nS}
        \overline\Delta_{ij}^2-\sum_{(i,j)\in\underline\nS} \underline
        \Delta_{ij}^2}{\sum_{(i,j)\in\E_p\backslash\nS}m_{ij}^2}}=\frac{1}{2\lambda_{0,\opt}}.
  \end{align*}
  Finally, we show the local Lipschitzness of $\theta_{\opt}$ over
  some neighborhood $D\ni(\epsilon,M)$. Let $\overline\epsilon>
  \epsilon,\overline m>\Vert M\Vert_F$, and $\delta>0$. Define
  $D:=\{(\epsilon',M')\in \R_{\geq 0}\times \R^{n\times
    n}:\epsilon\in[0,\overline\epsilon], \Vert M\Vert_F\leq \overline
  m,|\epsilon'-\epsilon|+\Vert M'-M\Vert\leq \delta\}$. Let
  $(\epsilon_1,M_1)=(\epsilon,M)$ and pick arbitrary
  $(\epsilon_2,M_2)\in D$. Denote $\overline
  \nS_k:=\overline\nS(\epsilon_k,M_k),\underline\nS_k:=\overline\nS(\epsilon_k,M_k),
  \theta_k:=\theta_{\opt}(\epsilon_k,M_k),k=1,2$. We make the
  following abbreviations:
  \begin{align*}
    a
    &:=\sum_{(i,j)\in\overline\nS_1\cap\overline\nS_1}\overline\Delta_{ij}^2
    +
    \sum_{(i,j)\in\underline\nS_1\cap\overline\nS_1}\underline\Delta_{ij}^2,
    \\
    b
    &:=\sum_{(i,j)\in\overline\nS_1\backslash\overline\nS_2}\overline\Delta_{ij}^2
    +
    \sum_{(i,j)\in\underline\nS_1\backslash\underline\nS_2}\underline\Delta_{ij}^2,
    \\
    c
    &:=\sum_{(i,j)\in\overline\nS_2\backslash\overline\nS_1}\overline\Delta_{ij}^2
    +
    \sum_{(i,j)\in\underline\nS_2\backslash\underline\nS_1}\underline\Delta_{ij}^2
  \end{align*}
  and for $k=1,2$, let
  \begin{align*}
    d_k &:=\sum_{(i,j)\in\E_p}(M_k)_{ij}^2,
    \\
    e_k &:=\sum_{(i,j)\in\overline\nS_1\cap\overline\nS_2}(M_k)_{ij}^2
    + \sum_{(i,j)\in\underline\nS_1\cap\underline\nS_2}(M_k)_{ij}^2,
    \\
    f_k&:=\sum_{(i,j)\in\overline\nS_1\backslash\overline\nS_2}(M_k)_{ij}^2
    +
    \sum_{(i,j)\in\underline\nS_1\backslash\underline\nS_2}(M_k)_{ij}^2,
    \\
    g_k&:=\sum_{(i,j)\in\overline\nS_2\backslash\overline\nS_1}(M_k)_{ij}^2
    +
    \sum_{(i,j)\in\underline\nS_2\backslash\underline\nS_1}(M_k)_{ij}^2.
  \end{align*}
  Using the formula \eqref{def:theta}, we have
  \[
  \theta_1^2=\frac{\epsilon_1^2-a-b}{d_1-e_1-f_1},\quad
  \theta_2^2=\frac{\epsilon_2^2-a-c}{d_2-e_2-g_2}.
  \]
  By making $\delta$ sufficiently small, $(\epsilon_2,M_2)$ is close
  enough to $(\epsilon_1,M_1)$ and hence the denominators of the
  expressions for $\theta_1^2,\theta_2^2$ are close and have a common
  positive lower bound. In other words, there exists $\underline m>0$
  only depending on $\delta$ and $M_1$ such that $ d_1-e_1-f_1\geq
  \underline m^2$ and $d_2-e_2-g_2\geq \underline m^2$.
  In addition, by properties of index sets of saturation
  \eqref{def:saturation_set}, $ a\leq \theta_1^2 e_1$, $a\leq
  \theta_2^2 e_2$, and
  \begin{align}
    &\theta_2^2f_2\leq b\leq \theta_1^2f_1,\label{set_2}
    \\
    &\theta_1^2g_1\leq c \leq \theta_2^2g_2,\label{set_3}
  \end{align}
  which further implies that
  \begin{equation}\label{relation_between_epsilon_and_overline_m}
    \epsilon_k\leq\theta_kd_k\leq\theta_k\Vert M_k\Vert_F\leq \theta_k\overline m
  \end{equation}
  for both $k=1,2$.  We show the Lipschitzness of
  $\theta_{\opt}(\eta,M)$ by considering two cases, 1) $f_1=0$ or
  $g_2=0$, and 2) both $f_1,g_2>0$. In case 1), when $f_1=0$, the
  inequality \eqref{set_2} implies $b=f_2=0$. Meanwhile, it can also
  be deduced from \eqref{set_3} that
  \begin{align*}
    (\epsilon_1^2-a)g_1&\leq c(d_1-e_1),\\
    (\epsilon_2^2-a)(d_2-e_2-g_2)&\leq (\epsilon_2^2-a-c)(d_2-e_2).
  \end{align*}
  Denote $\kappa_1:=(d_1-e_1)(d_2-e_2-g_2) \geq \underline m^4$. Note
  that
  \begin{align*}
    \theta_2^2&-\theta_1^2
    \\
    &=\kappa_1^{-1}\big((\epsilon_2^2-a-c)(d_1-e_1)-(\epsilon_1^2-a)(d_2-e_2-g_2)\big)\\
    &\leq \kappa_1^{-1}\big((\epsilon_2^2-a)(d_1-e_1)-(\epsilon_1^2-a)g_1\\
    &\qquad-(\epsilon_1^2-a)(d_2-e_2-g_2)\big)\\
    &= \kappa_1^{-1}\big((\epsilon_2^2-\epsilon_1^2)(d_1-e_1)\\
    &\qquad+(\epsilon_1^2-a)((d_1-e_1-g_1)-(d_2-e_2-g_2))\big)\\
    &\leq \kappa_1^{-1}\big(|\epsilon_2^2-\epsilon_1^2|\overline
    m^2+\epsilon_1^2|(d_1-e_1-g_1)-(d_2-e_2-g_2)|\big). 
    \end{align*}
    Note that \eqref{relation_between_epsilon_and_overline_m} implies
    that
    $\frac{\epsilon_1}{\theta_1+\theta_2}\leq\frac{\epsilon_1+\epsilon_2}{
      \theta_1+\theta_2}\leq \overline m$. Hence
    \begin{align*}
      \theta_2-\theta_1&\leq
      \frac{1}{(\theta_1+\theta_2)\kappa_1}\Big(\overline
      m^2(\epsilon_1+\epsilon_2)|\epsilon_2-\epsilon_1|
      \\
      &\qquad+\epsilon_1^2(\sqrt{d_1-e_1-g_1}+\sqrt{d_2-e_2-g_2})
      \\
      &\qquad\cdot|\sqrt{d_1-e_1-g_1}-\sqrt{d_2-e_2-g_2}|\Big)
      \\
      &\leq\frac{\overline
        m^2}{\kappa_1}\frac{\epsilon_1+\epsilon_2}{\theta_1+\theta_2}|\epsilon_2-\epsilon_1|
      \\
      &\qquad+
      \frac{\overline\epsilon}{\kappa_1}\frac{\epsilon_1}{\theta_1+\theta_2}(\sqrt{d_1-e_1-g_1}
      + \sqrt{d_2-e_2-g_2})
      \\
      &\qquad\cdot|\sqrt{d_1-e_1-g_1}-\sqrt{d_2-e_2-g_2}|\Big)
      \\
      &\leq \frac{\overline m^3}{\underline
        m^4}|\epsilon_2-\epsilon_1|+\frac{\overline\epsilon\overline
        m^2}{\underline m^4}\Vert M_2-M_1\Vert_F\Big).
    \end{align*}
    On the other hand,
    \begin{align*}
      \theta_1^2&-\theta_2^2\\
      &=\kappa_1^{-1}\big((\epsilon_1^2-a)(d_2-e_2-g_2)-(\epsilon_2^2-a-c)(d_1-e_1)\big)\\
      &=\kappa_1^{-1}\big((\epsilon_1^2-\epsilon_2^2)(d_2-e_2-g_2)\\
      &\qquad+(\epsilon_2^2-a)(d_2-e_2-g_2)-(\epsilon_2^2-a-c)(d_1-e_1)\big)\\
      &\leq \kappa_1^{-1}\big((\epsilon_1^2-\epsilon_2^2)(d_2-e_2-g_2)\\
      &\qquad+(\epsilon_2^2-a-c)((d_2-e_2)-(d_1-e_1))\big)\\
      &\leq \kappa_1^{-1}\big(|\epsilon_1^2-\epsilon_2^2|\overline
      m^2+\epsilon_2^2|(d_1-e_1)-(d_2-e_2)|\big).
    \end{align*}
    and again we can conclude the same upper bound on
    $\theta_1-\theta_2$. Therefore $\theta_{\opt}$ is Lipschitz on
    $D$. Similar arguments hold when $g_2=0$.

    In case (2), we have $f_2>0,g_1>0$ by picking $\delta$ small
    enough. Hence there exist $m_f,m_g>0$ only depending on $\delta$
    and $M_1$ such that $f_k\geq m_f^2,g_k\geq m_g^2$ for both
    $k=1,2$. Consequently
    \[
    | \frac{1}{\sqrt{f_1}}-\frac{1}{\sqrt{f_2}}| =
    \frac{|\sqrt{f_1}-\sqrt{f_2}|}{\sqrt{f_1f_2}}\leq\frac{\Vert
      M_1-M_2\Vert_F}{m_f^2}
    \]
    where we  have used the triangle inequality. Similarly we also have
    $|\frac{1}{\sqrt{g_1}}-\frac{1}{\sqrt{g_2}}|\leq \frac{\Vert
      M_1-M_2\Vert_F}{m_g}$. Meanwhile, it can be directly concluded
    from \eqref{set_2}, \eqref{set_3} that
    \begin{align*}
      |\theta_1-\theta_2|&\leq
      \max\{\theta_1-\theta_2,\theta_2-\theta_1\}
      \\
      &\leq\max\big\{\big|\sqrt{\frac{b}{f_1}}-\sqrt{\frac{b}{f_2}}\big|,\big|\sqrt{\frac{c}{g_1}}-\sqrt{\frac{c}{g_2}}\big|\big\}
      \\
      &\leq \max\{\frac{\sqrt{b}}{m_f^2},\frac{\sqrt{c}}{m_g^2}\}\Vert M_1-M_2\Vert_F\\
      &\leq \frac{\overline \epsilon}{(\min\{m_f,m_g\})^2}\Vert
      M_1-M_2\Vert_F ,
    \end{align*}
    completing the proof.
\end{proof}

\begin{proof}[Proof of Lemma~\ref{lem:why_optimization_alg_works}]
  Note that Algorithm~\ref{alg:solving_optimization_problem}
  terminates in a finite number of steps since the sequence of
  index sets of saturation $\overline\nS$, $\underline\nS$ is monotonically
  non-decreasing and uniformly bounded. In fact, the algorithm stops
  if the sets $\overline\nS$, $\underline\nS$ are unchanged or their union
  satisfies $\overline\nS\cup\underline\nS=\E_p$. Under
  Assumption~\ref{ass:1},
  Algorithm~\ref{alg:solving_optimization_problem} terminates if and
  only if NotDone = false, i.e., when \eqref{saturation_set_1}
  holds. In addition, when
  Algorithm~\ref{alg:solving_optimization_problem} terminates,
  $\Delta_{\opt}$ is given by \eqref{assignment_of_Delta_opt} for the
  computed $\overline\nS$, $\underline\nS$.  Thus, to prove the statement,
  we are left to show that \eqref{saturation_set_2} and
  \eqref{saturation_set_3} hold. Let
  $\theta_{\opt,1},\theta_{\opt,2},\cdots,\theta_{\opt,k}$ be the
  sequence of $\theta_{\opt}$ generated by
  Algorithm~\ref{alg:solving_optimization_problem}.
  From~\eqref{def:theta}, it is clear that these variables are
  non-negative.
 We next show that the sequence is non-decreasing.
  Take any consecutive terms $\theta_{\opt,l}$, $\theta_{\opt,l+1}$
  and let $\delta\overline\nS$ (resp. $\delta\underline\nS$) be the
  difference between the set $\overline\nS$ (resp. $\underline\nS$)
  computed in the $l$-th iteration and the one computed in the next
  iteration. In other words, $m_{ij}\theta_{\opt,l}\geq
  \overline\Delta_{ij}$ for all $(i,j)\in\delta\overline\nS$ and
  $m_{ij}\theta_{\opt,l}\leq \underline\Delta_{ij}$ for all
  $(i,j)\in\delta\underline\nS$. To simplify the presentation, for the
  $l$-th iteration, let
  \begin{align*}
    a &:=\sum_{(i,j)\in\overline\nS}\overline\Delta_{ij}^2 +
    \sum_{(i,j)\in\underline\nS}\underline\Delta_{ij}^2,
    \\
    b &:=\sum_{(i,j)\in\delta\overline\nS}\overline\Delta_{ij}^2 +
    \sum_{(i,j)\in\delta\underline\nS}\underline\Delta_{ij}^2, \;
    c :=\sum_{(i,j)\in\E_p}m_{ij}^2,
    \\
    d
    &:=\sum_{(i,j)\in\overline\nS}m_{ij}^2+\sum_{(i,j)\in\underline\nS}m_{ij}^2,
    \\
    e
    &:=\sum_{(i,j)\in\delta\overline\nS}m_{ij}^2+\sum_{(i,j)\in\delta\underline\nS}m_{ij}^2.
  \end{align*}
  Note that \eqref{def:theta} implies
  \begin{subequations}
    \begin{align}
      \theta_{\opt,l}^2 &
      =\frac{\epsilon^2-a}{c-d}, \label{eqn:pf_of_sol_to_opt_2}
      \\
      \theta_{\opt,l+1}^2 &=
      \frac{\epsilon^2-a-b}{c-d-e}. \label{eqn:pf_of_sol_to_opt_3}
    \end{align}
  \end{subequations}
  Plugging~\eqref{eqn:pf_of_sol_to_opt_2} into $ b\leq
  \theta_{\opt,l}^2e$, we have $b(c-d)\leq(\epsilon^2-a)e$.
  Subtracting \eqref{eqn:pf_of_sol_to_opt_2} from
  \eqref{eqn:pf_of_sol_to_opt_3},
  \begin{align*}
    \theta&_{\opt,l+1}^2-\theta_{\opt,l}^2
    =\frac{(\epsilon^2-a)e-b(c-d)}{(c-d-e)(c-d)}\geq 0 ,
  \end{align*}
  as claimed.  Observe that in the execution of
  Algorithm~\ref{alg:solving_optimization_problem}, an edge
  $(i,j)\in\overline\nS$ (resp. $\underline\nS$) is added to the index set
  of saturation at some iteration $l\le k$, when
  $m_{ij}\theta_{\opt,l}\geq\overline\Delta_{ij}\geq 0$
  (resp. $m_{ij}\theta_{\opt,l}\leq\underline\Delta_{ij}\leq
  0$). Since $\{\theta_{\opt,l}\}$ is non-decreasing, we deduce
  $m_{ij}\theta_{\opt,k}\geq\overline\Delta_{ij}$
  (resp. $m_{ij}\theta_{\opt,k}\leq\underline\Delta_{ij}$), thereby
  verifying~\eqref{saturation_set_2} and~\eqref{saturation_set_3}.
\end{proof}

\begin{proof}[Proof of Lemma~\ref{lem:continuous_dependence}]
  Let $\theta_{k}=\theta_{\opt}(\epsilon,M_k),\quad k=1,2$. Notice
  that for some $(i,j)\in\E_p$, if $(i,j)$ belongs to neither index
  sets of saturation for the two optimization problems, then
  $(\Delta_k)_{ij}=(M_k)_{ij}\theta_k$ for both $k=1,2$ so
  \begin{equation}\label{component_wise_inequality}
      |(\Delta_1)_{ij}-(\Delta_2)_{ij}|\leq|(M_1)_{ij}\theta_1-(M_2)_{ij}\theta_2|
  \end{equation}
  holds with equality. If $(i,j)$ only belongs to one index set of
  saturation, say $(i,j)\in\overline\nS(\epsilon,M_1)$ but
  $(i,j)\not\in\overline\nS(\epsilon,M_2)$, then
  $(M_1)_{ij}\theta_1\geq (\Delta_1)_{ij}= \overline\Delta_{ij}\geq
  (\Delta_2)_{ij}=(M_2)_{ij}\theta_2$ and hence again the inequality
  \eqref{component_wise_inequality} holds. This is also true if
  $(i,j)$ only belongs to one of the other index sets of saturation. If
  $(i,j)$ belongs to the index sets of saturation for both
  optimization problems, then $(\Delta_1)_{ij}=(\Delta_2)_{ij}$ and
  \eqref{component_wise_inequality} holds again.
  
  Let $\delta>0$ be the one picked in the proof of
  Proposition~\ref{prop:solution_to_optimization} for defining the
  neighborhood $D\ni(\epsilon,M_1)$. We have $(\epsilon,M_2)\in D$ as
  well and using the Lipschitzness of $\theta_{\opt}$,
  \begin{align*}
    |(&\Delta_{1})_{ij}-(\Delta_{2})_{ij}| \leq
    |(M_1)_{ij}\theta_{1}-(M_{2})_{ij}\theta_{2}|
    \\
    &\leq
    |(M_1)_{ij}||\theta_{1}-\theta_{2}|+|(M_1)_{ij}-(M_{2})_{ij}|\theta_{2}
    \\
    &\leq\kappa\epsilon\Vert M_1-M_2\Vert_F+\frac{\epsilon}{\underline
      m}|(M_1){ij}-(M_{2})_{ij}|,
  \end{align*}
  where $\kappa=\max\{\frac{\overline m^2}{\underline
    m^4},\frac{1}{(\min\{m_f,m_g\})^2}\}$ and $\overline m, \underline
  m,m_f,m_g$ come from the proof of
  Proposition~\ref{prop:solution_to_optimization} and only depend on
  $\delta,M_1$.  As a result,
  \begin{align*}
    \Vert \Delta_{1}&-\Delta_{2}\Vert_F =
    \sqrt{\sum_{(i,j)\in\E_p}|(\Delta_{1})_{ij}-(\Delta_{2})_{ij}|^2}
    \\
    &\leq \left(\sqrt{|\E_p|}\kappa+\frac{1}{\underline m}\right)\epsilon\Vert M_1-M_2\Vert_F
  \end{align*}
  and hence the statement holds with
  $\ell:=\left(\sqrt{|\E_p|}\kappa+\frac{1}{\underline m}\right)$.
\end{proof}

\begin{proof}[Proof of Lemma~\ref{lem:derivative}]
   %
  For fixed $M\in\R^{n\times n}$, 
  $\eta(\epsilon,M)=\langle\Delta_{\opt},M\rangle$ is linear in
  $\Delta_{\opt}$, and hence Lipschitz. In addition,
  $\theta_{\opt}(\epsilon,M)\mapsto \Delta_{\opt}$, given by
  \eqref{assignment_of_Delta_opt}-\eqref{def:saturation_set} is also
  Lipschitz. Lastly, $\epsilon\mapsto \theta_{\opt}(\epsilon,M)$ is
  Lipschitz by Proposition~\ref{prop:solution_to_optimization}. Hence,
  the composition $\epsilon\mapsto\eta(\epsilon,M)$ is locally Lipschitz.

  To find the derivative of $\epsilon\mapsto\eta(\epsilon,M)$ when it
  exists, we first conclude from the continuity of
  $\epsilon\mapsto\theta_{\opt}(\epsilon,M)$ and the criteria for
  index sets of saturation \eqref{def:saturation_set} that for
  $\delta\in\R$ with sufficiently small $|\delta|$,
  $\overline\nS(\epsilon+\delta,M)\subseteq\overline\nS(\epsilon,M)$
  and
  $\underline\nS(\epsilon+\delta,M)\subseteq\underline\nS(\epsilon,M)$. Meanwhile,
  the linearity of the objective function \eqref{Optimization_problem}
  implies that when $\epsilon$ grows, the saturated elements in the
  optimizer remain saturated. In other words, if $\delta\geq 0$, then
  $\overline\nS(\epsilon+\delta,M)\supseteq\overline\nS(\epsilon,M)$
  and
  $\underline\nS(\epsilon+\delta,M)\supseteq\underline\nS(\epsilon,M)$
  Therefore, the index sets of saturation are the same for
  sufficiently small $\delta>0$; i.e.,
  $\overline\nS(\epsilon+\delta,M) =
  \overline\nS(\epsilon,M)=:\overline\nS$ and
  $\underline\nS(\epsilon+\delta,M)=\underline\nS(\epsilon,M)=:\underline\nS$. Thus
  the difference between $\eta(\epsilon+\delta,M)$ and
  $\eta(\epsilon,M)$ can be expressed as
  \begin{multline}\label{difference_eta}
    \eta(\epsilon+\delta,M)-\eta(\epsilon,M)
    \\
    = \big(\theta_{\opt}(\epsilon,M) -
    \theta_{\opt}(\epsilon+\delta,M)\big)
    \hspace*{-3pt}\sum_{(i,j)\in\E_p\backslash(\overline\nS\cup\underline\nS)}m_{ij}^2.
  \end{multline}
  This equation is useful for computing the right one-sided derivative
  of $\epsilon\mapsto\eta(\epsilon,M)$, which equals to the derivative
  of this map when it exists,
  \begin{align*}
    \frac{d}{dt}&\eta(t,M)|_{t=\epsilon}=\lim_{\delta\to
      0^+}\frac{\eta(\epsilon+\delta,M)-\eta(\epsilon,M)}{\delta}
    \\
    &=\lim_{\delta\to 0^+}\frac{\theta_{\opt}(\epsilon+\delta,M) -
      \theta_{\opt}(\epsilon,M)}{\delta}
    \hspace*{-3pt}\sum_{(i,j)\in\E_p\backslash(\overline\nS\cup\underline\nS)}m_{ij}^2
    \\
    &=\frac{d}{dt}\theta_{\opt}(t,M)\big|_{t=\epsilon}\cdot
    \sum_{(i,j)\in\E_p\backslash(\overline\nS\cup\underline\nS)}m_{ij}^2
    \\
    &=\epsilon\sqrt{\frac{\sum_{(i,j)\in\E_p\backslash(\overline\nS\cup\underline\nS)}
        m_{ij}^2}{\epsilon^2-\sum_{(i,j)\in\overline\nS}
        \overline\Delta_{ij}^2-\sum_{(i,j)\in\underline\nS}\underline\Delta_{ij}^2}}
    =\frac{\epsilon}{\theta_{\opt}}.
  \end{align*}
\end{proof}

\bibliographystyle{IEEEtran}
\bibliography{bib/alias.bib,bib/Main.bib,bib/Main-add.bib,IEEEabrv.bib}

\vspace*{-6ex}

\begin{IEEEbiography}[{\includegraphics[width=1in,height=1.25in,clip,keepaspectratio]{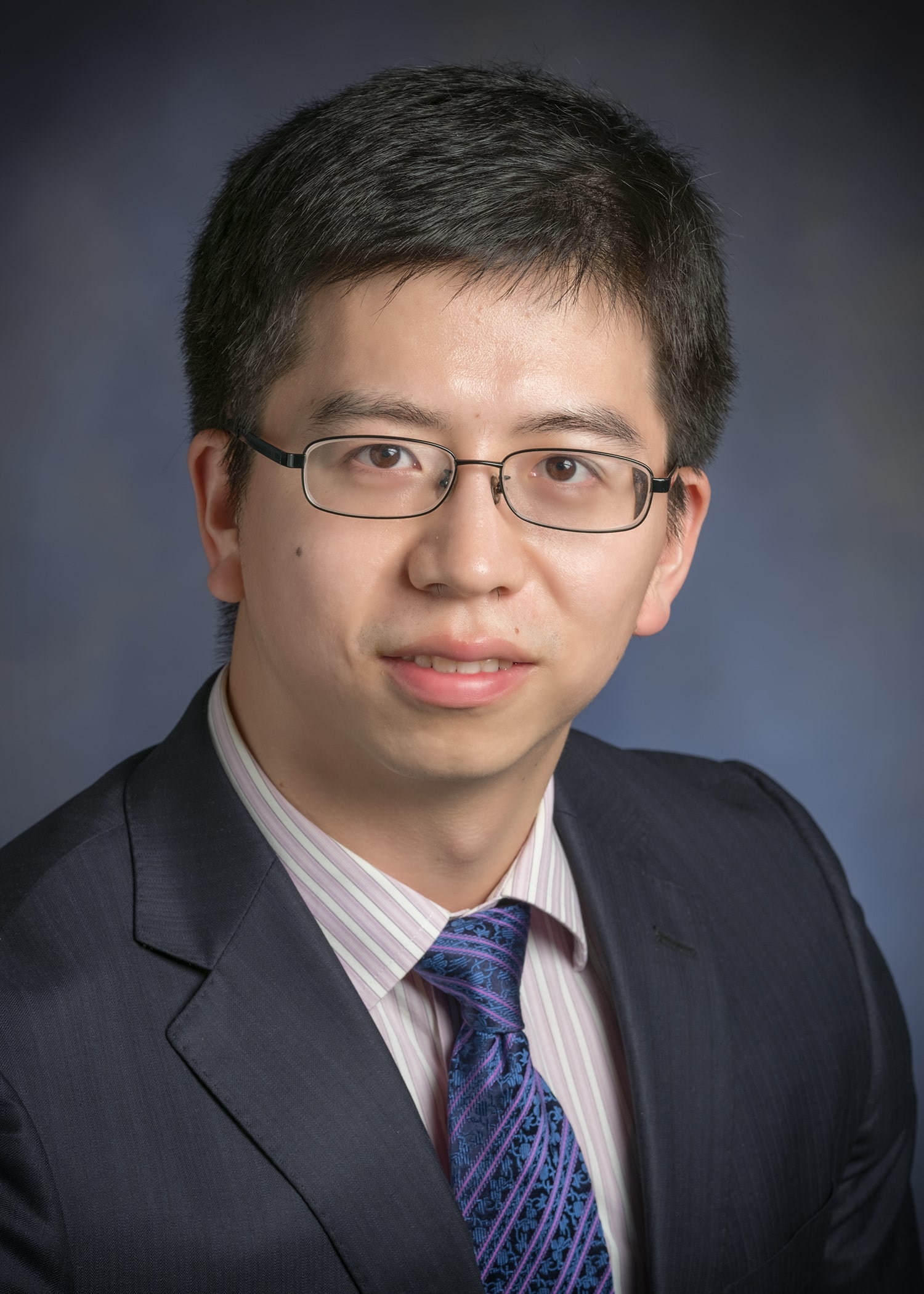}}]{Shenyu
    Liu} (S'16-M'20) received his B. Eng. degree in Mechanical
  Engineering and B.S. degree in Mathematics from the University of
  Singapore, Singapore, in 2014. He then received his M.S. degree in
  Mechanical Engineering from the University of Illinois,
  Urbana-Champaign in 2015, where he also received his Ph.D. degree in
  Electrical Engineering in 2020. He is currently a postdoctoral
  researcher in Department of Mechanical and Aerospace Engineering at
  University of California San Diego. His research interest includes
  matrix perturbation theory, Lyapunov methods, input-to-state
  stability theory, switched/hybrid systems and motion planning via
  geometric methods.
\end{IEEEbiography}

\vspace*{-6ex}

\begin{IEEEbiography}[{\includegraphics[width=1in,height=1.25in,clip,keepaspectratio]{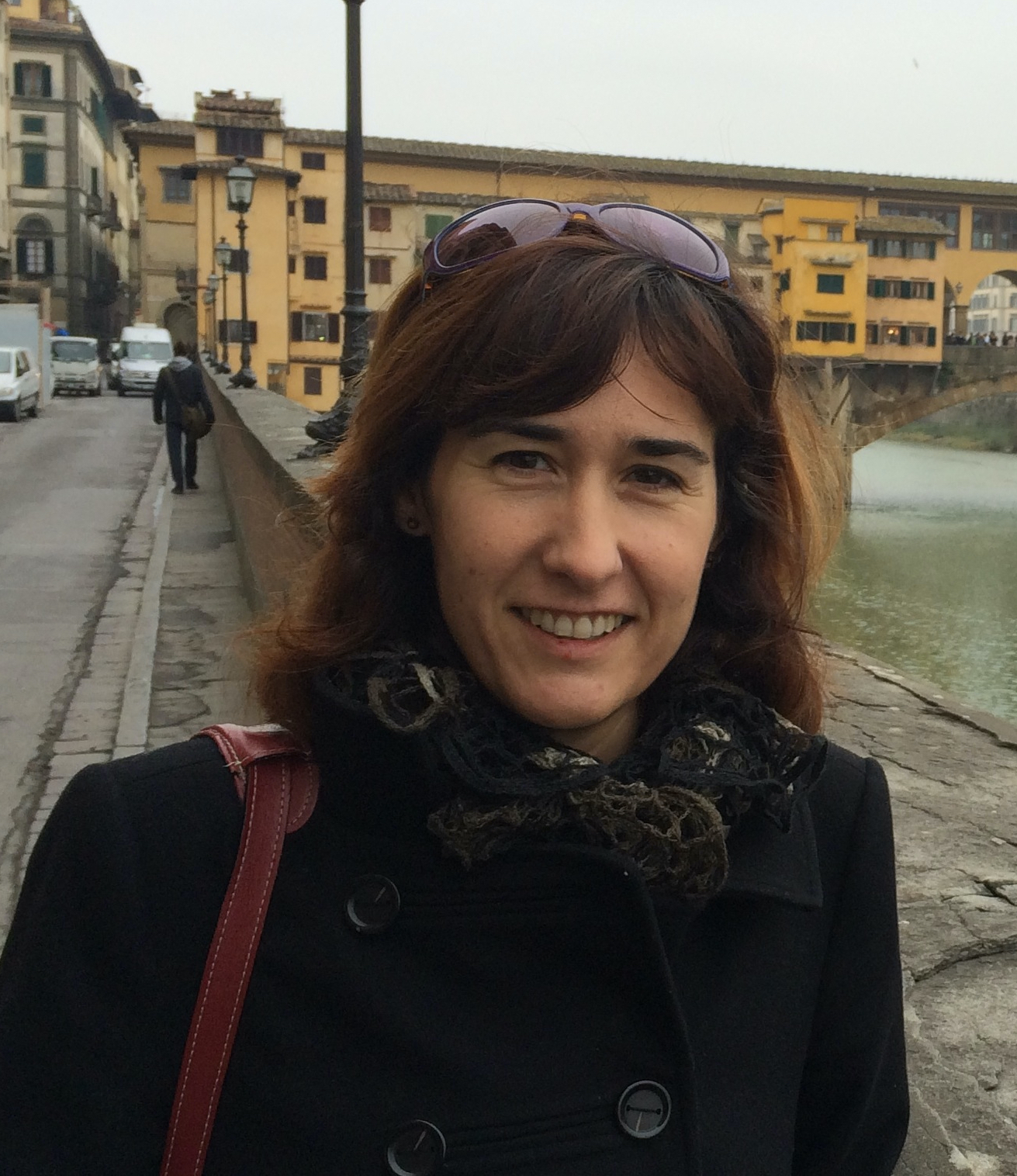}}]{Sonia
    Mart{\'\i}nez} (M'02-SM'07-F'18) is a Professor of Mechanical and
  Aerospace Engineering at the University of California, San Diego,
  CA, USA. She received the Ph.D. degree in Engineering Mathematics
  from the Universidad Carlos III de Madrid, Spain, in May 2002. She
  was a Visiting Assistant Professor of Applied Mathematics at the
  Technical University of Catalonia, Spain (2002-2003) and a
  Postdoctoral Fulbright Fellowship at the Coordinated Science
  Laboratory of the University of Illinois, Urbana-Champaign
  (2003-2004) and the Center for Control, Dynamical systems and
  Computation of the University of California, Santa Barbara
  (2004-2005).  Her research interests include the control of network
  systems, multi-agent systems, nonlinear control theory, and
  robotics.  She received the Best Student Paper award at the 2002
  IEEE Conference on Decision and Control for her work on the control
  of underactuated mechanical systems and was the recipient of a NSF
  CAREER Award in 2007. For the paper ``Motion coordination with
  Distributed Information,'' co-authored with Jorge Cort\'es and
  Francesco Bullo, she received the 2008 Control Systems Magazine
  Outstanding Paper Award. She is the Editor in Chief of the recently
  launched Open Journal of Control Systems.
\end{IEEEbiography}

\vspace*{-6ex}

\begin{IEEEbiography}[{\includegraphics[width=1in,height=1.25in,clip,keepaspectratio]{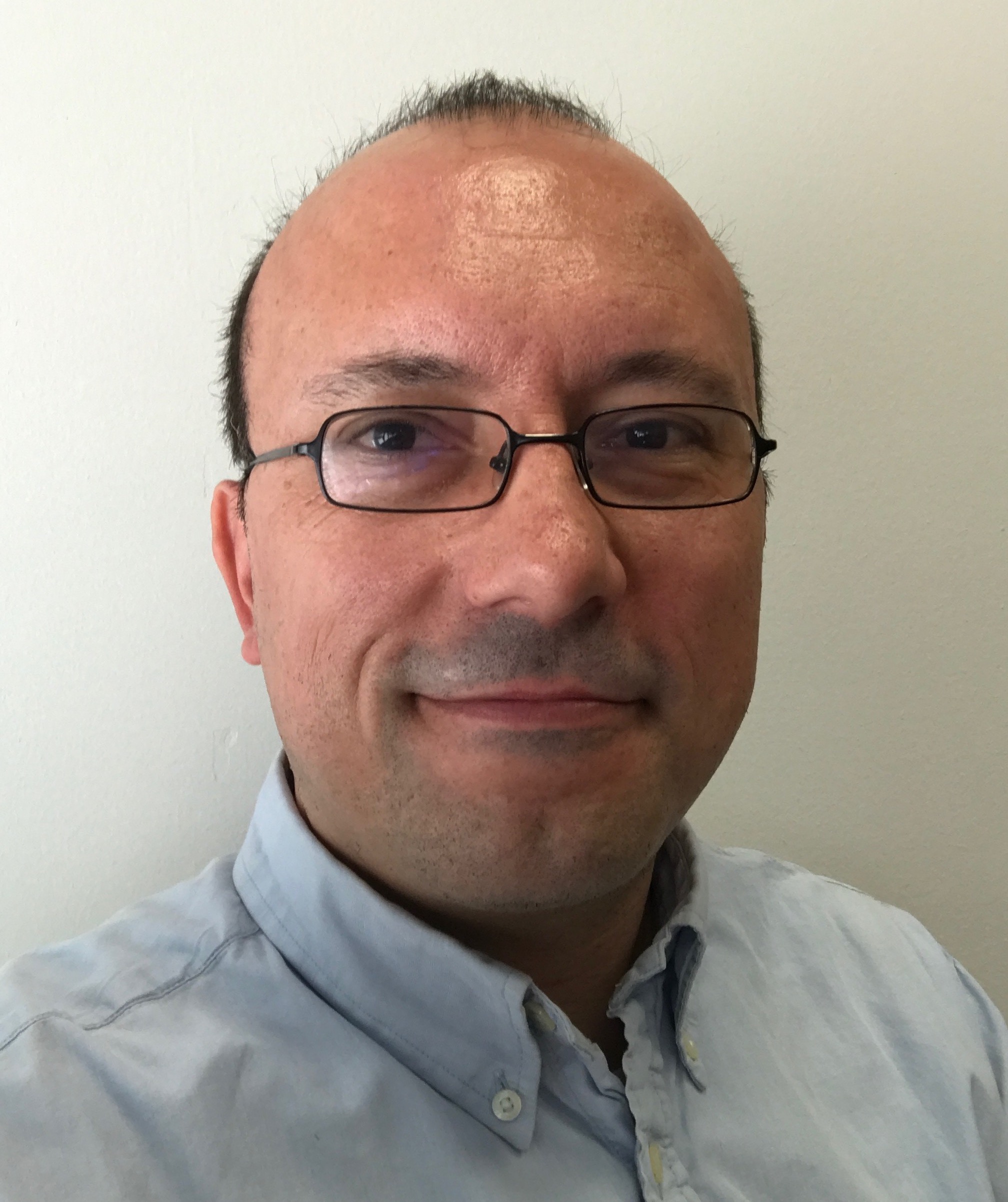}}]{Jorge
    Cort\'{e}s}
  (M'02, SM'06, F'14) received the Licenciatura degree in mathematics
  from Universidad de Zaragoza, Zaragoza, Spain, in 1997, and the
  Ph.D. degree in engineering mathematics from Universidad Carlos III
  de Madrid, Madrid, Spain, in 2001. He held postdoctoral positions
  with the University of Twente, Twente, The Netherlands, and the
  University of Illinois at Urbana-Champaign, Urbana, IL, USA. He was
  an Assistant Professor with the Department of Applied Mathematics
  and Statistics, University of California, Santa Cruz, CA, USA, from
  2004 to 2007. He is currently a Professor in the Department of
  Mechanical and Aerospace Engineering, University of California, San
  Diego, CA, USA.  He is the author of Geometric, Control and
  Numerical Aspects of Nonholonomic Systems (Springer-Verlag, 2002)
  and co-author (together with F. Bullo and S.  Mart{\'\i}nez) of
  Distributed Control of Robotic Networks (Princeton University Press,
  2009).  He is a Fellow of IEEE and SIAM.
  His current research interests include distributed control and
  optimization, network science, nonsmooth analysis, reasoning and
  decision making under uncertainty, network neuroscience, and
  multi-agent coordination in robotic, power, and transportation
  networks.
\end{IEEEbiography}

\end{document}